\theoremstyle{plain}
\newtheorem{theorem}{Theorem}
\newtheorem{lemma}[theorem]{Lemma}
\newtheorem{corollary}[theorem]{Corollary}
\newtheorem{observation}[theorem]{Observation}
\theoremstyle{definition}
\newtheorem{definition}[theorem]{Definition}
\theoremstyle{remark}
\newcommand{\namedref}[2]{\hyperref[#2]{#1~\ref*{#2}}}
\newcommand{\s}{\mspace{1mu}}
\newcommand{\mybox}[1]{\mspace{2mu}{\setlength{\fboxsep}{1.5pt}\color{lightgray}\boxed{\color{black}\scriptstyle #1}}\mspace{2mu}}
\newcommand{\A}{\mathsf{A}}
\newcommand{\B}{\mathsf{B}}
\renewcommand{\C}{\mathsf{C}}
\renewcommand{\L}{\mathsf{L}}
\renewcommand{\c}{\mathsf{c}}
\renewcommand{\a}{\mathsf{a}}
\renewcommand{\b}{\mathsf{b}}
\renewcommand{\u}{\mathsf{u}}
\newcommand{\y}{\mathsf{y}}
\newcommand{\g}{\mathsf{g}}
\newcommand{\w}{\mathsf{w}}
\newcommand{\Z}{\mathsf{Z}}
\renewcommand{\U}{\mathsf{U}}
\newcommand{\W}{\mathsf{W}}
\renewcommand{\S}{\mathsf{S}}
\newcommand{\M}{\mathsf{M}}
\renewcommand{\P}{\mathsf{P}}
\renewcommand{\O}{\mathsf{O}}
\newcommand{\I}{\mathsf{I}}
\newcommand{\X}{\mathsf{X}}
\newcommand{\bM}{\mybox{\M}}
\newcommand{\bO}{\mybox{\O}}
\newcommand{\bU}{\mybox{\U}}
\newcommand{\bMU}{\mybox{\M\U}}
\newcommand{\bPU}{\mybox{\P\U}}
\newcommand{\bI}{\mybox{\I}}
\newcommand{\bIO}{\mybox{\I\O}}
\renewcommand{\S}{\mathsf{S}}
\newcommand{\labels}{\mathsf{s}}
\newcommand{\dis}{\ensuremath{\operatorname{disj}}}
\DeclareMathOperator{\re}{\mathcal R}
\DeclareMathOperator{\rere}{\overline{\mathcal R}}
\DeclareMathOperator{\size}{\operatorname{size}}
\newcommand{\nodeconst}{\ensuremath{\mathcal{N}}}
\newcommand{\edgeconst}{\ensuremath{\mathcal{E}}}
\newcommand{\gen}[1]{\langle #1 \rangle}
\DeclareMathOperator{\poly}{poly}
\newenvironment{myabstract}
{\list{}{\listparindent 1.5em%
		\itemindent    \listparindent
		\leftmargin    1cm
		\rightmargin   1cm
		\parsep        0pt}%
	\item\relax}
{\endlist}
\newenvironment{mycover}
{\list{}{\listparindent 0pt
		\itemindent    \listparindent
		\leftmargin    1cm
		\rightmargin   1cm
		\parsep        0pt}%
	\raggedright
	\item\relax}
{\endlist}
\newcommand{\myaff}[1]{\,$\cdot$\, {\small #1}\par\smallskip}
\begin{document}

\begin{mycover}
	{\huge\bfseries\boldmath Distributed Lower Bounds for Ruling Sets \par}
	\bigskip
	\bigskip
	\bigskip
	
	\textbf{Alkida Balliu}
	\myaff{University of Freiburg}
	
	\textbf{Sebastian Brandt}
	\myaff{ETH Zurich}
	
	\textbf{Dennis Olivetti}
	\myaff{University of Freiburg}
\end{mycover}
\bigskip

\begin{myabstract}
Given a graph $G=(V,E)$, an $(\alpha,\beta)$-ruling set is a subset $S\subseteq V$ such that the distance between any two vertices in $S$ is at least $\alpha$, and the distance between any vertex in $V$ and the closest vertex in $S$ is at most $\beta$.
We present lower bounds for distributedly computing ruling sets.

More precisely, for the problem of computing a $(2,\beta)$-ruling set (and hence also any $(\alpha,\beta)$-ruling set with $\alpha >2$) in the LOCAL model of distributed computing, we show the following, where $n$ denotes the number of vertices, $\Delta$ the maximum degree, and $c$ is some universal constant independent of $n$ and $\Delta$.
\begin{itemize}
	\item Any deterministic algorithm requires $\Omega\left(\min\left\{ \frac{\log\Delta}{\beta\log\log\Delta},\log_\Delta n\right\}\right)$ rounds, for all $\beta\le c \cdot\min\left\{\sqrt{\frac{\log\Delta}{\log\log\Delta}},\log_\Delta n\right\}$. By optimizing $\Delta$, this implies a deterministic lower bound of $\Omega\left(\sqrt{\frac{\log n}{\beta\log\log n}}\right)$ for all $\beta\le c\,\sqrt[3]{\frac{\log n}{\log\log n}}$.
	\item Any randomized algorithm requires $\Omega\left(\min\left\{\frac{\log\Delta}{\beta\log\log\Delta},\log_\Delta\log n\right\}\right)$ rounds, for all $\beta\le c\cdot\min\left\{\sqrt{\frac{\log\Delta}{\log\log\Delta}},\log_\Delta\log n\right\}$. By optimizing $\Delta$, this implies a randomized lower bound of $\Omega\left(\sqrt{\frac{\log \log n}{\beta\log\log\log n}}\right)$ for all $\beta\le c\,\sqrt[3]{\frac{\log\log n}{\log\log\log n}}$.
\end{itemize}
For $\beta > 1$, this improves on the previously best lower bound of $\Omega(\log^* n)$ rounds that follows from the 30-year-old bounds of Linial [FOCS'87] and Naor [J.Disc.Math.'91] (resp.\ $\Omega(1)$ rounds if $\beta \in \omega(\log^* n)$).
For $\beta = 1$, i.e., for the problem of computing a maximal independent set (which is nothing else than a $(2,1)$-ruling set), our results improve on the previously best lower bound of $\Omega(\log^* n)$ \emph{on trees}, as our bounds already hold on trees.
For maximal independent set \emph{on general graphs}, a deterministic lower bound of $\Omega(\min\{\Delta, \log_{\Delta} n\})$ and a randomized lower bound of $\Omega(\min\{\Delta, \log_{\Delta} \log n\})$ were already known due to Balliu et al.\ [FOCS'19].
\end{myabstract}

\thispagestyle{empty}
\setcounter{page}{0}
\newpage
\tableofcontents
\newpage

\section{Introduction}
In this work, we study the problems of finding maximal independent sets (MIS) and ruling sets in the LOCAL model of distributed computing.
In the LOCAL model, each node of the input graph is considered as a computational device and each edge as a communication link.
Computation proceeds in synchronous rounds, where in each round each node can send a message of arbitrary size to each neighbor and then, after the messages arrive, perform some local computation.
Each node has to terminate at some point and then output its local part of the global solution, i.e., whether it is in the MIS (resp.\ ruling set) or not.
For a more detailed introduction to the LOCAL model, we refer the reader to Section~\ref{subsec:model}.

\paragraph{MIS}
The problem of finding an MIS in a given graph is one of the most central and well-studied problems in the LOCAL model.
Already in the '80s, the very first papers of the area \cite{KarpW85, Luby1985, Alon1986, Linial1992, Naor1991} gave first upper and lower bounds for the complexity of computing an MIS, and since then there has been an abundance of papers (e.g., \cite{Awerbuch89, panconesi96decomposition, BarenboimE10, SchneiderW10, LenzenW11, Barenboim2016, Barenboim2013, barenboim14distributed, ghaffari16improved, Rozhon2020,GGR2020}) studying the problem and variants thereof.
A major open question was whether an MIS can be computed deterministically in a polylogarithmic number of rounds (see, e.g., \cite{Linial1992}, or Open Problem 11.2 in the book by Barenboim and Elkin \cite{Barenboim2013})---this question was finally answered in the affirmative in a very recent breakthrough by Rozho\v n and Ghaffari \cite{Rozhon2020} on network decompositions.
In contrast, if randomization is allowed, already more than 30 years ago, Luby \cite{Luby1985} and Alon, Babai, and Itai \cite{Alon1986} presented $O(\log n)$-round algorithms for solving MIS, where $n$ denotes the number of nodes of the input graph. This is still the best randomized upper bound known if the complexity is expressed solely as a function of $n$.

On the lower bound side, the $\Omega(\log^* n)$-round bound from the '80s and early '90s by Linial \cite{Linial1992} and Naor \cite{Naor1991} was the state of the art, until Kuhn, Moscibroda, and Wattenhofer (KMW) \cite{Kuhn2004} proved in 2004 that there is no algorithm computing an MIS in $t = f(\Delta) + g(n)$ rounds (even allowing randomization) if $f(\Delta) \in o(\log \Delta / \log \log \Delta)$ and $g(n) \in o(\sqrt{\log n / \log \log n})$.
Here, $\log^* ()$ denotes the iterated logarithm and $\Delta$ the maximum node degree.
Finally, last year, the KMW bounds were improved and complemented by Balliu et al.\ \cite{Balliu2019} who showed that $f(\Delta) + g(n)$ rounds are not sufficient for deterministic algorithms if $f(\Delta) \in o(\Delta)$ and $g(n) \in o(\log n / \log \log n)$, and not sufficient for randomized algorithms if $f(\Delta) \in o(\Delta)$ and $g(n) \in o(\log \log n / \log \log \log n)$.
Due to an $O(\Delta + \log^* n)$-round upper bound by Barenboim, Elkin, and Kuhn \cite{barenboim14distributed}, the linear dependency on $\Delta$ is tight.

While the above bounds imply that the complexity of MIS on general graphs must lie in the polylogarithmic (in $n$) regime, the situation on trees is far less clear.
Both the KMW lower bounds and the lower bounds by Balliu et al.\ are achieved by first proving the same bounds for the problem of finding a maximal matching\footnote{The KMW lower bound is actually proved for an even easier problem, that is, for finding a $\poly(\log n)$ approximation of a minimum vertex cover.} and then obtaining the MIS bounds as an immediate corollary due to the fact that maximal matching on general graphs is essentially the same problem as MIS on line graphs.
As the line graph of any graph with $\Delta \geq 3$ contains a cycle (of length $3$), both lower bounds are not applicable on trees; in fact, as there seems to be no way around line graphs in order to transform the maximal matching bounds to MIS, there is little hope that the proofs can be adapted to work on trees.
Hence, on trees, the state of the art is given by the $\Omega(\log^* n)$-round lower bounds by Linial and Naor, exhibiting a large gap to the best known deterministic upper bound of $O(\log n / \log\log n)$ rounds on trees by Barenboim and Elkin~\cite{BarenboimE10}.
This suggests the following question.

\vspace{2pt}
\begin{mdframed}[backgroundcolor=gray!20, topline=false, rightline=false, leftline=false, bottomline=false] 
	\textbf{Question 1}
	
	\noindent Is polylogarithmic time needed for deterministically computing an MIS on trees or is there a (much) faster algorithm?
\end{mdframed}
\vspace{2pt}

\paragraph{Ruling sets}
Ruling sets are a generalization of maximal independent sets.
Let $\alpha \geq 2$, $\beta \geq 1$ be integers.
An $(\alpha,\beta)$-ruling set $S$ is a subset of the nodes of the input graph such that the distance between any two nodes from $S$ is at least $\alpha$ and any node not contained in $S$ has a distance of at most $\beta$ to the closest node in $S$.
An MIS is a $(2,1)$-ruling set.
We observe that an $(\alpha,\beta)$-ruling set is also an $(\alpha',\beta')$-ruling set for any $\alpha' \leq \alpha$ and $\beta' \geq \beta$, hence finding the latter is at least as easy as finding the former.
In particular, the problem of finding a $(2,\beta)$-ruling set for some $\beta > 1$ is at least as easy as the problem of finding an MIS.
Moreover, as our goal is to prove lower bounds, we can safely restrict attention to $\alpha = 2$ without affecting the generality of our results.

Due to their relation to MIS (but also as interesting combinatorial objects of their own), ruling sets have been a natural object of interest in the LOCAL model and are well-studied (see, e.g., \cite{Awerbuch89, SchneiderW10symmetry, Gfeller07, BishtKP13, Barenboim2016, ghaffari16improved}).
In particular, the computation of ruling sets often constitutes a useful subroutine in the computation of other objects, such as maximal matching \cite{Barenboim2016}, maximal independent set \cite{ghaffari16improved}, or distributed coloring \cite{GhaffariHKM18, ChangLP18}. This is not a surprise: also the computation of an MIS is an important step in many algorithms, and it is quite natural to replace this step by the computation of a $(2, \beta)$-ruling set for some $\beta > 1$, if the latter suffices and can be computed faster.
Hence, from the perspective of applications, a lower bound for MIS that also applies to such ruling sets can be considered as substantially more robust than a lower bound that cannot be extended to ruling sets.

Unfortunately, there is a simple argument why the existing lower bounds for MIS by KMW and Balliu et al.\ cannot be extended to $(2, \beta)$-ruling sets: as mentioned before, those lower bounds are achieved on line graphs; however, on line graphs already a $(2,2)$-ruling set can be found in $O(\log^* n)$ rounds as shown by Kuhn, Maus, and Weidner \cite{KuhnMW18}.
The best lower bound for $(2, \beta)$-ruling sets follows again from the lower bounds by Linial and Naor for MIS, and stands at $\Omega(\log^* n)$, both on trees and general graphs, up to some $\beta \in \Theta(\log^* n)$.
For $\beta \in \omega(\log^* n)$, no non-constant lower bound is known. 
In contrast, for up to polylogarithmic\footnote{As long as $\beta$ is not too close to $n$, also no subpolylogarithmic upper bounds are known for larger $\beta$, but the (at most) polylogarithmic regime is arguably the most interesting; for instance, we are not aware of any algorithms that make use of $(2, \beta)$-ruling sets where $\beta$ is superpolylogarithmic.} $\beta$, the best known upper bound (expressed solely as a function of $n$) for computing a $(2,\beta)$-ruling set is polylogarithmic in $n$ \cite{Awerbuch89,SchneiderW10symmetry,GGR2020}.

\vspace{2pt}
\begin{mdframed}[backgroundcolor=gray!20, topline=false, rightline=false, leftline=false, bottomline=false] 
	\textbf{Question 2}
	
	\noindent Is polylogarithmic time needed for deterministically computing a $(2, \beta)$-ruling set (for up to polylogarithmic $\beta$) or is there a (much) faster algorithm?
\end{mdframed}
\vspace{2pt}

\paragraph{Round elimination}
Traditionally, proving lower bounds in the LOCAL model has been a challenging task.
Until 2015, to the best of our knowledge, only about a handful of (non-trivial, non-global) lower bounds were known \cite{Linial1992, Naor1991, Kuhn2004, ChungPS14, GoosHS14, NaorS95}, with the only lower bound (as a function of $n$) beyond $\Omega(\log^* n)$ being the KMW lower bound.
A major obstacle seemed to be the lack of techniques that could be used to obtain (improved) lower bounds.

In 2016, things changed when it was discovered that a technique used in the proof for Linial's $\Omega(\log^* n)$-round lower bound is more widely applicable: Brandt et al.\ \cite{Brandt2016} used the technique, now known under the name \emph{round elimination}, to prove lower bounds for the Lov\'asz Local Lemma (LLL), sinkless orientation (as a special case of the LLL) and $\Delta$-coloring.
Since then, round elimination has been used to prove lower bounds for a variety of problems \cite{chang18complexity, BalliuHOS19, Brandt2019, Balliu2019, trulytight, binary}.

In 2019, Brandt \cite{Brandt2019} showed that round elimination can, in principle, be applied to (almost) any problem that is locally checkable\footnote{For a definition, see Section~\ref{subsec:problems}.}, by providing a so-called \emph{automatic} version of round elimination, which, roughly speaking, is a blueprint for obtaining a lower bound via round elimination in which the problem of interest can be inserted.
Unfortunately, for most problems, a crucial step in the general blueprint is (perhaps far) beyond the reach of current techniques, which is the reason why we have not seen a flurry of new lower bounds in the past year.
By using additional techniques inside this framework, a number of new lower bounds have been achieved \cite{Balliu2019, trulytight, binary}, but the framework itself is still far from being well-understood.
As such, we believe that obtaining a better understanding of (automatic) round elimination is one of the most promising research directions in the LOCAL model currently available and crucial for the design of new lower bounds.

Informally, the general idea of round elimination is as follows.
In order to prove a lower bound for some problem $\Pi_0$ of interest, we want to find a sequence of problems
\[
	\Pi_0 \rightarrow \Pi_1 \rightarrow \Pi_2 \rightarrow \dots
\]
such that for any two consecutive problems $\Pi_i, \Pi_{i+1}$, we have $T_{i+1} \leq T_i - 1$ whenever $T_i > 0$, where $T_j$ denotes the complexity of problem $\Pi_j$ for any $j$.
In other words, $\Pi_{i+1}$ is at least one round faster solvable than $\Pi_i$ as long as $\Pi_i$ is not $0$-round solvable, which we will call the \emph{round elimination property}.
Now all that is necessary for proving a lower bound of $T$ for problem $\Pi_0$ is to show that problem $\Pi_{T-1}$ is not $0$-round solvable, or equivalently, that the first $0$-round solvable problem in the sequence has index at least $T$. In fact, if $\Pi_{T-1}$ requires at least $1$ round, then by the round elimination property $\Pi_{T-2}$ requires at least $2$ rounds, $\Pi_{T-3}$ requires at least $3$ rounds, and so on.

Automatic round elimination explicitly generates such a sequence of problems for any locally checkable problem $\Pi_0$, by repeatedly applying a fixed process that takes some locally checkable problem $\Pi_i$ as input and returns $\Pi_{i+1}$.
The main issue with the obtained sequence is that the descriptions of the problems in the sequence usually become very complicated already for small indices; without applying any additional techniques, already the size of the problem description grows roughly doubly exponential \emph{for each subsequent problem}.
Hence, it is not surprising that the crucial step of determining the first $0$-round solvable problem $\Pi_j$ in the sequence cannot be performed (in general) with the currently available techniques.
Moreover, even if one could keep the problem description sizes reasonably small, no general method how to find the desired problem $\Pi_j$ is known.\footnote{Note that it is usually easy to check for a given problem whether it can be solved in $0$ rounds; the difficulty lies in first obtaining a concise (parameterized) description of the problems in the sequence.}

Nevertheless, when studying a specific problem $\Pi_0$, it seems reasonable to try to make the problems in the sequence easier to understand.
All currently known lower bound proofs via automatic round elimination follow the idea of modifying the problems in the sequence in a way that preserves the round elimination property while simplifying the problem descriptions, as suggested in \cite{Brandt2019}.
The proofs can be grouped into two categories, depending on the chosen modification.
\begin{enumerate}
	\item There exists a constant $c$ such that each problem in the sequence can be described\footnote{The description is required to be in a certain standardized form. For details, we refer to Section \ref{subsec:problems}.} by using at most $c$ output labels. Examples are \cite{Balliu2019, trulytight, binary}.
	
	\item The size of the problem description grows doubly exponentially when going from $\Pi_i$ to $\Pi_{i+1}$, for all $i$. Examples are \cite{BalliuHOS19, Brandt2019}.
\end{enumerate}

The idea of the second approach is to simplify the \emph{structure}\footnote{For instance, the simplification could consist in transforming a problem with complicated constraints using a large number of output labels into a (much easier to understand) coloring problem with a large number of colors.} of the descriptions of the problems in the sequence, but roughly preserve the \emph{size} of the descriptions.
The lower bound is achieved by showing that as long as the description size of a problem in the sequence is in $o(n)$ (or $(o(\Delta))$), the problem is not $0$-round solvable.
Hence, this approach only yields lower bounds of $\Omega(\log^* n)$ (resp.\ $\Omega(\log^* \Delta)$).

In contrast, the first approach can yield higher lower bounds, but requires finding a sequence of problems that can be described with a constant number of labels.
Considering that to obtain a \emph{good} lower bound we also must make sure that we do not reach a $0$-round solvable problem too fast, for many problems such a sequence might simply not exist.
In fact, characterizing the set of problems (or at least interesting subsets thereof) that admit such a sequence is an interesting open problem mentioned in \cite{trulytight}.
For instance, while we do not have a proof, we do not believe that for MIS such a sequence yielding a polylogarithmic lower bound exists.
This discussion raises the following question.

\vspace{2pt}
\begin{mdframed}[backgroundcolor=gray!20, topline=false, rightline=false, leftline=false, bottomline=false] 
	\textbf{Question 3}
	
	\noindent How can we design a problem sequence satisfying the round elimination property that yields a better lower bound than $\Omega(\log^* n)$ without restricting the problem descriptions to a constant number of labels?
\end{mdframed}
\vspace{2pt}

\subsection{Our results}
We prove the following result for deterministic algorithms.

\begin{restatable}{theorem}{detlb}\label{thm:detlb}	
In the LOCAL model, any deterministic algorithm that solves the $(2,\beta)$-ruling set problem requires $\Omega\left(\min \left\{  \frac{\log \Delta}{\beta \log \log \Delta}  ,  \log_\Delta n \right\} \right)$ rounds, for all $\beta \le c \cdot \min\left\{ \sqrt{\frac{\log \Delta}{\log \log \Delta}}  ,  \log_\Delta n  \right\}$, for some constant $c$ independent of $n$ and $\Delta$. 
\end{restatable}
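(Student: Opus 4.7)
The plan is to follow the round-elimination paradigm of Brandt~\cite{Brandt2019}, working on $\Delta$-regular trees; any lower bound obtained there transfers to the general statement, since an algorithm for $(2,\beta)$-ruling set on arbitrary $n$-node, maximum-degree-$\Delta$ graphs must in particular work on $\Delta$-regular trees of girth exceeding its running time. I would first formulate $(2,\beta)$-ruling set as a locally checkable labeling in the standard port-numbered form: each half-edge of a node carries a label from a set encoding whether the node is in $S$, a pointer toward a closer $S$-node, or a witness of distance $j$ for some $1 \le j \le \beta$, with edge and node constraints enforcing both the independence and the $\beta$-domination conditions. Call this formulation $\Pi_0$.

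I would then construct a sequence of problems $\Pi_0 \to \Pi_1 \to \Pi_2 \to \cdots$ by iterating the operator $\re$ and, after each application, provably relaxing the resulting problem to one whose description can be summarized by a controlled number of labels. This is a hybrid of the two approaches discussed in the introduction and directly addresses Question~3: the per-step label count is allowed to grow, but only by a factor of roughly $\log\Delta/\log\log\Delta$ per step, so that after $T = \Theta(\log\Delta / (\beta \log\log\Delta))$ steps the description size remains $\Delta^{o(1)}$, while the round-elimination property $T_{i+1} \le T_i - 1$ is maintained throughout. The parameter $\beta$ enters because the ruling-set definition provides a ``buffer'' of length $\beta$, which lets a single round-elimination step effectively collapse $\Theta(\beta)$ layers of pointer labels at once; equivalently, one local round can make $\Theta(\beta)$ steps of progress along the chain of relaxations, which is precisely why the final bound is divided by $\beta$.

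Finally, I would show that as long as the problem description has at most $\Delta^{o(1)}$ labels and is non-trivial in a suitable sense, the problem cannot be solved in $0$ rounds on large-girth $\Delta$-regular trees, by a standard indistinguishability argument (all radius-$0$ views look identical, forcing any $0$-round solver to commit to outputs that violate some edge constraint by pigeonhole over the too-small label set). Counting iterations then gives the lower bound $\Omega(\log\Delta / (\beta \log\log\Delta))$. The secondary term $\log_\Delta n$ arises because the hard instances constructed to witness non-$0$-solvability of the intermediate problems live on graphs of size roughly $\Delta^{\Theta(T)}$, so $T$ must also be capped by $\log_\Delta n$; the upper bound on $\beta$ in the hypothesis is exactly what is needed to ensure that both terms in the minimum are meaningful and that the relaxations remain non-trivial throughout.

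The main obstacle will be the second step: explicitly defining the relaxed problems $\Pi_i$ so that the round-elimination property is preserved despite aggressive pruning of the label set, and so that the $\beta$-dependence emerges as a clean factor $\beta$ in the denominator rather than something worse like $2^\beta$ or $\beta^2$. This requires a careful analysis of the fixed-point structure of $\re$ on ruling-set-like labelings---identifying which information about pointer chains of length up to $\beta$ can be safely forgotten at each step without destroying solvability, and verifying that the resulting relaxation truly dominates the one produced by $\re$. This combinatorial design appears to be the novel technical contribution needed beyond prior round-elimination proofs, since neither the constant-label approach of \cite{Balliu2019, trulytight, binary} nor the $\log^*$-type approach of \cite{BalliuHOS19, Brandt2019} seems capable of delivering it on its own.
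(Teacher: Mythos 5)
Your high-level plan---a round-elimination sequence with a non-constant but polynomially bounded label set, terminated by an indistinguishability argument on high-girth $\Delta$-regular graphs, with the $\log_\Delta n$ term coming from the girth constraint---matches the paper's general strategy and correctly answers Question~3. However, there are two substantive gaps. First, Theorem~\ref{thm:sebastien} (automatic round elimination) only holds in the port-numbering model, where nodes have no unique identifiers; your argument as stated therefore only yields a PN-model lower bound. Your remark that the bound ``transfers to the general statement'' because a LOCAL algorithm must work on high-girth $\Delta$-regular trees addresses the graph class but not the PN-vs-LOCAL distinction (IDs can genuinely help). The paper's route to the \emph{deterministic LOCAL} bound is indirect: it first obtains a \emph{randomized} PN lower bound by tracking failure probability through round elimination (Lemma~41 of \cite{binary}, used in Lemmas~\ref{lem:multiplesteps}--\ref{lem:randomizedpnlb}), which crucially requires the number of labels in each $\Pi_i$ to be $O(\Delta^2)$; it then lifts to randomized LOCAL via random ID generation (giving the weaker $\log_\Delta\log n$ term of Theorem~\ref{thm:randlb}); and finally it recovers Theorem~\ref{thm:detlb} by the standard ``lie about $n$'' simulation, claiming the graph has only $N=\log n$ nodes and using Linial's coloring to re-ID locally. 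None of this machinery appears in your proposal, and it is precisely what converts a PN argument into the stated LOCAL deterministic bound with the $\log_\Delta n$ (rather than $\log_\Delta\log n$) term.

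Second, your intuition for the $\beta$ factor---that ``a single round-elimination step effectively collapses $\Theta(\beta)$ layers of pointer labels at once''---does not match the actual mechanism. In the paper, each application of $\rere\circ\re$ is exactly one step; the $\beta$-dependence enters through the combinatorial growth of the problem family $\Pi_{\Delta,\beta}(v,x)$: the color-count vector $v$ evolves by prefix summation, giving $\size(v^{(j)})=\binom{j+\beta}{\beta}$ after $j$ steps, and (to keep the relaxed problems a valid \emph{lower-bound} sequence) a wildcard parameter $x$ grows multiplicatively by $\size(v^{(j)})$. The constraint $\size(v^{(j)})(x^{(j)}+1)\le\Delta$ then forces $(2t)^{t\beta}\lesssim\Delta$, i.e.\ $t\lesssim\log\Delta/(\beta\log\log\Delta)$. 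Your estimate of a per-step growth factor ``roughly $\log\Delta/\log\log\Delta$'' also does not reflect this dynamics. You do correctly identify that the design of the relaxed $\Pi_i$ is the hard technical core, but without a concrete specification---the paper's key device being the addition of a \emph{coloring component} (colors grouped into $\beta+1$ levels) plus \emph{wildcards}, together with the proof that every node configuration of $\rere(\re(\Pi_{\Delta,\beta}(v,x)))$ relaxes into the structured set $\mathcal Z_{\Delta,\beta}(v,x)$ (Lemmas~\ref{lem:industep}--\ref{lem:rerere})---the proposal cannot yet be checked for correctness.
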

By setting $\Delta \coloneqq 2^{\sqrt{\beta \log n \log \log n}}$, we maximize our lower bound as a function of $n$, thereby obtaining the following corollary.
\begin{restatable}{corollary}{detcor}\label{cor:detlb}
	In the LOCAL model, any deterministic algorithm that solves the $(2,\beta)$-ruling set problem requires  $\Omega\left(\sqrt{\frac{\log n}{\beta \log \log n}}\right)$ rounds, for all $\beta \le c \, \sqrt[3]{\frac{\log n}{\log \log n}}$, for some constant $c$ independent of $n$ and $\Delta$. 
\end{restatable}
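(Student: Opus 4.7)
The plan is to view this corollary purely as a parameter optimization of Theorem \ref{thm:detlb}. The free parameter is the maximum degree $\Delta$; since the theorem gives a two-term lower bound of the form $\min\bigl\{\log\Delta/(\beta\log\log\Delta),\ \log_\Delta n\bigr\}$, I would choose $\Delta$ so as to balance these two terms. Setting them equal yields $(\log\Delta)^2 \asymp \beta\log n\cdot\log\log\Delta$, and anticipating that $\log\log\Delta = \Theta(\log\log n)$, this solves to $\log\Delta \asymp \sqrt{\beta\log n\log\log n}$; this motivates the explicit choice $\Delta \coloneqq 2^{\sqrt{\beta\log n\log\log n}}$ recorded in the statement.

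After substituting this value, I would verify the two quantities in the minimum. The second is immediate: $\log_\Delta n = \log n/\log\Delta = \sqrt{\log n/(\beta\log\log n)}$. For the first, I would observe that $\log\log\Delta = \tfrac12\bigl(\log\beta + \log\log n + \log\log\log n\bigr) = \Theta(\log\log n)$, where I use that the admissible range of $\beta$ ensures $\log\beta = O(\log\log n)$; plugging this in gives $\log\Delta/(\beta\log\log\Delta) = \Theta(\sqrt{\log n/(\beta\log\log n)})$ as well. Hence both terms in the minimum match (up to constants), yielding the claimed lower bound $\Omega(\sqrt{\log n/(\beta\log\log n)})$.

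It then remains to translate the admissibility hypothesis on $\beta$ from Theorem \ref{thm:detlb}, namely $\beta \le c\cdot\min\{\sqrt{\log\Delta/\log\log\Delta},\ \log_\Delta n\}$, into a condition on $\beta$ expressed in $n$ alone. A direct calculation shows that, under our choice of $\Delta$, each of the two candidate upper bounds simplifies to a constraint of the form $\beta^3 = O(\log n/\log\log n)$, i.e.\ $\beta \le c'\sqrt[3]{\log n/\log\log n}$ for some constant $c'>0$, exactly matching the statement of the corollary. The only real subtlety in the whole argument is keeping track of the $\log\log\Delta$ term: the admissible range of $\beta$ is precisely what is needed to guarantee $\log\log\Delta = \Theta(\log\log n)$, so that neither term in the minimum degenerates and both admissibility bounds yield the same cubic constraint. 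Beyond this bookkeeping, the proof is a pure substitution into Theorem \ref{thm:detlb} and introduces no new ideas.
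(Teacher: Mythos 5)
Your proposal is correct and takes essentially the same route as the paper: the paper proves this corollary by the bare statement that it ``follows by applying Theorem~\ref{thm:detlb} with $\Delta = 2^{\sqrt{\beta \log n \log \log n}}$,'' and your write-up simply fills in the substitution, the verification that both terms of the minimum coincide (up to constants), and the translation of the admissibility constraint into the cubic bound on $\beta$ --- all of which is exactly the intended (and, for the randomized analogue, explicitly written-out) calculation.
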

This settles Question 2 for all $\beta \le c \, \sqrt[3]{\frac{\log n}{\log \log n}}$.
As any $(\alpha, \beta)$-ruling set is also a $(2, \beta)$-ruling set for all $\alpha > 2$, Theorem~\ref{thm:detlb} also holds for $(\alpha, \beta)$-ruling sets.
Moreover, since the given lower bounds already hold on trees, we obtain the following corollary, by setting $\beta=1$.

\begin{corollary}\label{cor:detmislb}
	In the LOCAL model, any deterministic algorithm that solves MIS on trees requires  $\Omega\left(\sqrt{\frac{\log n}{ \log \log n}}\right)$ rounds. 
\end{corollary}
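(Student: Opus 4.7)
The plan is to derive \corollaryref{cor:detmislb} as an immediate specialization of \corollaryref{cor:detlb} by plugging in $\beta = 1$. First I would invoke the definition of ruling sets from the earlier ``Ruling sets'' paragraph: a maximal independent set is precisely a $(2,1)$-ruling set, since independence is exactly the requirement that any two selected nodes lie at distance at least $2$, and maximality is exactly the requirement that every non-selected node has a selected neighbor, i.e., lies at distance at most $1$ from the set. Consequently, any deterministic lower bound on the round complexity of computing a $(2,1)$-ruling set transfers verbatim to MIS.

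Next I would check that the hypothesis of \corollaryref{cor:detlb} is satisfied for $\beta=1$. The condition there is $\beta \le c \, \sqrt[3]{\log n / \log\log n}$, and for $\beta = 1$ this is trivially true for all sufficiently large $n$ (and the claim is vacuous for the finitely many small $n$ where it might fail, as they can be absorbed into the asymptotic constant). Substituting $\beta = 1$ into the bound $\Omega\bigl(\sqrt{\log n / (\beta \log\log n)}\bigr)$ yields exactly $\Omega\bigl(\sqrt{\log n / \log\log n}\bigr)$.

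Finally, I would point out that the ``on trees'' qualifier is already inherited for free: as emphasized in the discussion right after \corollaryref{cor:detlb}, the hard instances constructed in the proof of \theoremref{thm:detlb} are themselves trees, so the lower bound holds even when the input graph class is restricted to trees. This is exactly what distinguishes the statement from the previous state of the art on trees (the $\Omega(\log^* n)$ bound of Linial and Naor). There is really no ``main obstacle'' in this deduction --- the entire technical content is in the proof of \theoremref{thm:detlb}; the present corollary is a one-line substitution together with the definitional identification of MIS with a $(2,1)$-ruling set on a graph class that already arises in the lower bound construction.
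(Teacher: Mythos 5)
Your proposal is correct and matches the paper's own derivation exactly: the paper obtains Corollary~\ref{cor:detmislb} from Corollary~\ref{cor:detlb} by setting $\beta=1$, identifying MIS with $(2,1)$-ruling set, and observing that the hard instances in the proof of Theorem~\ref{thm:detlb} are already trees. Nothing further to add.
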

This settles Question 1.
Corollaries~\ref{cor:detlb} and~\ref{cor:detmislb} provide the first polylogarithmic lower bounds for ruling sets, and for MIS on trees.
Due to an $O(\log n / \log \log n)$-round deterministic upper bound for MIS on trees by Barenboim and Elkin~\cite{BarenboimE10}, and  a polylogarithmic deterministic upper bound for $(2,\beta)$-ruling sets on general graphs following from the work by Ghaffari et al.~\cite{GGR2020}, the only remaining question for the given range of $\beta$ is the exponent in the polylog.

For randomized algorithms, we prove the following.

\begin{restatable}{theorem}{randlb}\label{thm:randlb}
		In the LOCAL model, any randomized algorithm that solves the $(2,\beta)$-ruling set problem w.h.p.\footnote{As usual, we say that an algorithm solves a problem with high probability if the global success probability is at least $1 - 1/n$.}\ requires $\Omega\left(\min \left\{  \frac{\log \Delta}{\beta \log \log \Delta}  ,  \log_\Delta \log n \right\} \right)$ rounds, for all $\beta \le c \cdot \min\left\{ \sqrt{\frac{\log \Delta}{\log \log \Delta}}  ,  \log_\Delta \log n  \right\}$, for some constant $c$ independent of $n$ and $\Delta$. 
\end{restatable}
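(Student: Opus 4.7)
\medskip

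\noindent\textbf{Proof proposal.} The plan is to deduce \theoremref{thm:randlb} from the deterministic lower bound \theoremref{thm:detlb} via the standard derandomization reduction for locally checkable problems, which converts an $\Omega(g(n,\Delta))$ deterministic bound into an $\Omega(g(\log n,\Delta))$ randomized bound. The $(2,\beta)$-ruling set problem is locally checkable (correctness is witnessed inside each $\beta$-ball), which makes a Chang--Kopelowitz--Pettie-style argument available.

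Concretely, I would assume for contradiction a randomized LOCAL algorithm $A$ running in
\[
 T(n,\Delta) \;=\; o\!\left(\min\!\left\{\tfrac{\log \Delta}{\beta\log\log\Delta},\ \log_\Delta\log n\right\}\right)
\]
rounds with global success probability at least $1-1/n$ on $n$-vertex graphs of maximum degree $\Delta$. Let $G_0$ be a hard instance furnished by \theoremref{thm:detlb} on $n_0$ vertices of maximum degree $\Delta$, and set $n \coloneqq 2^{\Theta(n_0)}$ so that $T(n,\Delta)=O(\log_\Delta n_0)$. Running $A$ (as designed for $n$-vertex graphs) on $G_0$ succeeds with probability at least $1-n_0/n \gg 1/2$; moreover, each vertex's output is determined solely by the random bits in a ball of radius $T$, which contains only $\Delta^{O(T)}=n_0^{O(1)}$ vertices. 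Fixing random bits per ball by the probabilistic method (or, equivalently, by exhaustion over the bounded number of random configurations in each ball) then extracts a deterministic $T$-round algorithm for $(2,\beta)$-ruling set on $G_0$, contradicting \theoremref{thm:detlb}.

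The two branches of the lower bound then arise as follows. The $\Delta$-dependent branch $\Omega(\log\Delta/(\beta\log\log\Delta))$ transfers unchanged from \theoremref{thm:detlb} since it is $n$-independent, while the $n$-dependent branch $\Omega(\log_\Delta n_0)$, under the substitution $n_0=\Theta(\log n)$ enforced by the derandomization, becomes $\Omega(\log_\Delta\log n)$. The admissibility hypothesis on $\beta$ translates by the same substitution, giving $\beta \le c\cdot\min\{\sqrt{\log\Delta/\log\log\Delta},\ \log_\Delta\log n\}$, which matches the statement.

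The main obstacle I anticipate is not conceptual but one of compatibility: one must verify that the hard instance family constructed for \theoremref{thm:detlb} is rich enough for a union bound over balls to yield a \emph{single} deterministic algorithm (not merely a per-vertex fix), and that any padding used to lift $G_0$ to an $n$-vertex graph preserves both the degree bound $\Delta$ and the local structure that the deterministic lower bound relies upon. These are standard technical checks for reductions of this type, and the expectation is that the hard-instance family built for \theoremref{thm:detlb} has enough flexibility to absorb them without altering the form of the bound.
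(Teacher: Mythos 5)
Your proposal runs in exactly the opposite logical direction from the paper, and as stated it is circular. In this paper, \theoremref{thm:randlb} is proved \emph{first}, by combining a randomized version of the round elimination theorem (Lemma~\ref{lem:singlestep}, borrowed from \cite{binary}) with the explicit lower-bound sequence of Lemma~\ref{lem:lbfamily}, iterating the failure-probability blowup across Lemmas~\ref{lem:multiplesteps}--\ref{lem:randomizedpnlb} to get a randomized lower bound in the port numbering model, and then lifting it to LOCAL by having PN nodes sample unique IDs. The deterministic bound \theoremref{thm:detlb} is \emph{then derived from} \theoremref{thm:randlb}, by the ``lie about the network size'' trick (\`a la \cite[Theorem 4.5]{chang16exponential}, with $N=\log n$ and a Linial recoloring to shrink the IDs). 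So a proof of \theoremref{thm:randlb} that invokes \theoremref{thm:detlb} as a black box is circular in this framework: there is no independent proof of the deterministic LOCAL bound available here, precisely because round elimination natively produces port-numbering bounds and the only bridge to LOCAL in this paper goes through randomization.

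Even setting circularity aside, the derandomization step is under-specified in a way that matters. Fixing the random bits so that the algorithm succeeds on $G_0$ for a \emph{particular} ID assignment does not yield a deterministic LOCAL algorithm: \theoremref{thm:detlb} is a statement about all ID assignments, and the adversary gets to pick one. To make this work one would need the Chang--Kopelowitz--Pettie encoding (turning random bits into a deterministic function of the IDs and union-bounding over local configurations), which forces $n_0 = \Theta(\sqrt{\log n})$ rather than your $n_0 = \Theta(\log n)$ (this is harmless for the $\Omega(\log_\Delta \log n)$ asymptotics, but the ``bounded number of random configurations per ball'' claim as written does not close the argument). You would also have to check that the CKP reduction tolerates the non-constant radius $\beta$ of the ruling-set verifier and the non-constant $\Delta$ regime used here, and that it does not swallow the additive $\beta$ slack arising from the $\beta$-round conversion between $\Pi_{\Delta,\beta}([1,0,\dots,0],0)$ and the actual $(2,\beta)$-ruling set problem, which the paper handles explicitly by showing $t-\beta=\Omega(t)$ in the stated range of $\beta$.
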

By setting $\Delta \coloneqq 2^{\sqrt{\beta \log \log n \log \log \log n}}$, we maximize our lower bound as a function of $n$, thereby obtaining the following corollary.
\begin{restatable}{corollary}{randcor}\label{cor:randlb}
	In the LOCAL model, any randomized algorithm that solves the $(2,\beta)$-ruling set problem w.h.p.\ requires $\Omega\left(\sqrt{\frac{\log \log n}{\beta \log \log \log n}}\right)$ rounds, for all $\beta \le c \, \sqrt[3]{ \frac{\log \log n}{\log \log \log n}}$, for some constant $c$ independent of $n$ and $\Delta$. 
\end{restatable}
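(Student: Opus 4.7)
The plan is to derive Corollary~\ref{cor:randlb} directly from Theorem~\ref{thm:randlb} by plugging in the choice $\Delta \coloneqq 2^{\sqrt{\beta \log \log n \log \log \log n}}$ that was foreshadowed just before the statement. This reduces the argument to a computational verification with two ingredients: (i) showing that under this choice of $\Delta$ both terms of the minimum in the lower bound of Theorem~\ref{thm:randlb} collapse to $\Omega\bigl(\sqrt{\log \log n / (\beta \log \log \log n)}\bigr)$, and (ii) checking that the hypothesis $\beta \le c \cdot \min\{ \sqrt{\log \Delta / \log \log \Delta}, \log_\Delta \log n \}$ is preserved throughout the stated regime $\beta \le c\, \sqrt[3]{\log \log n / \log \log \log n}$.

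For step (i), observe that with the chosen $\Delta$ we have $\log \Delta = \sqrt{\beta \log \log n \log \log \log n}$ and $\log \log \Delta = \tfrac{1}{2}\log\bigl(\beta \log \log n \log \log \log n\bigr) = \Theta(\log \log \log n)$ in the range of $\beta$ of interest. Substituting into the first term of the minimum, $\log \Delta / (\beta \log \log \Delta) = \Theta\bigl(\sqrt{\log \log n / (\beta \log \log \log n)}\bigr)$; for the second, $\log_\Delta \log n = \log \log n / \log \Delta = \sqrt{\log \log n / (\beta \log \log \log n)}$. The two terms match (up to constants), which is precisely why the chosen $\Delta$ is optimal: making $\Delta$ smaller makes the $\log_\Delta \log n$ term larger but shrinks the first term, and vice versa.

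For step (ii), the constraint $\beta \le c \log_\Delta \log n$ becomes $\beta^3 \log \log \log n \lesssim \log \log n$, i.e., $\beta \lesssim \sqrt[3]{\log \log n / \log \log \log n}$; the constraint $\beta \le c \sqrt{\log \Delta / \log \log \Delta}$, after raising to the fourth power and simplifying, yields the same condition up to a constant factor. Hence Theorem~\ref{thm:randlb} applies throughout the stated range of $\beta$, with a possibly smaller universal constant $c$. Since the argument is pure arithmetic, there is no real obstacle; the only point requiring a little care is making sure that the various $\Theta$-asymptotics and the absorption of lower-order $\log \beta$ terms into $\Theta(\log \log \log n)$ are valid uniformly in $\beta$ within the allowed range, which they are because $\beta$ is at most polylogarithmic in $\log \log n$.
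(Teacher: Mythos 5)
Your proof is correct and follows essentially the same route as the paper: you plug the same choice $\Delta = 2^{\sqrt{\beta \log \log n \log \log \log n}}$ into Theorem~\ref{thm:randlb}, show both terms of the minimum become $\Omega\bigl(\sqrt{\log\log n/(\beta\log\log\log n)}\bigr)$, and verify the $\beta$-constraint. The only difference is that you spell out the arithmetic verifying that both constraints on $\beta$ reduce to $\beta\lesssim\sqrt[3]{\log\log n/\log\log\log n}$, whereas the paper merely notes that a small enough $c$ suffices.
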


Again, this bound already holds on trees and we obtain the following corollary for MIS.

\begin{corollary}\label{cor:randmislb}
	In the LOCAL model, any randomized algorithm that solves MIS on trees w.h.p.\ requires $\Omega\left(\sqrt{\frac{\log \log n}{ \log \log \log n}}\right)$ rounds.
\end{corollary}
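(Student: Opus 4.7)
The plan is to obtain Corollary~\ref{cor:randmislb} as an immediate specialisation of Corollary~\ref{cor:randlb} to the case $\beta = 1$. Recall that a maximal independent set is by definition a $(2,1)$-ruling set, so any lower bound for computing a $(2,\beta)$-ruling set at $\beta = 1$ is directly a lower bound for MIS. The proof therefore reduces to three checks: that $\beta=1$ lies in the admissible range of Corollary~\ref{cor:randlb}, that substituting $\beta=1$ yields the claimed expression, and that the bound survives the restriction to trees.

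First I would verify the side condition. Corollary~\ref{cor:randlb} requires $\beta \le c\,\sqrt[3]{\log\log n / \log\log\log n}$ for some universal constant $c>0$; since the right-hand side diverges with $n$, the inequality holds at $\beta=1$ for all sufficiently large $n$, which is all that is needed for an asymptotic lower bound (small $n$ are absorbed into the hidden constant). Substituting $\beta=1$ into $\Omega\bigl(\sqrt{\log\log n /(\beta \log\log\log n)}\bigr)$ then produces exactly $\Omega\bigl(\sqrt{\log\log n / \log\log\log n}\bigr)$.

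Second, I would justify that the bound already applies on trees. The introduction of the paper explicitly emphasises that the hard instances produced by the construction underlying Theorem~\ref{thm:randlb} are trees: this is precisely the selling point that distinguishes the present work from the KMW-style bounds, which rely on line graphs and hence fail on trees. Consequently Corollary~\ref{cor:randlb}, obtained from Theorem~\ref{thm:randlb} by optimising $\Delta$, still yields the same bound when the input is restricted to trees, and the MIS specialisation inherits this property.

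I do not anticipate a substantive obstacle here, as the heavy lifting is done by Theorem~\ref{thm:randlb}. The only point worth double-checking is that the optimising choice $\Delta \coloneqq 2^{\sqrt{\beta \log\log n \log\log\log n}}$ remains meaningful at $\beta=1$: it evaluates to $2^{\sqrt{\log\log n \log\log\log n}}$, which is super-constant in $n$, so both terms in the $\min$ of Theorem~\ref{thm:randlb} evaluate to $\Theta\bigl(\sqrt{\log\log n / \log\log\log n}\bigr)$, and the underlying tree-based construction is non-trivial in this regime. The corollary then follows with no further work.
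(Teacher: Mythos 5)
Your proposal is correct and matches the paper's (implicit) proof exactly: the paper simply sets $\beta=1$ in Corollary~\ref{cor:randlb}, notes that MIS is precisely a $(2,1)$-ruling set, and observes that the underlying lower-bound construction on high-girth $\Delta$-regular graphs already transfers to trees. Your additional sanity checks (the side condition at $\beta=1$ and the non-degeneracy of the optimizing choice of $\Delta$) are accurate and unobjectionable.
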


Note that Theorem~\ref{thm:randlb} implies that there is no randomized algorithm that solves the $(2,\beta)$-ruling set problem w.h.p.\ in $t = f(\Delta) + g(n)$ rounds if $f(\Delta) \in o\left(\frac{\log \Delta}{\beta \log \log \Delta}\right)$ and $g(n) \in o\left(\sqrt{\frac{\log \log n}{\beta \log \log \log n}}\right)$.
Hence, we obtain that the $O(\log \Delta + \log \log n / \log \log \log n)$-round randomized upper bound for MIS (and hence also $(2, \beta)$-ruling set) on trees by Ghaffari \cite{ghaffari16improved} cannot be improved substantially in both $\Delta$ and $n$ simultaneously, for any indicated $\beta$.
Furthermore, Corollary~\ref{cor:randmislb} provides the first progress on Open Problem 10.15 from the book by Barenboim and Elkin \cite{Barenboim2013} (on the lower bound side), asking for the randomized complexity of MIS on trees.

Our results are achieved by designing a sequence of problems with the round elimination property for $(2,\beta)$-ruling sets, where the number of used labels is non-constant.
More precisely, our problem sequence will satisfy that the number of labels used in the description of problem $\Pi_i$ is in $\Theta(i^\beta / (\beta!))$.
In particular, for the special case of MIS, the number of used labels grows linearly.
Hence, our construction of the problem sequence provides an answer to Question 3.

\subsection{Our techniques}
In order to successfully apply the round elimination technique, two main ingredients are required.
The first is \emph{finding} a good problem family: we need to define some family $\{\Pi_{i\ge0}\}$ such that the sequence $\Pi_0 \rightarrow \Pi_1 \rightarrow \dots$ satisfies the round elimination property and $\Pi_0$ is the problem for which we want to prove a lower bound.
The second ingredient is \emph{proving} that the defined sequence indeed satisfies the desired property.

While the second ingredient is technically involved, the conceptually crucial part is the first one, designing a good sequence of problems.
Usually, when applying the round elimination technique, finding the right problem family involves some guessing.\footnote{In rare cases the sequence suggests itself, e.g., for sinkless orientation \cite{Brandt2016} the sequence is obtained by setting $\Pi_0 = \Pi_1 = \dots$}
For instance, in \cite{Balliu2019} the problem family was found by trying to make each subsequent problem in the sequence \emph{look very similar} to the previous one while using the same output labels in the description (see \cite[Section 3.7]{Balliu2019}).
In the case where each problem in the family can be described using a constant number of labels, there is even very recent software available, written by Olivetti \cite{Olivetti2019}, that automatically searches the space of potential problems for small $\Delta$.
Unfortunately, for the MIS problem (and for ruling sets) this approach fails, suggesting that a constant number of labels is not sufficient.
Instead, we propose a more explicit and perhaps surprising approach to find the desired problem family, by first proving an \emph{upper bound} for the problem of interest such that the proof can be ``represented" via a similar sequence of problems.

As explained in \cite{fraigniaud16local, Brandt2019}, the round elimination technique can also be used to find upper bounds:
Instead of finding a problem sequence with the round elimination property, i.e., with the property that $T_{i+1} \leq T_i - 1$, the idea is to find a problem sequence with the property that $T_{i+1} \geq T_i - 1$.
This ensures that the index $j$ of the first $0$-round solvable problem $\Pi_j$ in the sequence (if such a problem exists) is an upper bound for the complexity of $\Pi_0$.
Accordingly, we will call a sequence satisfying $T_{i+1} \leq T_i - 1$ a \emph{lower bound sequence} and a sequence satisfying $T_{i+1} \geq T_i - 1$ an \emph{upper bound sequence}.
We note that the \emph{automatic sequence} provided by \emph{automatic} round elimination is both a lower and an upper bound sequence since there we have $T_{i+1} = T_i -1$; in fact, it can be seen as the \emph{tightest} sequence with the property $T_{i+1} \geq T_i - 1$.
In the following, we will use this automatic sequence to informally describe the intuition behind our approach.

\paragraph{Intuition behind our approach} In the round elimination framework, each problem is described via a list of ``allowed" configurations that specify which local output label configurations around a node or on an edge are considered correct. 
As mentioned before, in the automatic sequence these descriptions grow very fast.
On the other hand, due to the nature of $0$-round algorithms, it seems to be the case that in the first (or more generally, any) $0$-round solvable problem $\Pi_j$ only very few of those allowed configurations are actually required for the correctness of a given $0$-round algorithm.
In other words, $\Pi_j$ would still remain $0$-round solvable if we removed a large number of the allowed configurations; moreover, the remaining part of the problem usually has an intuitive interpretation.
Assuming that the previous problems in the automatic sequence behave similarly, we obtain the following intuition for each problem $\Pi_i$ with $i \leq j$:
\begin{itemize}
	\item[(1)] There is some small part of the problem description that has some intuitive meaning and is relevant for solving the problem in $j - i$ rounds, and
	\item[(2)] there are additional allowed configurations that seem to be an artifact of the automatic process that generates the sequence. 
\end{itemize}
Intuitively, Part (1) can be thought of as the \emph{essence} of the problem, and we argue that the information encoded therein should suffice to prove lower bounds.
Hence, we would like to restrict attention to Part (1).

If we had a concise description of problem $\Pi_j$ and the complete automatic sequence leading to $\Pi_j$, it would be straightforward to extract Part (1) of each problem and thereby obtain a comparably simple sequence $\Pi_0^* \rightarrow \Pi_1^* \rightarrow \dots$ of problems.
However, there are two issues: first, we do not have feasible access to $\Pi_j$ and the preceding sequence (otherwise we would be done), and second, for technical reasons, the obtained sequence is an upper bound sequence, but not a lower bound sequence (in general), i.e., even if we had such access, the fact that $T_{i+1} \leq T_i - 1$ is not satisfied prevents us from using the sequence in a \emph{lower bound} proof.

To solve the second issue, we make use of so-called \emph{wildcards}, a notion introduced in \cite{Balliu2019}.
We show for the case of MIS and ruling sets that, perhaps surprisingly, adding a sufficient number of wildcards to the allowed configurations in the problems from $\Pi_0^* \rightarrow \Pi_1^* \rightarrow \dots$ turns the upper bound sequence into a lower bound sequence that is ``tight enough" to yield a polylogarithmic lower bound.

Our solution to the first issue is to try to design an upper bound sequence $\Pi'_0 \rightarrow \Pi'_1 \rightarrow \dots$ that is as close to the desired sequence $\Pi_0^* \rightarrow \Pi_1^* \rightarrow \dots$ as possible, and then work with problem family $\{\Pi'_{i\ge0}\}$ instead of $\{\Pi_{i\ge0}^*\}$.
As the latter sequence is unknown, our guideline for designing $\{\Pi'_{i\ge0}\}$ will be \emph{simplicity}, following the above intuition that Part (1) of each $\Pi_i$ (i.e., $\Pi_i^*$) is small and intuitive.
A key idea in the design will be to introduce a \emph{coloring component} into the MIS and ruling set problems.
Roughly speaking, the purpose of this coloring component is that, with enough care, we can make sure that only the coloring part of the problem description grows when we go from $\Pi'_i$ to $\Pi'_{i+1}$, while the MIS (resp.\ ruling set) part remains unchanged.
This allows us to keep the structure of the problems in the sequence comparably simple, which in turn allows us to determine at which point in the sequence the problems become $0$-round solvable.

Essentially, our approach reduces the task of proving lower bounds to proving upper bounds, which usually is considered to be an easier task.\footnote{While the current literature uses round elimination primarily to prove lower bounds, this statement arguably also holds for lower/upper bounds \emph{via round elimination}. One main reason is that to make a problem given in the form specified by round elimination \emph{harder} (a technique instrumental for the design of upper bound sequences), we can simply discard allowed configurations, while to make a problem \emph{easier} (instrumental for lower bound sequences), more complicated operations have to be used.}
However, the designed algorithm should also have a ``simple representation" as an upper bound sequence, and this does not seem to be the case for existing ruling set algorithms.
Hence, we will design a new, genuinely different ruling set algorithm that gives state-of-the-art upper bounds in terms of $\Delta$ (which is the relevant dependency for the round elimination technique, from a technical perspective) and yields a simple upper bound sequence.

Figure \ref{fig:re-lb-ub} depicts the high level idea of what happens when using the round elimination technique to prove upper and lower bounds by doing simplifications.
\begin{figure}[h]
	\centering
	\includegraphics[width=0.9\textwidth]{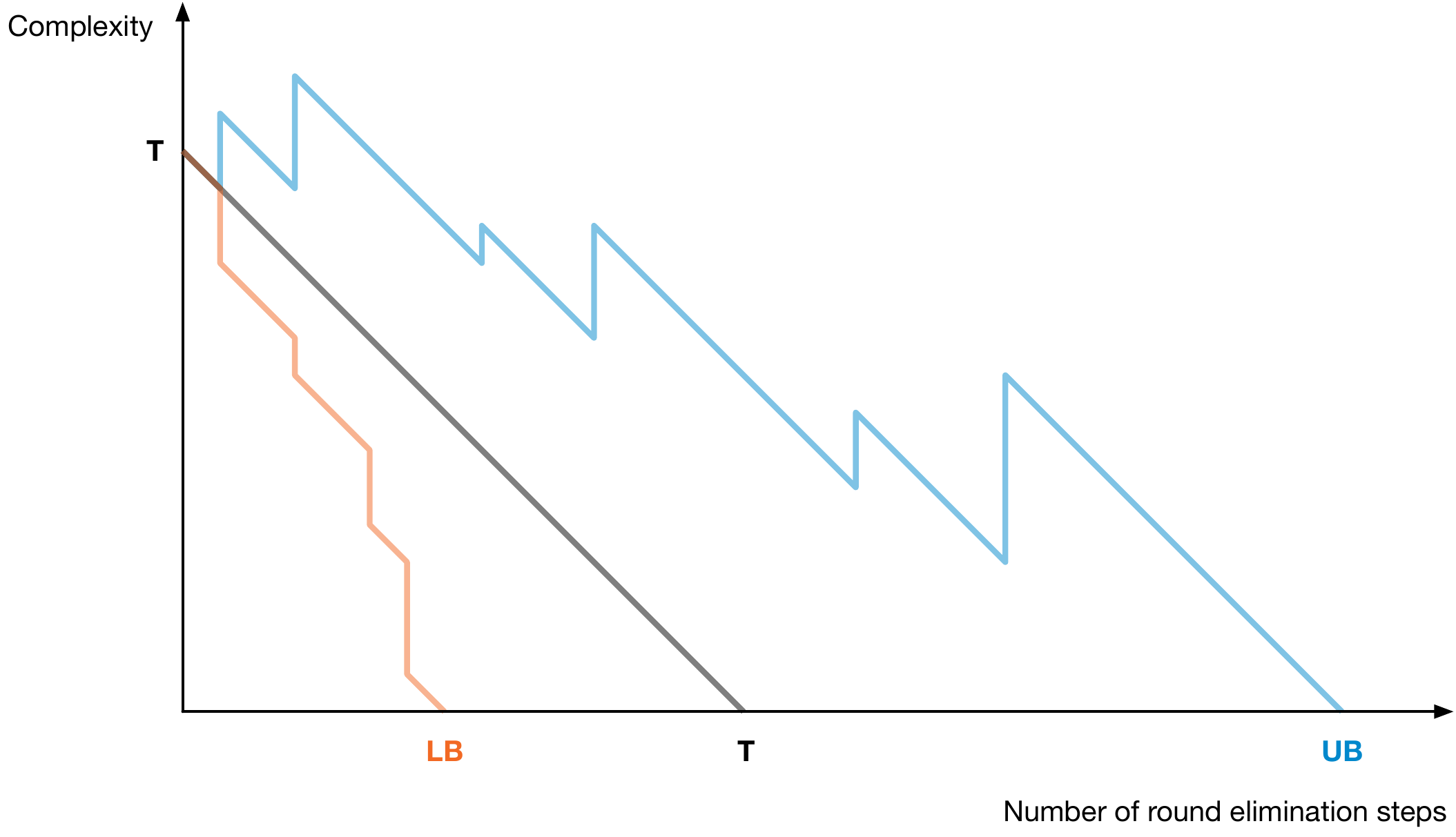}
	\caption{$T$ is the unknown complexity of some problem $\Pi$. By directly applying the round elimination technique to $\Pi$ we obtain a sequence of problems, each one being exactly one round easier than the previous one, and after $T$ steps we reach a $0$-round solvable problem. The problems in this sequence all lie on the grey line. Unfortunately, it is often practically not feasible to compute such a sequence. In order to prove lower bounds, we can try to relax the obtained problems to problems with simpler descriptions, and in some cases the simplified problems may become strictly easier. This is depicted in the orange lower bound sequence, and the obtained lower bound is given by the value of the horizontal axis where the orange line intersects it. Similarly, we may lose precision also in an upper bound sequence, depicted in blue, where we simplify problems in a manner that makes them potentially harder.}
	\label{fig:re-lb-ub}
\end{figure}

\paragraph{Approach}
To summarize, our approach works as follows.
First, we prove an upper bound for finding a $(2,\beta)$-ruling set (of which MIS is a special case) that can be represented by a comparably simple upper bound sequence.
To this end, we consider the initial problem $\Pi_0$ of the sequence as ``$(2,\beta)$-ruling set with some coloring component" and then introduce more and more colors into the problem over the course of the sequence, in a certain hierarchical manner.
Second, we insert (an increasing number of) wildcards into the problems in our sequence, and prove that this turns the upper bound sequence into a lower bound sequence that yields a polylogarithmic lower bound. 

While the individual parts of our approach are technically challenging, the approach itself is surprisingly simple.
Hence, we believe that this general approach does not only work for MIS and ruling sets but should also be applicable to other problems; however, as it involves, e.g., finding an upper bound proof that can be described well via a sequence of problems, obtaining new bounds using this approach is not automatic.
Moreover, we think that the idea of introducing a coloring component into problems that do not seem to have any particular relation to coloring should be more widely applicable; one intuitive reason is that, similar to wildcards, it gives a relatively simple way to represent \emph{progress} towards $0$-round solvability in the sequence, which seems like a necessary ingredient for designing a lower or upper bound sequence (which we can feasibly infer bounds from).

\subsection{Further discussion of related work}
\paragraph{MIS} 
The maximal independent set problem has been widely studied in the LOCAL model. Barenboim et al.\ showed that, if we also consider the dependency in $\Delta$, MIS can be solved in $O(\log^2\Delta) + 2^{O(\sqrt{\log\log n})}$ rounds~\cite{Barenboim2016}. Ghaffari improved this running time to $O(\log\Delta) + 2^{O(\sqrt{\log\log n})}$ \cite{ghaffari16improved}. The MIS problem has been studied also in specific classes of graphs \cite{SchneiderW10,BarenboimE10,BarenboimE11,Barenboim2016}. For example, for computing MIS on trees with randomized algorithms, Lenzen and Wattenhofer showed an $O(\sqrt{\log n}\log\log n)$-round algorithm \cite{LenzenW11,Barenboim2016}. This was later improved by Barenboim et al.\ to $O(\sqrt{\log n\log\log n})$ \cite{Barenboim2016}, and then further improved to $O(\sqrt{\log n})$ by Ghaffari \cite{ghaffari16improved}. Barenboim et al.\ also showed that MIS on trees can be solved in $O(\log\Delta\log\log\Delta + \log\log n/\log\log\log n)$ rounds \cite{Barenboim2016}. Ghaffari later improved this bound to $O(\log\Delta + \log\log n/\log\log\log n)$ rounds~\cite{ghaffari16improved}.


While all the above algorithms are randomized, Panconesi and Srinivasan provided a deterministic algorithm for solving MIS in $2^{O(\sqrt{\log n})}$ rounds~\cite{panconesi96decomposition}. Later, Barenboim, Elkin and Kuhn showed an $O(\Delta + \log^* n)$-round algorithm \cite{barenboim14distributed}. Very recently, Rozho\v n and Ghaffari proved that MIS can be solved deterministically in $\poly(\log n)$ rounds~\cite{Rozhon2020}. Meanwhile, the exponent of the polylog has been improved by Ghaffari et al.~\cite{GGR2020}. The MIS problem has been studied also in the CONGEST\footnote{The CONGEST model is the same as the LOCAL model with the difference that in CONGEST the size of the messages is bounded by $O(\log n)$ bits. We refer the reader to Section \ref{subsec:model} for more details on these models.} model (e.g., see \cite{Ghaffari19congest, ghaffariPortman19}).

\paragraph{Ruling sets}
Ruling sets have been introduced by Awerbuch et al.\ \cite{Awerbuch89}, where the authors showed how to construct $(\alpha, O(\alpha \log n))$-ruling sets in $O(\alpha \log n)$ deterministic rounds in the LOCAL model. Since then, there have been several works in this direction both in the deterministic and randomized setting. 
As far as deterministic algorithms are concerned, Schneider, Elkin, and Wattenhofer showed how to get $(2,\beta)$-ruling sets in $O(\beta \Delta^{2/\beta} + \log^* n)$ rounds in the LOCAL model \cite{SEW13}. It is then easy to obtain an $(\alpha, (\alpha-1)\beta)$-ruling set of a graph $G$ in the LOCAL model by just computing a $(2, \beta)$-ruling set on the power graph $G^{\alpha - 1}$.


If randomness is allowed, Gfeller and Vicari showed how to compute a version of $(1, O(\log\log\Delta))$-ruling sets where each node in the ruling set is allowed to have at most $O(\log^5 n)$ neighbors also in the ruling set, in $O(\log\log\Delta)$ rounds~\cite{Gfeller07}, and by then applying the algorithm of \cite{SEW13} on the graph induced by selected nodes, we can obtain an algorithm for $(2,\log \log n)$-ruling sets running in $O(\log \log n)$ time.
Kothapalli and Pemmaraju showed how to compute $(2,2)$-ruling sets in $O\left(\frac{\log \Delta}{(\log n)^\varepsilon} + (\log n)^{1/2+\varepsilon}\right)$ rounds, for any $\varepsilon > 0$ \cite{KP12}. One year later, Bisht, Kothapalli, and Pemmaraju provided a sparsifying procedure that can be used, together with some MIS algorithm, to obtain $(2,\beta)$-ruling sets (in a runtime that depends on the respective MIS algorithm)~\cite{BishtKP13}. For instance, by combining this sparsifying procedure with the MIS algorithm by Barenboim et al.~\cite{Barenboim2016}, a $(2,\beta)$-ruling set can be computed in $O(\beta\log^{1/(\beta-1/2)} \Delta) + 2^{O(\sqrt{\log\log n})}$ rounds. By using the improved MIS algorithm by Ghaffari  \cite{ghaffari16improved} instead, we obtain a runtime of $O(\beta\log^{1/\beta} \Delta) + 2^{O(\sqrt{\log\log n})}$ rounds, which can in turn be improved to $O(\beta\log^{1/\beta} \Delta) + \poly(\log\log n)$ rounds by making use of the $\poly(\log n)$-round network decomposition algorithm by Rozho\v n and Ghaffari \cite{Rozhon2020}.
Ruling sets have been investigated also in the more restrictive CONGEST model (e.g., see \cite{HKN16, KuhnMW18, PPPR017}).

\section{Background}\label{sec:preliminaries}

\subsection{Model}\label{subsec:model}

\paragraph{The LOCAL model}
The model of computation used in this paper is the widely studied LOCAL model of distributed computing \cite{Peleg2000}. In this model, each node of the input graph has a unique identifier from $1$ to $\poly n$, and the computation proceeds in synchronous rounds. At each round, each node can send a message of arbitrary size to each neighbor, and, after receiving the messages from its neighbors, perform some local computation of arbitrary complexity. In the LOCAL model, each node knows initially its unique identifier and its degree. As commonly done in this context, we also assume that each node knows the number of nodes $n$ in the graph (or a polynomial upper bound of it) and the maximum degree $\Delta$.
Clearly, this can make the task of proving lower bounds only harder.
Each node executes the same algorithm (which is what we call a distributed algorithm), and each node has to terminate at some point and then output its local part of the global solution, e.g., in the case of MIS whether the node is in the MIS or not.
The runtime of such a distributed algorithm is the number of synchronous rounds until the last node terminates.
In the randomized version of the LOCAL model, each node additionally has access to a stream of private random bits.
We will study Monte Carlo algorithms that solve the desired problem with high probability, that is, the global success probability must be at least $1-1/n$.

Another well-studied model in the area of distributed computing is the CONGEST model \cite{Peleg2000}, which is defined as the LOCAL model with the only difference that the size of each message sent between the nodes is restricted to $O(\log n)$ bits.
As the CONGEST model is strictly weaker than the LOCAL model, our lower bounds hold also in the CONGEST model.

\paragraph{The Port Numbering model}
Our results hold in the LOCAL model of distributed computing, however, for technical reasons we pass through the Port Numbering (PN) model, in the sense that we first show how to obtain our results in the PN model, and then lift them to the LOCAL model. The PN model is a variant of the LOCAL model where nodes do not have identifiers, but each node $v$ has an internal ordering of its incident edges given by an arbitrary assignment of (pairwise distinct) so-called \emph{port numbers} from $1$ to $\deg(v)$ to the edges.
This model is also synchronous, and, as in the LOCAL model, the size of the messages and the computational power of each node is not bounded. In the randomized version of the PN model, each node has access to a stream of private random bits and we require that randomized algorithms succeed with high probability.

To be able to apply the round elimination framework, we also need that \emph{edges} have port numbers; in other words, we assume that an orientation of the edges is given.
However, this is just a technical detail that does not have any effect on our argumentation, and as such we will ignore it in the following.\footnote{For the interested reader, we note that the edge orientations are (only) required in the proof of the round elimination theorem \cite[Theorem 1, arxiv version]{Brandt2019}, i.e., in the proof of the statement asserting that a problem $\Pi_1$ constructed in a fixed deterministic manner from a given problem $\Pi_0$ has a complexity of precisely $1$ round less than $\Pi_0$. Conveniently, given the theorem, to prove a lower bound, only the mentioned construction is relevant, allowing us to ignore, e.g., the edge orientations needed for the theorem to hold.}
Note that, in the LOCAL model, such an edge orientation can be obtained from the unique identifiers in one round; therefore also the presented upper bounds do not change asymptotically if we assume that an edge orientation is given.
 
\subsection{Problems}\label{subsec:problems}

In the round elimination framework a problem is characterized by an alphabet $\Sigma$ of labels, a \emph{node constraint} \nodeconst{} and an \emph{edge constraint} \edgeconst.
We will only consider problems defined on $\Delta$-regular graphs in this formalism, since, as we will later see, this is enough for our purposes.
The node constraint \nodeconst{} is a collection of words of length $\Delta$ over the alphabet $\Sigma$, and the edge constraint \edgeconst{} is a collection of words of length $2$ over $\Sigma$.
The same label can appear several times in a word and the order of the elements that compose a word does not matter, hence each word technically is a multiset.
We call a word in \nodeconst{} a \emph{node configuration} and a word in \edgeconst{} an \emph{edge configuration}.

Let $G=(V,E)$ be our input graph and let $A=\{ (v,e)\in V\times E~|~v\in e\}$ be the set that contains all pairs $(\mbox{node}, \mbox{incident edge})$.
The output for a problem in this formalism is given by a labeling of each $(v,e)\in A$ with one element from $\Sigma$.
Put otherwise, each node has to output an element of the set $\Sigma$ on each incident edge.
We say that such an output is \emph{correct} if it satisfies \nodeconst{} and \edgeconst, i.e., for each node $v' \in V$, the collection of $\Delta$ output labels assigned to the $(v,e) \in A$ with $v = v'$ is a node configuration listed in $\nodeconst$, and for each edge $e' \in E$, the two output labels assigned to the $(v,e)$ with $e = e'$ is an edge configuration listed in $\edgeconst$.

We use regular expressions to represent (collections of) node and edges configurations.
For example, the expression $\P\s\O^{\Delta-1}$ describes a node configuration that consists of exactly one label $\P$ and $\Delta-1$ labels $\O$.
Similarly, the expression $\M\s[\P\O]$ describes a collection of edge configurations that consists of one label $\M$ and the other label can be either $\P$ or $\O$, i.e., $\M\s[\P\O] = \{ \M\P, \M\O \}$.
We call a part of an expression such as $[\P\O]$, where we have a choice between different labels, a \emph{disjunction}.
While technically an expression containing a disjunction describes a set of configurations, we will use the term \emph{configuration} also for such an expression, for simplicity.
In order to explicitly specify that the expression contains a disjunction, we will use the term \emph{condensed configuration}.
Moreover, we will say that a configuration is \emph{contained in} a condensed configuration if we can obtain the former from the latter by picking a choice in each disjunction.

With a few exceptions, all problems from a large class of problems of interest in the LOCAL model, so-called \emph{locally checkable} problems, can be described in this formalism.
A locally checkable problem is simply a problem for which the correctness of a solution can be verified by checking whether the $O(1)$-hop neighborhood of each node is locally correct.
For technical reasons, locally checkable problems whose definitions involve small cycles (such as determining for each node whether it is contained in a triangle) cannot be described in the above formalism. 
For example, consider the triangle-detection problem, and consider a graph $G$ that is a triangle. Assume for a contradiction that there is an alphabet $\Sigma$, and node and edge constraints, satisfying that a graph can be labeled correctly if and only if it contains a triangle. Consider a valid labeling for $G$.
We can construct a $6$-cycle $H$ satisfying that each node and edge configuration that appears in $H$ also appears in $G$ (that is, $H$ is a lift of $G$). Hence, the labeling is valid, even if $H$ does not contain any triangle,  contradicting the correctness of the labeling.
Hence, for simplicity, in the remainder of the paper we will use the term ``locally checkable" for (locally checkable) problems that are not of this kind.

In the following we present two examples highlighting how we arrive at the description of a problem in the new formalism.
In Section~\ref{sec:equivalence}, we will show more formally that the given descriptions capture the MIS and ruling set problems.

\paragraph{Example: MIS}
Let us see, for example, how we can describe the MIS problem in this formalism.
We define $\Sigma=\{\M,\P,\O \}$.
We will use the node constraint to represent whether a node is in the independent set or not.
Nodes that are in the independent set must output the label $\M$ (as in ``in the MIS'') on all incident edges.
For nodes that are not in the independent set, we have to make sure that at least one neighbor is in the independet set.
To this end, we require that nodes that are not in the independent set \emph{point} to a neighbor that is in the independent set, thereby ensuring maximality.
In other words, these nodes must output a label $\P$ (as in ``pointer'') on exactly one incident edge and the label $\O$ (as in ``other'') on all the other $\Delta - 1$ incident edges.
Now the edge constraint must guarantee that no two neighbors are in the MIS, hence $\M\s\M\notin \edgeconst$, and that a pointer points to a node that is in the MIS, hence $\P\s\M\in \edgeconst$, but $\P\s\P\notin \edgeconst$, and $\P\s\O\notin \edgeconst$.
In order to capture the situation where a node not in the MIS has several neighbors in the MIS, we must allow $\M\s\O\in \edgeconst$.
Also, since two nodes not in the MIS may be neighbors, $\O\s\O\in \edgeconst$.
This leads to the following formal definition of the node and edge constraint.

\begin{equation*}
\begin{aligned}
\begin{split}
\nodeconst\text{:} \\ 
& \quad\M^{\Delta} \\
& \quad\P\s\O^{\Delta-1} 
\end{split}
\qquad
\begin{split}
\edgeconst\text{:} \\
& \quad \M\s[\P\O] \\
& \quad \O\s\O
\end{split}
\end{aligned}
\end{equation*}

\paragraph{Example: (2,2)-ruling set} 
In order to encode the $(2,2)$-ruling set problem we need to use a larger set of labels compared to the one used for the MIS problem. Let $\Sigma=\{\M,\P_1,\P_2,\O_1,\O_2\}$. Intuitively, similarly as before, the $\M$ label can be seen as the ``I am in the ruling set" label, while the labels $\P_1$ and $\P_2$ are ``pointer'' labels that are used to point to nodes in the ruling set and to nodes that are at distance $1$ from a node in the ruling set. Notice that, as a $(2,1)$-ruling set (i.e., MIS) solves the $(2,2)$-ruling set problem, the encoding of the $(2,2)$-ruling set problem will contain the node and edge configurations of the MIS problem.
For instance, a node in the ruling set will output $\M^{\Delta}$. Nodes at distance $1$ from a node in the ruling set may output either $\P_1\s\O_1^{\Delta-1}$ or $\P_2\s\O_2^{\Delta-1}$, but those at distance $2$ must output $\P_2\s\O_2^{\Delta-1}$. On the edge side, we must guarantee that, for any pair of nodes in the ruling set, they do not share an edge, hence $\M\s\M\notin \edgeconst$. Also, a pointer of type $1$ must point to a node in the ruling set, while a pointer of type $2$ must point to a node at distance at most $1$ from a node in the ruling set, hence $\M\s[\P_1\P_2]\in \edgeconst$ and $\O_1\s\P_2\in \edgeconst$. On the other hand, we want to forbid bad pointing. In fact, nodes at distance $1$ from a node in the ruling set must not be able to point to a node that is not in the ruling set, hence $\P_1\s[\O_1\O_2\P_1\P_2]\notin \edgeconst$. Also, nodes at distance $2$ from a node in the ruling set must not point to another node that is at distance $2$ as well, hence $\P_2\s[\O_2\P_2]\notin \edgeconst$. More precisely, the $(2,2)$-ruling set problem can be encoded in the formalism as follows.

\begin{equation}
\begin{aligned}
\begin{split}
\nodeconst\text{:} \\
& \quad\M^{\Delta} \\
& \quad\P_1\s\O_1^{\Delta-1} \\
& \quad\P_2\s\O_2^{\Delta-1} 
\end{split}
\qquad
\begin{split}
\edgeconst\text{:} \\
& \quad \M\s[\P_1\O_1\P_2] \\
& \quad \O_1\s[\O_1\O_2\P_2] \\
& \quad \O_2\s\O_2
\end{split}
\end{aligned}
\label{eq:22rs}
\end{equation}

\subsection{Round elimination}\label{sec:resec}

In our proofs, we will use the result of \cite[Theorem 4.3]{Brandt2019}, that is at the core of the round elimination technique. On a high level, this theorem says that, on $\Delta$-regular high-girth graphs, given a locally checkable problem $\Pi$ with time complexity $T$, there exists a locally checkable problem $\Pi''$ with time complexity $T-1$. The procedure of showing this theorem goes through an intermediate problem, that we call $\Pi'$. Given $\Pi$, Brandt \cite{Brandt2019} shows how to construct first $\Pi'$ and then $\Pi''$. We will formally define these problems and then we will see an example where we compute $\Pi'$ and $\Pi''$ starting from a specific problem $\Pi$. Let $\Sigma_{\Pi}$, $\nodeconst_{\Pi}$, and $\edgeconst_{\Pi}$ be the alphabet of labels, the node constraint, and the edge constraint for problem $\Pi$, respectively.

\paragraph{Problem $\Pi'$} In order to define problem $\Pi'$, we must define the alphabet $\Sigma_{\Pi'}$, the node constraint $\nodeconst_{\Pi'}$, and the edge constraint $\edgeconst_{\Pi'}$.
\begin{itemize}
	\item $\Sigma_{\Pi'}$: The set of labels for $\Pi'$ is the set of all non-empty subsets of $\Sigma_{\Pi}$, i.e., $\Sigma_{\Pi'}=2^{\Sigma_{\Pi}} \setminus \{\{\}\}$.
	
	\item $\edgeconst_{\Pi'}$: We construct the edge constraint in the following way. Consider a configuration $\A_1\s \A_2$, where $\A_1,\A_2\in \Sigma_{\Pi'}$, such that, for all $(\a_1, \a_2) \in \A_1 \times \A_2$, it holds that $\a_1\s \a_2\in \edgeconst_{\Pi}$ (notice that, by construction of $\Sigma_{\Pi'}$, it holds that $\a_1, \a_2 \in \Sigma_{\Pi}$). Let $\mathcal{A}$ be the collection of all such configurations.
We call a configuration $\A_1\s \A_2 \in \mathcal{A}$ \emph{non-maximal} if there exists another configuration $\A'_1\s \A'_2\in \mathcal{A}$ such that $\A_i \subseteq \A'_i$ for all $i \in \{ 1, 2 \}$, and $\A_i \subsetneq \A'_i$ for at least one $i \in \{ 1, 2 \}$.
In other words, if we have a configuration $\A'_1\s \A'_2\in \mathcal{A}$ that is obtained from $\A_1\s \A_2$ by adding at least one element to at least one of $\A_1$ and $\A_2$, then we say that $\A_1\s \A_2$ is non-maximal. We delete all non-maximal configurations from $\mathcal{S}$, and what remains is our set $\edgeconst_{\Pi'}$ of configurations.
	
	\item $\nodeconst_{\Pi'}$: Consider a configuration $\B_1\s \B_2\s\ldots\s \B_\Delta$ where $\B_i\in\Sigma_{\Pi'}$ for all $i\in\{1,\dotsc,\Delta\}$, such that there exists a tuple $ (\b_1,\dotsc,\b_\Delta) \in \B_1 \times \dotsc \times \B_\Delta$ such that $\b_1\s \b_2\s\dotsc\s \b_\Delta \in \nodeconst_{\Pi}$. Let $\mathcal{B}$ be the collection of all such configurations. We delete from the set $\mathcal{B}$ all configurations that contain some set $\B_i$ that does not appear in any configuration in $\edgeconst_{\Pi'}$. The modified set $\mathcal{B}$ is our set $\nodeconst_{\Pi'}$.
\end{itemize}
For simplicity, we can (and will) assume that all labels that occur neither in $\edgeconst_{\Pi'}$, nor in $\nodeconst_{\Pi'}$, are also removed from $\Sigma_{\Pi'}$.

\paragraph{Problem $\Pi''$}
Similarly as before, we need to define the alphabet $\Sigma_{\Pi''}$, the node constraint $\nodeconst_{\Pi''}$, and the edge constraint $\edgeconst_{\Pi''}$. 

\begin{itemize}
	\item $\Sigma_{\Pi''}$: The set of labels for $\Pi''$ is the set of all non-empty subsets of $\Sigma_{\Pi'}$, i.e., $\Sigma_{\Pi''}=2^{\Sigma_{\Pi'}} \setminus \{\{\}\}$.
	
	\item $\nodeconst_{\Pi''}$: The node constraint is constructed as follows. Consider a configuration $\B_1\s \B_2\s\ldots\s \B_\Delta$ where $\B_i\in\Sigma_{\Pi''}$ for all $i\in\{1,\dotsc,\Delta\}$, such that for all $(\b_1,\dotsc,\b_\Delta) \in \B_1 \times \dotsc \times \B_\Delta$ it holds that $\b_1\s \b_2\s\dotsc \b_\Delta\in\nodeconst_{\Pi'}$. Let $\mathcal{B}$ be the collection of all such configurations. We delete from $\mathcal{B}$ all non-maximal configurations, i.e., all those configurations $\B_1\s\ldots\s \B_\Delta$ such that there exists some other configuration $\B'_1\s\ldots\s \B'_\Delta$ that is obtained from the former by adding at least one element to at least one of the $\B_i$ sets. After performing these deletions, we set $\nodeconst_{\Pi''} = \mathcal{B}$.
	
	\item $\edgeconst_{\Pi''}$: Consider a configuration $\A_1\s \A_2$, where $\A_1,\A_2\in \Sigma_{\Pi''}$, such that there exists a pair $(\a_1, \a_2) \in \A_1 \times \A_2$ such that $\a_1\s \a_2\in\edgeconst_{\Pi'}$. Let $\mathcal{A}$ be the collection of all such configurations. We delete from the set $\mathcal{A}$ all configurations that contain some set $\A_1$ or $\A_2$ that does not appear in any configuration in $\nodeconst_{\Pi''}$, then we set $\edgeconst_{\Pi''}=\mathcal{A}$.
\end{itemize}
Again, we can (and will) assume that all labels that occur neither in $\nodeconst_{\Pi''}$, nor in $\edgeconst_{\Pi''}$, are also removed from $\Sigma_{\Pi''}$.

As $\Pi'$ is uniquely defined by $\Pi$, we can define a function $\re(\cdot)$ that takes $\Pi$ as input and returns $\Pi'$.
Similarly, as $\Pi''$ is uniquely defined by $\Pi'$, we can define a function $\rere(\cdot)$ that takes $\Pi'$ as input and returns $\Pi''$.
With these definitions, we have $\Pi'' = \rere(\re(\Pi))$.
Note that $\rere(\cdot)$ can take any problem as input that is of the form specified by round elimination---it is not necessary that the input problem has been obtained by applying $\re(\cdot)$ to some problem.

Now \cite[Theorem 4.3]{Brandt2019} provides the following relation between a problem $\Pi$ and $\rere(\re(\Pi))$ that provides the fundament for automatic round elimination.
For technical reasons, the theorem itself only holds in the port numbering model, but we will show later how to lift the obtained bounds to the LOCAL model.

\begin{theorem}[\cite{Brandt2019}, rephrased]\label{thm:sebastien}
	Let $T > 0$. Consider a class $\mathcal G$ of graphs\footnote{Technically, the class of graphs has to satisfy a certain property, called $t$-independence in \cite{Brandt2019}, but since it is straightforward to check that our considered class of $\Delta$-regular high-girth graphs satisfies this property, we omit this detail.} with girth at least $2 T+2$, and some locally checkable problem $\Pi$. 
	Then, there exists an algorithm that solves problem $\Pi$ on $\mathcal G$ in $T$ rounds if and only if there exists an algorithm that solves problem $\rere(\re(\Pi))$ in $T-1$ rounds.
\end{theorem}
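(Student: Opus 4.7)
The plan is to prove each direction of the equivalence separately. The backward direction---if $\rere(\re(\Pi))$ has a $(T-1)$-round algorithm then $\Pi$ has a $T$-round algorithm---is the easier one. Given a $(T-1)$-round algorithm for $\Pi''$, each node outputs a label in $\Sigma_{\Pi''} \subseteq 2^{\Sigma_{\Pi'}}$ on each incident edge. In one additional communication round, the two endpoints of each edge jointly select a single label in $\Sigma_{\Pi'}$: since $\edgeconst_{\Pi''}$ is existentially quantified at the $\Sigma_{\Pi'}$-level, such a selection is possible and, together with the existential $\nodeconst_{\Pi''}$, yields a $\Pi'$-valid labeling. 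Each node then uses the existential $\nodeconst_{\Pi'}$ to pick a single label in $\Sigma_\Pi$ on each incident edge, and the universal $\edgeconst_{\Pi'}$ guarantees that $\edgeconst_\Pi$ holds regardless of what the neighbor independently picks. The total round count is $(T-1)+1 = T$.

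The forward direction is the substantive one. Given a deterministic $T$-round algorithm $\mathcal{A}$ for $\Pi$ in the PN model on $\mathcal G$, the plan is to construct a $(T-1)$-round algorithm for $\Pi''$ in two sub-stages mirroring the definitions of $\Pi' = \re(\Pi)$ and $\Pi'' = \rere(\Pi')$. In the first sub-stage, for each directed edge $e=(u,v)$ associate to $u$ the set $A_u(e) \subseteq \Sigma_\Pi$ of all labels $u$ would output on $e$ under $\mathcal{A}$ as the ``outer shell'' of $u$'s $T$-ball beyond the edge $e$ ranges over all PN-legal extensions of $u$'s $(T-1)$-ball. This set is determined by $u$'s $(T-1)$-ball, hence computable in $T-1$ rounds. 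Under the girth-$(2T+2)$ hypothesis the extensions beyond $e$ from $u$'s side are independent of those from $v$'s side, so any pair $a \in A_u(e)$, $a' \in A_v(e)$ can be realized simultaneously by a single graph in $\mathcal G$, giving $a\s a' \in \edgeconst_\Pi$. After enlarging $A_u(e), A_v(e)$ to maximal sets we obtain labels in $\Sigma_{\Pi'}$ satisfying the universal $\edgeconst_{\Pi'}$; the existential $\nodeconst_{\Pi'}$ is witnessed at each node by the labels $\mathcal{A}$ actually produces on the true graph.

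In the second sub-stage, a symmetric construction flips the quantifier on the node side. Define $B_u(e) \subseteq \Sigma_{\Pi'}$ as the set of $A_u(e)$-values obtainable as $u$'s $(T-1)$-ball is varied over all PN-legal extensions of the ``edge-side view'' that is common to $u$ and $v$ across the edge $e$. After maximization, $B_u(e)$ satisfies the universal $\nodeconst_{\Pi''}$ by construction: every realization of $u$'s $(T-1)$-ball yields an $A_u(\cdot)$-tuple lying in $\nodeconst_{\Pi'}$, so every choice from the $B_u(\cdot)$-sets at $u$ does so too. The existential $\edgeconst_{\Pi''}$ holds because the pair $(A_u(e), A_v(e))$ arising from the true graph itself witnesses it. This produces a $(T-1)$-round algorithm for $\Pi''$.

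The main obstacle I anticipate is verifying that the girth hypothesis gives exactly the right independence needed to flip the quantifiers in both sub-stages: for any desired pair of witnesses on the two sides of an edge, a common realization within $\mathcal G$ must exist, and one must check that ranging over all such realizations does not create ``fake'' labels unreachable in any true PN graph. A further subtlety lies in the maximality deletions in the definitions of $\edgeconst_{\Pi'}$ and $\nodeconst_{\Pi''}$: we must ensure the constructed sets can always be enlarged to maximal ones without breaking correctness, and that labels removed from $\Sigma_{\Pi'}$ or $\Sigma_{\Pi''}$ (because they survive in no configuration) are never actually output. These bookkeeping points, together with a careful definition of ``view'' that correctly accounts for port numbers and edge orientations, form the technical heart of the proof.
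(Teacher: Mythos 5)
The paper does not supply a proof of this statement---it is quoted from \cite{Brandt2019} (Theorem 4.3)---so your attempt can only be judged on its own. Your backward direction is essentially correct, modulo one slip: $\nodeconst_{\Pi''}$ is \emph{universally}, not existentially, quantified over the inner labels, and that universality is exactly why the edge-level selection of one label in $\Sigma_{\Pi'}$ per edge automatically produces node configurations in $\nodeconst_{\Pi'}$; the existential constraint driving the edge-level choice is $\edgeconst_{\Pi''}$.

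The forward direction has a genuine gap, sitting precisely where you flag the danger: what is held fixed versus varied. You fix $u$'s $(T-1)$-ball, vary the rest of $u$'s $T$-ball to form $A_u(e)$, and conclude $A_u(e)$ is a $(T-1)$-round computation. With that definition the universal $\edgeconst_{\Pi'}$ fails. A witness $B_u$ for $a\in A_u(e)$ specifies, in particular, $u$'s depth-$T$ shell on $v$'s side---which under the girth bound \emph{is} $v$'s depth-$(T-1)$ shell on $v$'s side---and that set is held fixed to its actual value when defining $A_v(e)$. Nothing forces $B_u$ to agree with it (and symmetrically for $B_v$), so the two witnesses need not be jointly realizable, and $a \s a'\in\edgeconst_\Pi$ does not follow. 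The girth makes the balls trees but does not decouple $u$'s $T$-shell beyond $e$ from $v$'s $(T-1)$-ball; those are the same vertices. Note also that if your claim were correct you would have a $(T-1)$-round algorithm for $\Pi'$ itself, which is stronger than the theorem and generally false---$\Pi'$ only admits the intermediate ``$T-\tfrac12$''-round construction.

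The standard repair is to swap the roles in both stages, which is the half-round formalism of \cite{Brandt2019}. Define $A_u(e)$ over all $T$-balls of $u$ consistent with the \emph{edge-centered} $(T-1)$-ball of $e$ (the union of $u$'s and $v$'s $(T-1)$-balls); then $u$'s remaining freedom is its depth-$T$ shell on its own side, disjoint by girth from $v$'s remaining freedom, so any pair $(a,a')$ is jointly realizable and $\edgeconst_{\Pi'}$ holds universally---at the cost that $A_u(e)$ depends on information $u$ does not have after $T-1$ rounds. Then define $B_u(e)$ to range over all edge-centered $(T-1)$-balls of $e$ extending $u$'s \emph{fixed} $(T-1)$-ball; you instead varied $u$'s $(T-1)$-ball over extensions of a smaller ``common'' view, which makes $B_u(e)$ a function of something $u$ does not know. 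With the correct orientation, $B_u(e)$ depends only on $u$'s $(T-1)$-ball, and the universal $\nodeconst_{\Pi''}$ again needs the girth: $\Delta$ independently chosen edge-centered $(T-1)$-balls, one per incident edge and all extending $u$'s $(T-1)$-ball, differ from it in pairwise disjoint annuli and are therefore jointly realized by a single $T$-ball of $u$, whose $\mathcal A$-output witnesses $\nodeconst_{\Pi'}$. Your argument for this step---``every realization of $u$'s $(T-1)$-ball yields a good tuple''---establishes this only for a single common realization, not for an arbitrary independent choice from $B_u(e_1)\times\cdots\times B_u(e_\Delta)$, which is what universality of $\nodeconst_{\Pi''}$ demands.
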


In more technical detail, for any pair $(n, \Delta)$, Theorem~\ref{thm:sebastien} holds for graph classes $\mathcal G = \mathcal G(n, \Delta)$ consisting of $n$-node graphs with maximum degree $\Delta$ and girth at least $T = T(n, \Delta) > 0$.
However, for simplicity, we will usually omit the dependency on $n$ and $\Delta$.
We note that Theorem~\ref{thm:sebastien} also holds if we add a proper input vertex coloring to the setting.
Moreover, we will assume that the input graphs satisfy the given girth requirement whenever we apply Theorem~\ref{thm:sebastien}.
In Section~\ref{sec:liftlocal}, we will see how this requirement affects the obtained bounds.

An interesting fact that we have not seen mentioned in \cite{Brandt2019} (or any other work) is that the equivalence breaks only \emph{in one direction} when we go from high-girth graphs to general graphs:
it is straightforward to go through the proof of \cite[Theorem 4.3]{Brandt2019} and check that even on general graphs, $\Pi$ can be solved in $1$ round given a solution to $\rere(\re(\Pi))$.\footnote{Roughly speaking, the output of a node in a correct solution for $\rere(\re(\Pi))$ is a collection of sets of sets of output labels for $\Pi$, and a node can infer a correct solution for $\Pi$ from it by collecting the outputs of the adjacent nodes and then choosing output labels from the seen sets in a certain manner. As the topology of the graph does not enter the argumentation, the obtained $1$-round transformation holds on general graphs.}
In other words, $\rere(\re(\Pi))$ is \emph{at most} one round faster solvable than $\Pi$.
Hence, any upper bound achieved via automatic round elimination holds on general graphs, both in the port numbering model and the LOCAL model (as the latter is a stronger model).
In particular, this is true for our upper bounds for ruling sets.

\paragraph{Example: sinkless orientation}
Let $\Pi$ be the sinkless orientation problem, where the goal is to consistently orient edges such that no node is a sink. In this example, we will see how to encode sinkless orientation in the round elimination framework, and we will see what the problems $\re(\Pi)$ and $\rere(\re(\Pi))$ look like. 

The sinkless orientation problem can be encoded using two labels. So, let the set of labels be $\Sigma_\Pi=\{\I,\O\}$. If a node outputs label $\I$ in one the endpoint of one of the incident edges, it can be interpreted as that edge being incoming. Similarly, if the label is $\O$, that would indicate an outgoing edge. Hence, on the node side, we want that each node has the label $\O$ on at least one of its incident edges. On the edge side, we want each edge to be consistently oriented, hence if in one endpoint it has the label $\I$, in the other endpoint there must be the label $\O$, and vice versa. More precisely, our problem $\Pi$ is the following.

\begin{equation*}
\begin{aligned}
\nodeconst_\Pi&\text{:}\quad \O\s [\I\O]^{\Delta-1} \\
\edgeconst_\Pi&\text{:}\quad \I\s\O
\end{aligned}
\end{equation*}

Let $\Pi'=\re(\Pi)$. By definition, $\Sigma_{\Pi'}=2^{\Sigma_{\Pi}}=\{\{\I\}, \{\O\}, \{\I,\O\} \}$. Next we should define the edge constraint, where we want all configurations of the form $\S_1\s \S_2$ such that, for any choice in $\S_1$ and for any choice in $\S_2$ we obtain a configuration in $\edgeconst_{\Pi}$. Also, we want to eliminate all non-maximal configurations. Before going to that, for simplicity of the presentation, in order to avoid writing set of sets, let us rename the labels of $\Sigma_{\Pi'}$ in the following way: $\{\I\} \mapsto \bI$, $\{\O\} \mapsto \bO$, and $\{\I,\O\} \mapsto \bIO$. Now we can define the edge constraint. We must satisfy the universal quantification specified in the definition of $\edgeconst_{\Pi'}$, which means that we must forbid configurations that may result in $\I\s\I$ or $\O\s\O$, hence $\edgeconst_{\Pi'}: \bI\s\bO$. The node constrant must satisfy an existential and all configurations must not use labels that do not appear in $\edgeconst_{\Pi'}$. In other words, we want to be able to pick at least one $\O$, hence something like $[\bO\bIO]\s[\bI\bO\bIO]^{\Delta-1}$ would do, but since $\bIO$ does not appear in $\edgeconst_{\Pi'}$, we get $\nodeconst_{\Pi'}: \bO\s[\bI\bO]^{\Delta-1}$. So, problem $\Pi'=\re(\Pi)$ is the following.

\begin{equation*}
\begin{aligned}
\edgeconst_{\Pi'}&\text{:}\quad \bI\s\bO \\
\nodeconst_{\Pi'}&\text{:}\quad \bO\s[\bI\bO]^{\Delta-1} 
\end{aligned}
\end{equation*}

Now we are ready to define problem $\Pi''=\rere(\re(\Pi))$ which, by Theorem \ref{thm:sebastien}, we know that is exactly one round faster solvable than the sinkless orientation problem. By definition $\Sigma_{\Pi''}=2^{\Sigma_{\Pi'}}=\{\{\bI\}, \{\bO\}, \{\bI,\bO\}\}$. Again, in order to avoid writing set of sets, let us rename the labels of $\Sigma_{\Pi''}$ as follows: $\{\bO\} \mapsto \O$, $\{\bI\} \mapsto \I'$ (as in ``the non-maximal set that contains the $\bI$ label''), and $\{\bI,\bO\} \mapsto \I$. We must first define the node constraint, that must satisfy a universal quantifier. We want to avoid that there is the label $\bI$ in each of the $\Delta$ positions, since in that case the configuration $\bI^\Delta$ would be possible, but it is not allowed in $\nodeconst_{\Pi'}$. The configuration that satisfies this condition is $\O\s[\I\I'\O]^{\Delta -1}$, and after removing the non-maximal sets, we have $\nodeconst_{\Pi''}: \O\s\I^{\Delta -1}$. For the edge constraint we must satisfy an existential, hence on one side we can have all labels that contain $\bI$, while on the other all labels that contain $\bO$. The configuration that satisfies this is $[\I\O][\I'\I]$, but since $\I'$ is not used in the set $\nodeconst_{\Pi''}$, we have that $\edgeconst_{\Pi''}: \I[\I\O]$. Hence, the problem $\Pi''=\rere(\re(\Pi))$ that is exactly one round faster solvable that the sinkless orientation one is:

\begin{equation}
\begin{aligned}
\nodeconst_{\Pi''}&\text{:}\quad \O\s\I^{\Delta -1}  \\
\edgeconst_{\Pi''}&\text{:}\quad \I\s[\I\O] 
\end{aligned}
\end{equation}

\paragraph{Relations between labels}
For computing $\re(\Pi)$ or $\rere(\Pi)$, given some problem $\Pi$, it will be very useful to relate the labels used in $\Pi$ to each other according to their ``usefulness" in satisfying the edge constraint $\edgeconst_\Pi$ (resp.\ the node constraint $\nodeconst_{\Pi}$).

Let $\A$ and $\B$ be labels from $\Sigma_\Pi$ with the following property: for each edge configuration in $\edgeconst_\Pi$ containing $\A$, replacing one occurrence of $\A$ in that configuration by $\B$ again results in a configuration in $\edgeconst_\Pi$.
Then we say that $\B$ is \emph{at least as strong as} $\A$ \emph{according to $\edgeconst_\Pi$} and, equivalently that $\A$ is \emph{at least as weak as} $\B$ according to $\edgeconst_\Pi$.
We may omit the reference constraint if it is clear from context.
Moreover, if $\B$ is at least as strong as $\A$, but $\A$ is not at least as strong as $\B$, we say that $\B$ is \emph{stronger than} $\A$, and $\A$ is \emph{weaker than} $\B$. For example, consider the aforementioned problem $\Pi''=\rere(\re(\Pi))$ where $\Pi$ is sinkless orientation. Recall that the edge constraints are $\I[\I\O]$. We can say that label $\I$ is stronger than label $\O$ (and equivalently, label $\O$ is weaker than label $\I$), since, for each edge configuration, replacing one occurrence of $\O$ with $\I$ results in a configuration that is still allowed.
We also define the analogous notions for node constraints.

It is helpful to illustrate the strengths of labels via diagrams.
The \emph{edge diagram} of a problem $\Pi$ is a directed graph where the nodes are the labels in $\Sigma_\Pi$ and we have an edge from some label $\A$ to some label $\B$ if $\B \neq \A$, $\B$ is at least as strong as $\A$, and there exists no label $\Z \in \Sigma_\Pi$ such that $\B$ is stronger than $\Z$ and $\Z$ is stronger than $\A$, all according to $\edgeconst_\Pi$.
The latter condition simply ensures that we only illustrate ``irreducible" strength relations, i.e., none that can be decomposed into ``smaller" strength relations.
We define the \emph{node diagram} of a problem $\Pi$ analogously, by considering the strengths of labels according to $\nodeconst_\Pi$.
For examples of such diagrams, see Figures~\ref{fig:mis} and~\ref{fig:Re-mis}.

Note that the definition of strength implies that the diagrams do not contain directed cycles of length greater than $2$, and cycles of length $2$ appear exactly between all pairs of labels of equal strength.
In particular, if there are no pairs $(\A, \B)$ of labels such that $\A$ is stronger than $\B$ and vice versa, the respective diagram will be a directed acyclic (not necessarily connected) graph.

\paragraph{Additional Notation}
While the above notions are already known from \cite{Brandt2019, Balliu2019, trulytight}, we now introduce some useful new notation.
For a set $\{ \A_1, \dots, \A_p \} \subseteq \Sigma_\Pi$ of labels, we denote by $\gen{\A_1, \dots, \A_p}$ the set of all labels from $\Sigma_\Pi$ that are at least as strong as at least one of the $\A_i$.
In other words, we can read $\gen{\A_1, \dots, \A_p}$ off of the respective diagram by collecting each $\A_i$ together with all its successors.
Technically, for the definition of $\gen{}$, we need to specify whether the label strengths are considered w.r.t.\ $\nodeconst_\Pi$ or $\edgeconst_\Pi$.
However, whenever we consider $\gen{}$, the labels that $\gen{}$ takes as arguments will either come from (the alphabet of) a problem that we (are about to) apply the function $\re(\cdot)$ to, or a problem that we apply the function $\rere(\cdot)$ to.
In the former case, we will always consider $\gen{}$ w.r.t.\ the \emph{edge} constraint of the considered problem, and in the latter case w.r.t\ the \emph{node} constraint.
In particular, in the context of computing $\rere(\re(\Pi))$ for some problem $\Pi$, we will consider expressions such as $\gen{\gen{\A}}$ (where $\A \in \Sigma_\Pi$), which represents a set of sets of labels from $\Sigma_\Pi$; here the inner $\gen{}$ is taken w.r.t.\ the edge constraint of $\Pi$, and the outer $\gen{}$ w.r.t.\ the node constraint of $\re(\Pi)$.

\paragraph{Example} Let $\Pi$ be the maximal independent set problem, which we can express in the round elimination formalism as follows.

\begin{equation*}
\begin{aligned}
\nodeconst_\Pi&\text{:}\quad \M^{\Delta} \\
 &\text{ }\quad \P\s\U^{\Delta-1}\\
\edgeconst_\Pi&\text{:}\quad \M\s[\P\U]\\
 &\text{ }\quad \U\s\U
\end{aligned}
\end{equation*}

A node in the MIS outputs $\M^{\Delta}$. Otherwise, if a node is not in the MIS it must output $\P$ on one incident edge and $\U$ on all the others. The edge constraint implies that a node in the MIS can accept a pointer label $\P$ or label $\U$. Also, since $\P$ is only compatible with $\M$, each node not in the MIS can use label $\P$ only for pointing to a neighbor in the MIS. Moreover $\U\s\U$ is allowed since two nodes not in the MIS may be neighbors. The relations between the strengths of the labels in $\Sigma_\Pi$ is shown in the edge diagram of $\Pi$, given in Figure \ref{fig:mis}. Expressions in $\gen{}$ notation can be easily read from the edge diagram; for instance, we have $\gen{\M} = \{\M\}$, $\gen{\P} = \{\P, \U\}$, $\gen{\U} = \{\U\}$ $\gen{\M,\P} = \{\M, \P, \U\}$.

\begin{figure}[h]
	\centering
	\includegraphics[width=0.17\textwidth]{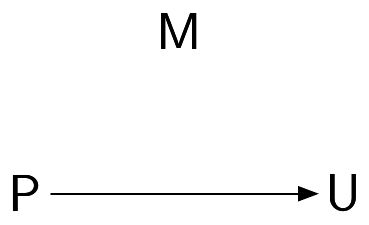}
	\caption{Relations between the strengths of the labels of the MIS problem: label $\U$ is stronger than $\P$, while there is no relation between label $\M$ and labels $\P$ or $\U$.}
	\label{fig:mis}
\end{figure}

Now, let $\Pi' = \re(\Pi)$, and consider the following mapping: $\{\U\} \mapsto \bU$, $\{\M\} \mapsto \bM$, $\{\M,\U\} \mapsto \bMU$, $\{\P,\U\} \mapsto \bPU$. The edge and node constraint of $\Pi'$ are as follows.

\begin{equation*}
\begin{aligned}
\edgeconst_{\Pi'}&\text{:}\quad \bU\s\bMU \\
&\text{ }\quad \bM\s\bPU\\
\nodeconst_{\Pi'}&\text{:}\quad [\bM\bMU]^\Delta\\
&\text{ }\quad \bPU\s[\bU\bMU\bPU]^{\Delta-1}
\end{aligned}
\end{equation*}
The node diagram of $\Pi'$, representing the relations of the strengths of the labels in $\Sigma_{\Pi'}$, is depicted in Figure \ref{fig:Re-mis}. Regarding the $\gen\gen{}$ notation, we obtain, for instance, that $\gen{\gen{\M}} = \gen{\bM} = \{\bM,\bMU\} = \{\{\M\},\{\M,\U\}\}$, $\gen{\gen{\U}} = \gen{\bU} = \{\bU,\bMU,\bPU\} = \{\{\U\},\{\M,\U\}, \{\P,\U\}\}$, $\gen{\gen{\M,\U}} = \gen{\bMU} = \{\bMU\} = \{\{\M,\U\}\}$, $\gen{\gen{\P}} = \gen{\bPU} = \{\bPU\} = \{\{\P,\U\}\}$.

\begin{figure}[h]
	\centering
	\includegraphics[width=0.17\textwidth]{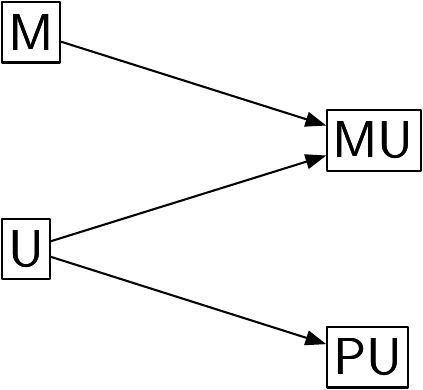}
	\caption{Relations between the strengths of the labels of problem $\re(\Pi)$, where $\Pi$ is the MIS problem; the diagram shows that label $\bMU$ is stronger than both labels $\bM$ and $\bU$, also label $\bPU$ is stronger than label $\bU$.}
	\label{fig:Re-mis}
\end{figure}

We call a set $S = \{ \A_1, \dots, \A_p \} \subseteq \Sigma_\Pi$ \emph{right-closed} if $S = \gen{\A_1, \dots, \A_p}$.
In other words, $S$ is right-closed if and only if for each label $\A_i$ contained in $S$ also all successors of $\A_i$ in the respective diagram are contained in $S$.
The definitions of $\re(\cdot)$ and $\rere(\cdot)$, in particular the removal of non-maximal configurations in the definitions, imply the following observation.

\begin{observation}\label{obs:rcs}
	Consider an arbitrary collection of labels $\A_1, \dots, \A_p \in \Sigma_\Pi$.
	If $\{ \A_1, \dots, \A_p \} \in \Sigma_{\re(\Pi)}$, then the set $\{ \A_1, \dots, \A_p \}$ is right-closed (w.r.t.\ $\edgeconst_\Pi$).
	If $\{ \A_1, \dots, \A_p \} \in \Sigma_{\rere(\Pi)}$, then the set $\{ \A_1, \dots, \A_p \}$ is right-closed (w.r.t.\ $\nodeconst_\Pi$).
\end{observation}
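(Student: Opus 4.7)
Plan: Both halves of the observation will be proved by the same two-step strategy, reflecting the symmetry of the definitions of $\re(\cdot)$ and $\rere(\cdot)$: one puts the maximality operation on the edge constraint, the other puts it on the node constraint. First I would argue that every surviving label in the new alphabet must in fact appear in the constraint which is subject to the maximality rule; then I would use this maximality together with the notion of label strength to force right-closedness.

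For the first statement, let $S = \{\A_1,\dots,\A_p\} \in \Sigma_{\re(\Pi)}$. By the cleanup rule in the definition of $\nodeconst_{\re(\Pi)}$ (which forbids node configurations using labels that do not appear in $\edgeconst_{\re(\Pi)}$), together with the final cleanup on $\Sigma_{\re(\Pi)}$, the label $S$ must occur in some edge configuration $\mathcal{A}_1 \s \mathcal{A}_2 \in \edgeconst_{\re(\Pi)}$; say $\mathcal{A}_1 = S$. Pick any $\B \in \gen{S}$ with respect to $\edgeconst_\Pi$; it suffices to show $\B \in S$. Consider the enlarged pair $(S \cup \{\B\}) \s \mathcal{A}_2$: for it to lie in the pre-cleanup family $\mathcal{A}$, beyond what is already known from $S \s \mathcal{A}_2 \in \mathcal{A}$, one only needs $\B \s a_2 \in \edgeconst_\Pi$ for every $a_2 \in \mathcal{A}_2$. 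Since $\B$ is at least as strong as some $\A_j \in S$, and $\A_j \s a_2 \in \edgeconst_\Pi$ by membership of $S \s \mathcal{A}_2$ in $\mathcal{A}$, the definition of strength delivers exactly this. Hence $(S \cup \{\B\}) \s \mathcal{A}_2 \in \mathcal{A}$, contradicting the maximality of $\mathcal{A}_1 \s \mathcal{A}_2$ unless $\B \in S$. Combining $\gen{S} \subseteq S$ with the trivial reverse inclusion yields $S = \gen{S}$.

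The second statement follows by the mirror argument, with the roles of nodes and edges swapped. Any $S \in \Sigma_{\rere(\Pi)}$ must, by the cleanup rule in the definition of $\edgeconst_{\rere(\Pi)}$ together with the final cleanup, appear as some $\B_i$ in a node configuration $\B_1 \s \dots \s \B_\Delta \in \nodeconst_{\rere(\Pi)}$. For any $\B \in \gen{S}$ with respect to $\nodeconst_\Pi$, enlarging $\B_i$ to $\B_i \cup \{\B\}$ preserves the universal condition over $\nodeconst_\Pi$: any tuple that picks $\B$ at position $i$ (and arbitrary labels at the other positions) differs from an already-admissible tuple (which picks some $\A_j \in S$ at position $i$ with the same choices elsewhere) by a single strength substitution, and hence still lies in $\nodeconst_\Pi$. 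Maximality of $\B_1 \s \dots \s \B_\Delta$ then forces $\B \in S$, proving right-closedness. The main obstacle here is purely notational: one must keep careful track of whether $\gen{\cdot}$ refers to $\edgeconst_\Pi$ or $\nodeconst_\Pi$ and distinguish labels in $\Sigma_\Pi$ from labels in $\Sigma_{\re(\Pi)}$ or $\Sigma_{\rere(\Pi)}$, which are themselves sets of labels from $\Sigma_\Pi$; once this bookkeeping is in place, the claim is a direct unfolding of the definitions of $\re(\cdot)$, $\rere(\cdot)$, and label strength.
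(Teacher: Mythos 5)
Your proof is correct and follows essentially the same strategy as the paper. Both arguments locate $S$ in some configuration of the constraint that is subject to the maximality rule ($\edgeconst_{\re(\Pi)}$ in the first case, $\nodeconst_{\rere(\Pi)}$ in the second), then observe that any $\B$ at least as strong as some member of $S$ could be adjoined to $S$ without violating the universal compatibility condition—the definition of strength supplies exactly the missing pairwise checks—so maximality forces $\B \in S$. The paper proves only the first statement and appeals to symmetry; your explicit mirroring for the $\rere(\cdot)$ case is faithful, and your spelling-out of why $S$ must appear in some edge configuration (via the node-constraint cleanup rule plus the final alphabet cleanup) is the same justification the paper compresses into "by the definition of $\re(\cdot)$".
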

\begin{proof}
	For reasons of symmetry, we only need to prove the first statement.
	Let $\S = \{ \A_1, \dots, \A_p \}$ and assume that $\S \in \Sigma_{\re(\Pi)}$.
	Then there must be an edge configuration in $\edgeconst_{\re(\Pi)}$ containing $\S$, by the definition of $\re(\cdot)$.
	Consider an arbitrary label $\B \in \Sigma_\Pi$ that is at least as strong as at least one $\A_i$ w.r.t.\  $\edgeconst_\Pi$.
	By the definition of strength, and the definition of $\edgeconst_{\re(\Pi)}$ (or $\edgeconst_{\Pi'}$), adding label $\B$ to set $\S$ in the considered edge configuration results in a configuration that is still contained in $\edgeconst_{\re(\Pi)}$.
	Since $\edgeconst_{\re(\Pi)}$ does not contain any non-maximal configurations, this implies that $\B$ was already contained in $\S$, i.e., $\B = \A_i$ for some $i$.
	It follows that $\S$ is right-closed (w.r.t.\ $\edgeconst_\Pi$).
\end{proof}

Observation~\ref{obs:rcs} enables us to prove the following observation.

\begin{observation}\label{obs:subsetarrow}
	Let $\U, \W \in  \Sigma_{\re(\Pi)}$ be two sets satisfying $\U \subseteq \W$.
	Then $\W$ is at least as strong as $\U$ according to $\nodeconst_{\re(\Pi)}$.
	In particular, for any label $\A \in \Sigma_\Pi$ such that $\gen{\A} \in \Sigma_{\re(\Pi)}$, every set $\X \in \Sigma_{\re(\Pi)}$ containing $\A$ is contained in $\gen{\gen{\A}}$.
	
	Analogous statements hold for $\rere(\cdot)$ instead of $\re(\cdot)$.
\end{observation}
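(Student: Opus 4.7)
My plan is to unpack the definition of ``at least as strong'' according to $\nodeconst_{\re(\Pi)}$ directly. Take any node configuration $\B_1 \s \B_2 \s \ldots \s \B_\Delta \in \nodeconst_{\re(\Pi)}$ with some $\B_i = \U$; the goal is to verify that replacing this occurrence of $\U$ by $\W$ produces a configuration that is still in $\nodeconst_{\re(\Pi)}$. By the existential clause in the definition of $\nodeconst_{\re(\cdot)}$, there exists a tuple $(\b_1,\ldots,\b_\Delta) \in \B_1 \times \ldots \times \B_\Delta$ with $\b_1 \s \ldots \s \b_\Delta \in \nodeconst_\Pi$. Since $\b_i \in \U \subseteq \W$, the exact same tuple witnesses the existential for the modified configuration, so the existential clause is satisfied after substitution.

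The only subtlety is the pruning step in the definition of $\nodeconst_{\re(\Pi)}$, which discards any configuration using a set $\B_i$ that does not appear in any configuration of $\edgeconst_{\re(\Pi)}$. Here one simply invokes the standing assumption that labels not occurring in either constraint are removed from $\Sigma_{\re(\Pi)}$, together with the hypothesis $\W \in \Sigma_{\re(\Pi)}$: these together force $\W$ to appear in some edge configuration in $\edgeconst_{\re(\Pi)}$, so the modified node configuration survives pruning. Hence $\W$ is at least as strong as $\U$ w.r.t.\ $\nodeconst_{\re(\Pi)}$, which is the first claim.

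For the ``in particular'' part, suppose $\A \in \Sigma_\Pi$ with $\gen{\A} \in \Sigma_{\re(\Pi)}$, and let $\X \in \Sigma_{\re(\Pi)}$ contain $\A$. By \lemmaref{obs:rcs}, $\X$ is right-closed w.r.t.\ $\edgeconst_\Pi$, so $\A \in \X$ forces $\gen{\A} \subseteq \X$. Applying the first part with $\U \coloneqq \gen{\A}$ and $\W \coloneqq \X$ yields that $\X$ is at least as strong as $\gen{\A}$ w.r.t.\ $\nodeconst_{\re(\Pi)}$, which is precisely the condition for $\X \in \gen{\gen{\A}}$.

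The analogous statement for $\rere(\cdot)$ follows by the exact same argument with the roles of node and edge constraints swapped: in the construction of $\rere(\cdot)$, the \emph{edge} constraint carries the existential clause and undergoes the subset-enlargement step that is preserved by $\U \subseteq \W$, while the node constraint plays the role of the pruning filter. I expect the main trap to be purely notational: one must remember that ``stronger according to $\nodeconst_{\re(\Pi)}$'' is defined by \emph{replacement}, not by subset relations of configurations, and that the single non-obvious hypothesis in the argument is ensuring $\W$ itself survives alphabet pruning, which is automatic from $\W \in \Sigma_{\re(\Pi)}$.
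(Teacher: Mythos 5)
Your proposal is correct and matches the paper's proof in structure: the first claim follows from unpacking the existential quantifier in the definition of $\nodeconst_{\re(\Pi)}$ and noting that any witnessing tuple for $\U$ is also a witnessing tuple for $\W \supseteq \U$; the ``in particular'' part follows from Observation~\ref{obs:rcs} exactly as you describe; and the $\rere(\cdot)$ case is symmetric with node/edge roles swapped. The paper compresses the first step to ``the definition of $\re(\cdot)$ immediately implies\dots'', whereas you spell out both the existential clause and the pruning check — the latter being a useful extra sentence, though to be fully airtight one should note that $\W \in \Sigma_{\re(\Pi)}$ means $\W$ appears in $\edgeconst_{\re(\Pi)}$ or $\nodeconst_{\re(\Pi)}$, and that appearing in a surviving $\nodeconst_{\re(\Pi)}$ configuration already forces appearing in $\edgeconst_{\re(\Pi)}$ by the pruning rule itself.
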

\begin{proof}
	For reasons of symmetry, we only need to prove the statements for $\re(\cdot)$.
	The definition of $\re(\cdot)$ immediately implies that replacing $\U$ by $\W$ in any configuration contained in $\nodeconst_{\re(\Pi)}$ results in a configuration that is also contained in $\nodeconst_{\re(\Pi)}$.
	Hence, $\W$ is at least as strong as $\U$ according to $\nodeconst_{\re(\Pi)}$.
	
	Now, let $\A$ and $\X$ be as described in the lemma.
	By Observation~\ref{obs:rcs}, the set $\X$ is right-closed w.r.t.\ $\edgeconst_{\Pi}$, which, by the definition of $\gen{}$, implies $\gen{\A} \subseteq \X$, since $\X$ contains $\A$.
	It follows that $\X$ is at least as strong as $\gen{\A}$ according to $\nodeconst_{\re(\Pi)}$, by the first part of Observation~\ref{obs:subsetarrow}.
	Hence, $\X \in \gen{\gen{\A}}$.	
\end{proof}

Moreover, for a set $S = \{ \A_1, \dots, \A_p \} \subseteq \Sigma_\Pi$ of labels, we denote by $\dis(S)$ the disjunction $[\A_1 \dots \A_p]$.
For instance, $\dis(\gen{\A})$ is the disjunction of all labels that are at least as strong as $\A$.

\paragraph{Generalizing to non-regular graphs}
As mentioned before, in this paper we will restrict attention to regular graphs.
Since we are proving lower bounds, this does not affect the generality of our results; however, for the upper bound we prove along the way, some additional step is required to lift the bound to general graphs.
In its full generality, the round elimination framework can also be applied to non-regular graphs, and the arguments in our upper bound would essentially remain the same; however, describing the framework formally is somewhat cumbersome.
Hence, we will choose a different route to show that our upper bound holds on general graphs: we will present a ``human-understandable" version of the algorithm obtained by round elimination for which it will be easy to check that its correctness is not affected by having nodes of different degrees.

\subsection{Roadmap}
We will start, in Section \ref{sec:problems}, by defining a family of problems $\Pi_{\Delta,\beta}(v,x)$, for which we will later show how it relates to the $(2,\beta)$-ruling set problem. The parameter $v = [v_0, \ldots, v_\beta]$ is a list of non-negative numbers, that can be interpreted as a number of colors. Intuitively, the problem $\Pi_{\Delta,\beta}(v,x)$ can be solved in $0$ rounds if we are given some vertex coloring with $\size(v) \coloneqq \sum_{i=0}^{\beta} v_i$ colors. The parameter $x$ is some relaxation parameter: we will allow nodes to violate edge constraints on at most $x$ of their incident edges.

In Sections \ref{sec:firstspeedup}, \ref{sec:ub}, and \ref{sec:lb}, we will use the round elimination theorem to relate problems of this family. In Section \ref{sec:firstspeedup}, we will compute the problem that we obtain by applying our operator $\re(\cdot)$ to $\Pi_{\Delta,\beta}(v,x)$.
In Section \ref{sec:ub} we will prove upper bounds for the $(2,\beta)$-ruling set problem. We will consider a subset of the problems of the family, that is, those where parameter $x$ is set to be $0$. We will first show that $\rere(\Pi'_{\Delta,\beta}(v,0))$ is at least as easy as some other problem of the family, that is $\Pi_{\Delta,\beta}(v',0)$, where $v'$ is the inclusive prefix sum of $v$ (i.e., $v'_i = \sum_{j\le i} v_j$). The round elimination theorem will imply that, given a solution for $\Pi_{\Delta,\beta}(v',0)$, we can obtain a solution for $\Pi_{\Delta,\beta}(v,0)$ in at most one round of communication. We will finally combine multiple steps of such reasoning to obtain upper bounds: we will show how parameter $v$ evolves over multiple steps. Crucially, a solution for $\Pi_{\Delta,\beta}([1,0,\ldots,0],0)$ will directly imply a solution for the $(2,\beta)$-ruling set problem, and by repeatedly applying the round elimination theorem we will obtain some problem $\Pi_{\Delta,\beta}(v',0)$ where $\size(v')$ is at least as large as the number of colors in the given vertex coloring. We will first prove an upper bound on the number of steps required to obtain such a problem, thereby giving an upper bound on the time complexity of the algorithm. Then, we will provide a human-understandable version of the round-elimination-generated algorithm, in order to argue that this algorithm does not only work on regular graphs, but on all graphs.

In Section \ref{sec:lb}, we will prove lower bounds for the $(2,\beta)$-ruling set problem. The main idea here will be to show that, by increasing parameter $x$, we can essentially relate the problems of the family in the same way as we do for the upper bounds. That is, we can get the same evolution of parameter $v$ as in the upper bound, at the price of increasing parameter $x$. Essentially, this will allow us to use the ideas obtained from the upper bound to get a lower bound. We will show in Section \ref{sec:liftlocal} how to lift the obtained lower bounds from the port numbering model to the LOCAL model.

\section{The problem family}\label{sec:problems}
\subsection{Problem definition}
In this section, we define a family of problems $\Pi_{\Delta,\beta}(v,x)$, that we will use to prove lower and upper bounds for the $(2,\beta)$-ruling set problem on graphs of maximum degree $\Delta$. The parameter $v = [v_0,\dotsc,v_\beta]$ is a list of non-negative integers, and the parameter $x$ satisfies $0\le x \le \Delta$ (while proving upper bounds, we will actually only consider the case where $x=0$). Intuitively, $v$ represents a list of color \emph{groups}, where each $v_i$ represents the number of colors in that group, while $x$ represents some relaxation parameter we will refer to as the number of wildcards. As we will see, if we start from a problem in this family, and we increase the value of $x$, or we increase the value of $v_i$ for some $i$, we will get a problem that is at least as easy as the one we started from. More precisely, given a solution for the starting problem, we can use it to solve the new problem in $0$ rounds of communication.

The high-level idea of the construction of the problem family is that we have \emph{colors} and \emph{pointers}, and nodes can either output a color (satisfying the usual constraints of the vertex coloring problem), or a pointer. Moreover, we have $\beta+1$ \emph{groups} called group $0$ to group $\beta$, and each color and each pointer belongs to exactly one of these groups. More precisely, there are exactly $v_i$ colors in group $i$, and there is exactly one pointer in each group except group $0$ (which contains no pointer). A pointer can only point to a node outputting a pointer, or a color, of a lower group. An example of a correct solution is given in Figure \ref{fig:problem-family}. Moreover, each node can label at most $x$ of its incident edges with a so-called \emph{wildcard}. If an edge is labeled with a wildcard by one of its endpoints, the resulting output label pair on the edge is correct by definition (i.e., it is an edge configuration listed in the edge constraint) regardless of the label the other endpoint outputs on the edge.

\begin{figure}[h]
	\centering
	\includegraphics[width=0.7\textwidth]{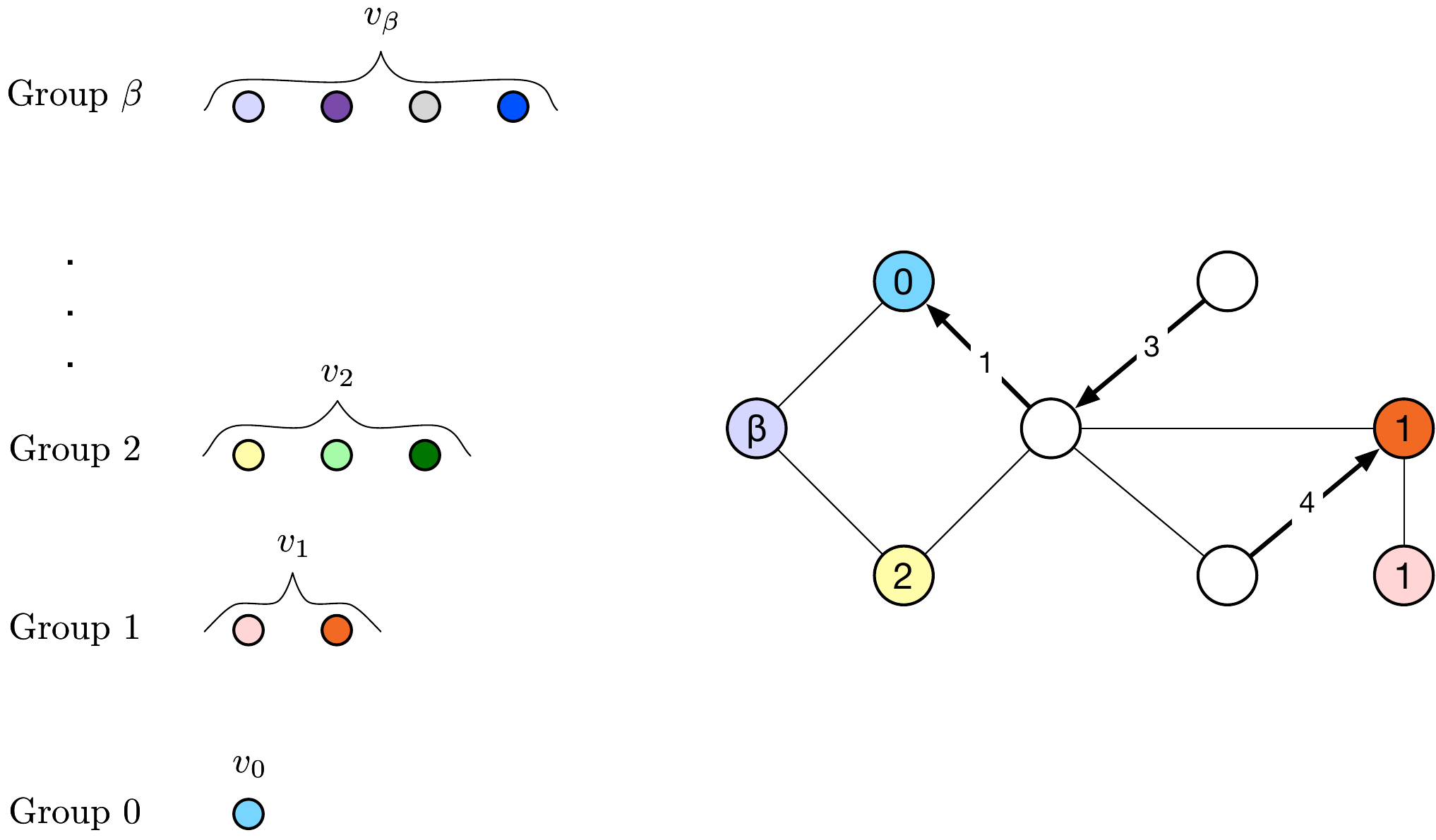}
	\caption{An example of a problem with parameters $[v_0, v_1, \dotsc, v_\beta]$ and $x=0$. Each node in the graph outputs either a color of some group, or a pointer pointing to a color or a pointer of a strictly smaller group. Neighboring nodes are not allowed to output the same color, but they can output colors belonging to the same group.}
	\label{fig:problem-family}
\end{figure}

The $(2,\beta)$-ruling set problem is the special case where we allow only $1$ color, i.e., $v_0=1$ and $v_i=0$ for all $i>0$, $\beta$ pointers, and no wildcards, i.e., $x=0$. In fact, the nodes in the ruling set will be exactly the nodes that output the color (note that since the ruling set nodes form an independent set, the coloring constraints are satisfied), and we allow the other nodes to point using pointers of different groups, depending on the distance they have from a node in the ruling set. We will later show that, while a solution for $\Pi_{\Delta,\beta}([1,0,\ldots,0],0)$ can be converted in $0$ rounds to a solution for the $(2,\beta)$-ruling set problem, we may need up to $\beta$ rounds to do the converse (and we will have to take this in consideration later when determining the actual lower bounds).

We define $\size(v) = v_0 + \dots + v_\beta$.
If we increase the number of colors in $\Pi_{\Delta,\beta}(v,x)$, i.e., if we increase $\size(v)$, the problem becomes easier: once we reach, for example, the case where $\size(v) = \Omega(\Delta^2)$, we have a problem that can be solved in $O(\log^* n)$ rounds in the LOCAL model, since in this model a graph can be colored in $O(\log^* n)$ rounds with $O(\Delta^2)$ colors \cite{Linial1992}. Also, by letting parameter $x$ grow we get an easier problem: in the extreme case of $x=\Delta$ we have a problem that is $0$-round solvable, since we can output wildcards everywhere.

\paragraph{Labels}We now formally define the set $\Sigma_{\Delta,\beta}(v,x)$ of labels of the problem $\Pi_{\Delta,\beta}(v,x)$. Let $\Sigma_{\Delta,\beta}(v,x) = \mathcal{P} \cup \mathcal{C}\cup \mathcal{X}$, where
\begin{itemize}
	\item $\mathcal{P}=\{ \A_i, \B_i ~|~ 1 \le i \le \beta \}$,
	\item $\mathcal{C}=\{ \C_{i,j} ~|~ 0 \le i \le \beta, 1 \le j \le v_i \}$,
	\item $\mathcal{X} = \{\X\}$ if $x>0$ and $\mathcal{X} = \{\}$ if $x=0$ (that is, if $x=0$ there is no label $\X$ in the set $\Sigma$).
\end{itemize}
These labels can be interpreted as follows:
\begin{itemize}
	\item The label $\X$ is a wildcard. Nodes write it on an edge to mark that edge as \emph{``don't care''}.
	\item The label $\A_i$ is a pointer, and the label $\B_i$ can be used to ``accept" pointers (of higher groups) that are output by neighboring nodes on connecting edges.
	\item The label $\C_{i,j}$ is the $j$-th color of group $i$.
\end{itemize}

\paragraph{Node constraint} We now define the node constraint $\nodeconst_{\Delta,\beta}(v,x)$, i.e., the set of allowed node configurations. The set $\nodeconst_{\Delta,\beta}(v,x)$ contains the following:
\begin{itemize}
	\item $\c^{\Delta-x} \s \X^x$, for each $\c \in \mathcal{C}$. That is, nodes output some color $\c \in \mathcal{C}$, marking $x$ incident edges as ``don't care''.
	\item $\A_i \s \B_i^{\Delta-1}$, for each $1 \le i \le \beta$. That is, nodes can output a pointer $\A_i$ on one incident edge. All other incident edges are marked as $\B_i$. We will see, when defining the edge constraint, that this will allow to accept pointers of higher groups. Intuitively, a node outputting this configuration must be at distance at most $i$ from a node outputting a color (of some group $< i$). 
\end{itemize}

\paragraph{Edge constraint}  We now define the edge constraint $\edgeconst_{\Delta,\beta}(v,x)$. It contains the following edge configurations:
\begin{itemize}
	\item $\C_{i,j} \s \C_{i',j'}$ if $(i,j) \neq(i',j')$, for each $1 \le i,i' \le \beta$, $1 \le j \le v_i$, $1 \le j' \le v_{i'}$. That is, all colors are compatible with all other colors (except themselves).
	\item $\B_i \s \B_j$, for each $1 \le i,j \le \beta$. That is, all $\B$ labels are compatible with all other $\B$ labels (including themselves).
	\item $\A_j \s \B_i$, for each $1 \le i < j \le \beta$. That is, pointers can point to non-colored nodes of lower groups.
	\item $\B_i \s \C_{i',j}$, for each $1 \le i,i' \le \beta$, $1 \le j \le v_{i'}$. That is, all $\B$ labels are compatible with all colors.
	\item $\A_i \s \C_{i',j}$, for each $1 \le i' < i \le \beta$, $1 \le j \le v_{i'}$. That is,  pointers can point to colored nodes of lower groups.
	\item $\X \s \L$, for each $\L \in \Sigma$, if $x > 0$. That is, the wildcard $\X$ is compatible with all labels.
\end{itemize}

\subsection{From the problem family to ruling sets, and vice versa}\label{sec:equivalence}
We now discuss the relation between $\Pi_{\Delta,\beta}([1,0,\dotsc,0],0)$ and the $(2,\beta)$-ruling set problem. We argue that a solution for $\Pi_{\Delta,\beta}([1,0,\dotsc,0],0)$ can be turned in $0$ rounds into a solution for the $(2,\beta)$-ruling set problem, and that a solution for the $(2,\beta)$-ruling set problem can be turned in $\beta$ rounds into a solution for $\Pi_{\Delta,\beta}([1,0,\dotsc,0],0)$. In Section \ref{sec:lb} we will use this relation to transform a lower bound of $T$ rounds for $\Pi_{\Delta,\beta}([1,0,\dotsc,0],0)$ into a lower bound of $(T-\beta)$ rounds for the $(2,\beta)$-ruling set problem.

Let us start by showing how to turn a solution for $(2,\beta)$-ruling set into a solution for the $\Pi_{\Delta,\beta}([1,0,\dotsc,0],0)$ problem. Given a solution for the $(2,\beta)$-ruling set problem, proceed as follows. Nodes in the ruling set output $\C_{0,1}$ on each incident edge. Each node $v$ can find in $\beta$ rounds the closest node of the ruling set (breaking ties arbitrarily); let this distance be $d_v$, satisfying $1 \le d_v \le \beta$. Node $v$ outputs $\A_{d_v}$ on the incident edge contained in the shortest path to this closest ruling set node, and $\B_{d_v}$ on all the other incident edges. The node constraint of $\Pi_{\Delta,\beta}([1,0,\dotsc,0],0)$ is clearly satisfied. Moreover, by construction, no neighboring nodes are outputting $\C_{0,1}$ on the same edge, and since other nodes use their distance to the closest ruling set node to output pointers, also the edge constraint is satisfied.

Consider now a solution for the problem $\Pi_{\Delta,\beta}([1,0,\dotsc,0],0)$. Nodes are either labeled with the color $\C_{0,1}$, or with one of the $\beta$ configurations that contain a pointer. We put exactly the colored nodes in the ruling set. Since the configuration $\C_{0,1} \s \C_{0,1}$ is not contained in the edge constraint, the colored nodes form an independent set. Also, since the constraints of $\Pi_{\Delta,\beta}([1,0,\dotsc,0],0)$ guarantee that each node that outputs $\A_i \s \B_i^{\Delta-1}$ has a colored neighbor, or a neighbor that outputs $\A_j \s \B_j^{\Delta-1}$ with $j < i$, nodes that are not in the independent set are at distance at most $\beta$ from a node in the independent set.

\subsection{The idea behind this problem family}
While the definition of the problem family $\Pi_{\Delta,\beta}(v,x)$ may seem arbitrary, we argue that there is a \emph{natural} way to obtain it, at least for the case $x=0$, that is the following:
\begin{itemize}
	\item Start from a problem of the family (at the beginning, this means to start from the $(2,\beta)$-ruling set problem).
	\item Apply the round elimination theorem.
	\item Note that in the obtained problem there are some allowed configurations that directly correspond to the original $\A_i \s \B_i^{\Delta-1}$ allowed configurations. Keep these configurations.
	\item Note that in the obtained problem there are some allowed configurations that directly correspond to a coloring problem (configurations of the form $\C^\Delta$ such that label $\C$ is compatible with all the labels of the configurations of the same form, except itself). Keep these configurations.
	\item Discard everything else.
\end{itemize}
Essentially what we need to do is to keep the part of the problem that has some \emph{intuitive} meaning (that is, colors and pointers), and discard everything else.

In the upper bound section, we will prove how the color groups evolve at each step. Intuitively, by applying the round elimination theorem to the problem $\Pi_{\Delta,\beta}(v,0)$ (and by discarding some allowed configurations, thus by making the problem harder), we obtain the problem $\Pi_{\Delta,\beta}(v',0)$, where $v'$ is the inclusive prefix sum list of $v$. For example, the $(2,2)$-ruling set problem is equivalent to $\Pi_{\Delta,2}([1,0,0],0)$, and by applying the round elimination theorem we get a problem that is not harder than $\Pi_{\Delta,2}([1,1,1],0)$, and by repeating the same procedure we get $\Pi_{\Delta,2}([1,2,3],0)$, and then we get $\Pi_{\Delta,2}([1,3,6],0)$, and so on. This gives a quadratic growth in the number of colors, and we thus get an algorithm that, given some $c$ coloring can solve the $(2,2)$-ruling set problem in $O(\sqrt{c})$ rounds. By generalizing the same reasoning to $(2,\beta)$-ruling sets, we get an algorithm that, given a $c$ coloring, solves the problem in $O(\beta c^{1/\beta})$ rounds, matching the current state-of-the-art algorithm w.r.t.\ dependency on $\Delta$, $c$ and $\beta$ \cite{SEW13}. While an algorithm obtained in the specific round elimination framework we use only works on regular graphs, we will show that the algorithm that we obtain actually works in any graph.

In the lower bound section, we will show that, by increasing parameter $x$ at each step, we can prove that the color groups evolve in the same way as in the upper bound, and that we can thus prove a \emph{lower bound} using a problem family suggested by the upper bound. In particular, we will show that all the non-intuitive allowed configurations can be relaxed to the intuitive ones, if we allow some slack on them.

\subsection{The edge diagram}\label{sec:edgediag}
We now show the structure of the edge diagram of our $\Pi_{\Delta,\beta}(v,x)$ problems. Knowing such structure will be helpful in the following sections. In particular, as previously discussed in Section \ref{sec:preliminaries}, when defining $\re(\Pi_{\Delta,\beta}(v,x))$ we will only have to consider right-closed subsets of labels with regard to this diagram (see Observation \ref{obs:rcs}). The following relations between the labels of $\Pi_{\Delta,\beta}(v,x)$ derive directly from the definition of $\edgeconst_{\Delta,\beta}(v,x)$.
\begin{itemize}
	\item $\B_i \le \B_j$, if $j < i$.
	\item $\A_i \le \A_j$, if $i < j$.
	\item $\A_i \le \C_{i',j}$, if $1 \le i \le i'$.
	\item $\C_{i',j} \le \B_i$, if $1 \le i \le i'$.
	\item $\L \le \X$ for all $\L \in \Sigma$. 
\end{itemize}
Notice that this also implies $\A_i \le \B_j$ for all $1 \le i,j \le \beta$. An example of the diagram for $\Pi_{\Delta,3}([1,2,3,4],1)$ is shown in Figure \ref{fig:diag}. 

\begin{figure}
	\centering
	\includegraphics[width=0.7\textwidth]{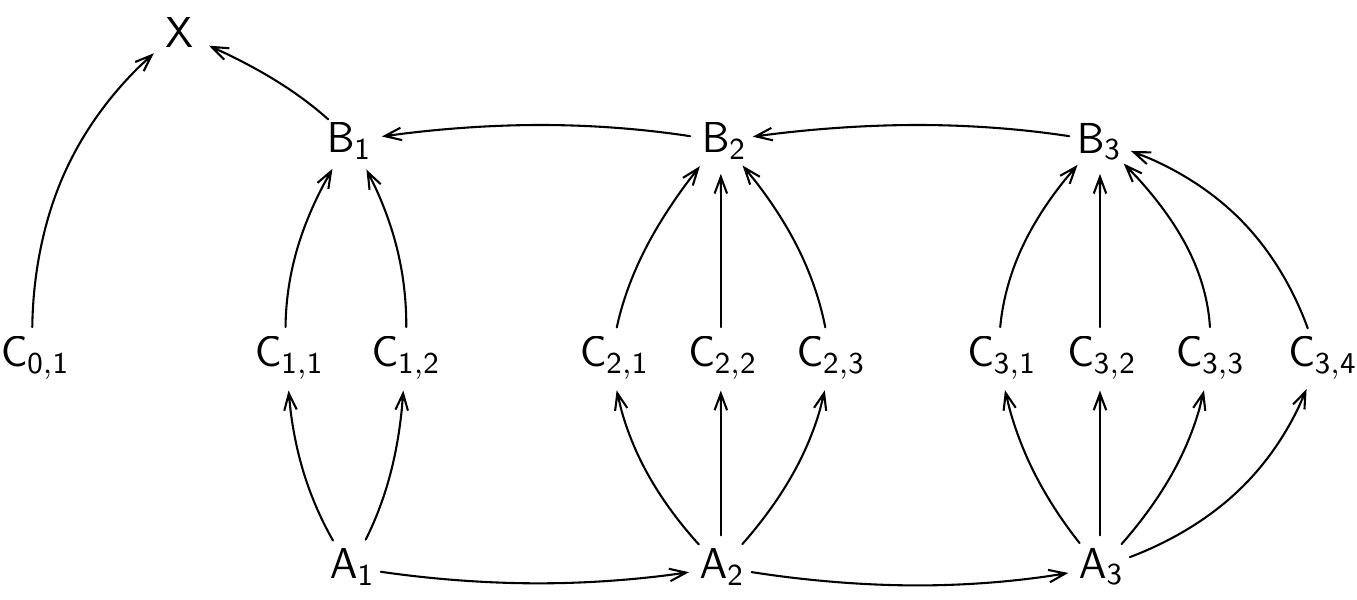}
	\caption{Edge diagram of $\Pi_{\Delta,3}([1,2,3,4],1)$}
	\label{fig:diag}
\end{figure}

\section{The intermediate problems}\label{sec:firstspeedup}
In Section \ref{sec:equivalence} we formally introduced a family of problems $\Pi_{\Delta,\beta}(v,x)$ by defining $\Sigma_{\Delta,\beta}(v,x)$, $\nodeconst_{\Delta,\beta}(v,x)$ and $\edgeconst_{\Delta,\beta}(v,x)$. In this section, we compute $\re(\Pi_{\Delta,\beta}(v,x))$, i.e., we compute the family of problems that we get by applying the function $\re(\cdot)$ to $\Pi_{\Delta,\beta}(v,x)$. In other words, we will compute the set of labels $\Sigma'_{\Delta,\beta}(v,x)$, the node constraint $\nodeconst'_{\Delta,\beta}(v,x)$ and the edge constraint $\edgeconst'_{\Delta,\beta}(v,x)$ of $\re(\Pi_{\Delta,\beta}(v,x))$. In this section, we will always assume that $x < \Delta$ (or, where indicated, even $x \leq \Delta - 2$).

\paragraph{Labels} By the definition of $\re(\cdot)$, the set of labels of $\re(\Pi_{\Delta,\beta}(v,x))$ is the set of non-empty subsets of the set $\Sigma_{\Delta,\beta}(v,x)$, that is, $\Sigma'_{\Delta,\beta}(v,x)=2^{\Sigma_{\Delta,\beta}(v,x)}$. In other words, $\Sigma'_{\Delta,\beta}(v,x)=2^{ \mathcal{P}\s \cup\s \mathcal{C}\s\cup\s \mathcal{X}}$, where
\begin{itemize}
	\item $\mathcal{P}=\{ \A_i, \B_i ~|~ 1 \le i \le \beta \}$ is the set of pointers,
	\item $\mathcal{C}=\{ \C_{i,j} ~|~ 0 \le i \le \beta, 1 \le j \le v_i \}$ is the set of colors, and
	\item $\mathcal{X} = \{\X\}$ if $x>0$, and $\mathcal{X} = \{\}$ if $x=0$, is the set of wildcards.
\end{itemize}

\subsection{Edge constraint}\label{sec:edgeconstraint}
Given a set of colors $\mathscr{C} \subseteq \mathcal{C}$, let $g(\mathscr{C})$ be the largest index $k$ such that $\C_{k,\ell} \in \mathscr{C}$ for some $\ell$ (if $\mathscr{C}$ is empty, let $g(\mathscr{C}) = -1$). In other words, $g(\mathscr{C})$ is the highest group of all colors contained in $\mathscr{C}$.
Consider all the possible pairs $(\mathscr{C},i)$ where $\mathscr{C} \subseteq \mathcal{C}$ and $0 \le i \le \beta$. A pair $(\mathscr{C},i)$ is \emph{good} if and only if $i \ge g(\mathscr{C})$. If $x=0$, we additionally require that, in order for a pair to be good, it must be different from $(\emptyset,0)$. Essentially, good pairs represent all ways to combine subsets of colors and group indices such that the index is at least as large as the highest color group appearing in the set. Let $\S_1((\mathscr{C},i)) = \mathscr{C} \cup \{\B_j ~|~ 1 \le j \le i\}\} \cup \mathcal{X}$. Let $\S_2((\mathscr{C},i)) = (\mathcal{C} \setminus \mathscr{C}) \cup \{\B_j ~|~ 1 \le j \le \beta\} \cup \{\A_j ~|~ i < j \le \beta\} \cup \mathcal{X}$.

\begin{lemma}\label{lemma:edgeconst'}
	The edge constraint of $\re(\Pi_{\Delta,\beta}(v,x))$ is \[\edgeconst'_{\Delta,\beta}(v,x) = \{ \S_1((\mathscr{C},i)) \s \S_2((\mathscr{C},i)) ~|~ (\mathscr{C},i) \text{ is a good pair}\}.\]
\end{lemma}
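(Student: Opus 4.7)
The plan is to verify the claimed equality directly from the definition of $\edgeconst'_{\Delta,\beta}(v,x)$ given in Section~\ref{sec:resec} by splitting the task into three subgoals: (i) for every good pair $(\mathscr{C}, i)$, the configuration $\S_1((\mathscr{C}, i)) \s \S_2((\mathscr{C}, i))$ is a valid pair of labels (every pairwise product lies in $\edgeconst_{\Delta,\beta}(v,x)$); (ii) every such configuration is maximal among valid pairs; (iii) every maximal valid pair has the stated form. The good-pair condition $g(\mathscr{C}) \le i$ will be the compatibility bookkeeping that aligns the color groups on one side with the pointer indices on the other.

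For (i) and (ii) I would rely on a direct case analysis on the types of labels (color, $\B$, $\A$, or wildcard $\X$), using the explicit form of $\edgeconst_{\Delta,\beta}(v,x)$. For validity, the only pair that actually uses the hypothesis is $(\C_{k, \ell}, \A_j)$ with $\C_{k, \ell} \in \mathscr{C}$ and $j > i$, since then the edge constraint $\A_j \s \C_{k, \ell}$ requires $k < j$, which follows from $k \le g(\mathscr{C}) \le i < j$; the $(\B_m, \A_j)$ pairs with $m \le i < j$ are handled similarly. For maximality I would exhibit, for each label outside $\S_1$ (resp.\ outside $\S_2$), a concrete witness on the opposite side that blocks adding it: colors collide with their copies on the opposite side; $\B_j$ with $j > i$ added to $\S_1$ collides with $\A_{i+1} \in \S_2$; every $\A_j$ added to $\S_1$ collides with $\B_\beta \in \S_2$; and $\A_j$ with $j \le i$ added to $\S_2$ collides with $\B_i \in \S_1$ (the cases $i = 0$ and $i = \beta$ being vacuous on the relevant side). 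When $x \ge 1$, the wildcard $\X$ is already present in both sides and causes no new checks.

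The main work lies in (iii). Given a maximal valid pair $(T_1, T_2)$, I would first note that at most one of $T_1, T_2$ can contain a pointer $\A_j$, because $\A_{j_1} \s \A_{j_2}$ is never an allowed edge configuration; orient the pair so that $T_1$ contains no $\A$-label, and set $\mathscr{C} := T_1 \cap \mathcal{C}$ and $i := \max\{j : \B_j \in T_1\}$ (or $i := 0$ if $T_1$ carries no $\B$-label). Combining the right-closedness of $T_1, T_2$ (Observation~\ref{obs:rcs}), read off from the edge diagram of Section~\ref{sec:edgediag}, with maximality, one can identify $T_1 = \S_1((\mathscr{C}, i))$ and $T_2 = \S_2((\mathscr{C}, i))$; any $\A_m \in T_2$ is forced to satisfy $m > \max(i, g(\mathscr{C}))$ in order to be compatible with the $\B$- and color-labels of $T_1$. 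The good-pair condition $g(\mathscr{C}) \le i$ then falls out: were $g(\mathscr{C}) > i$, any $\A_m \in T_2$ would satisfy $m \ge g(\mathscr{C}) + 1 > i + 1$, so $\B_{i+1}$ would have no blocker in $T_2$ and could be added to $T_1$, contradicting maximality.

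The principal obstacle is part (iii), since it requires juggling right-closedness (which populates each side from \emph{inside} via the strength relations on $\Pi_{\Delta, \beta}(v, x)$) against maximality (which populates each side from \emph{outside} via absent-label conflict witnesses) while accommodating the edge cases: when $x = 0$ the pair $(\emptyset, 0)$ must be excluded because $\S_1((\emptyset, 0))$ would be the empty set, which is not a valid label of $\re(\Pi_{\Delta, \beta}(v, x))$; and when $i = \beta$ both $T_1$ and $T_2$ lack $\A$-labels, and the two distinct good pairs $(\mathscr{C}, \beta)$ and $(\mathcal{C} \setminus \mathscr{C}, \beta)$ yield the same edge configuration since configurations are multisets.
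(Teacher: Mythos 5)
Your proposal is correct and follows essentially the same route as the paper's proof. The paper first observes that no valid pair can have $\A$-labels on both sides (since $\A_j \s \A_{j'} \notin \edgeconst$), enumerates all right-closed subsets of $\Sigma_{\Delta,\beta}(v,x)$ that contain no $\A$-label, notes these are in bijection with good pairs $(\mathscr{C},i)$ (with the case distinction on $x$ excluding the empty set, exactly as you do), and then verifies that $\S_2((\mathscr{C},i))$ is precisely the maximal set compatible with $\S_1((\mathscr{C},i))$. Your three-part split into validity, maximality, and completeness repackages the same content, and your label-by-label case analysis matches the paper's. One small observation: in part (iii), the good-pair condition $g(\mathscr{C}) \le i$ already follows directly from right-closedness of $T_1$ (if $\C_{k,\ell} \in T_1$ with $k = g(\mathscr{C})$, then $\B_{g(\mathscr{C})}$ must also lie in $T_1$ since $\B_{g(\mathscr{C})}$ is stronger than $\C_{k,\ell}$ in the edge diagram, forcing $i \ge g(\mathscr{C})$), so your separate maximality argument for this point, while correct, proves something that is already vacuous. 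Your additional remark about the degenerate case $i=\beta$ (where $(\mathscr{C},\beta)$ and $(\mathcal{C}\setminus\mathscr{C},\beta)$ give the same multiset configuration) is a detail the paper leaves implicit.
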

\begin{proof}
Let us start with some observations. First of all, recall that the $\re(\cdot)$ operator requires $\edgeconst'_{\Delta,\beta}(v,x)$ to contain all (and only) pairs of sets $\S_1 \s \S_2$ that satisfy that for all $\labels_1 \in \S_1$ and for all $\labels_2 \in \S_2$, $\labels_1 \s \labels_2$ is in $\edgeconst_{\Delta,\beta}(v,x)$. Also, recall that we can discard all \emph{non-maximal} pairs, and that pairs are equivalent up to reordering. Moreover, recall that we do not need to consider all possible subsets in $ \Sigma'_{\Delta,\beta}(v,x)$, but only \emph{right-closed subsets} with respect to the edge diagram of $\Pi_{\Delta,\beta}(v,x)$ (see Observation \ref{obs:rcs}). Essentially, in $\edgeconst'_{\Delta,\beta}(v,x)$ we must have all possible pairs $\S_1 \s \S_2$, where $\S_1$ is a right-closed subset, and $\S_2$ is the intersection of all sets of labels compatible with each $\labels_1 \in \S_1$ (note that also the resulting set $\S_2$ must be right-closed). 

We consider all cases where $\S_1$ does not contain any label $\A_i$. Since no $\A$-type label is compatible (with regard to $\edgeconst_{\Delta,\beta}(v,x)$) with any other $\A$-type label, it is not possible to have some configuration $\S_1 \s \S_2$ that contains $\A_i \in \S_1$ and $\A_j \in \S_2$, for any $i$ and $j$. Thus, for all valid configurations, either $\S_1$ or $\S_2$ does not contain any $\A$-type label, and this implies that by only considering the case where $\S_1$ does not contain any $\A$-type label, we cover all cases (up to symmetry). By the definition of the edge diagram of $\Pi_{\Delta,\beta}(v,x)$, right-closed subsets $\S$ of $\Sigma'_{\Delta,\beta}(v,x)$ that do not contain any label $\A_i$ are of the following form: we have a subset of colors, and if a color of group $i$ is in $\S$, all $\B_j$ satisfying $j \le i$ are also present. Also, additional $\B_j$ may be in $\S$, and if $\B_j$ is present, all $\B_{j'}$ satisfying $j' \le j$ must also be there. Finally, the label $\X$ is also present, if $x\neq 0$. Notice that there is a one-to-one correspondence between all good pairs and all right-closed subsets not containing any label $\A_i$ (the case distinction on the value of $x$ ensures that we are not considering the empty set). In fact, since $i \ge g(\mathscr{C})$, when creating a set we put at least all $\B$-type labels with index between $1$ and the maximum color group appearing in $\mathscr{C}$, and by increasing $i$ we put additional $\B$-type labels.

For each good pair $(\mathscr{C},i)$, we add the configuration $\S_1((\mathscr{C},i)) \s \S_2((\mathscr{C},i))$ to $\edgeconst'_{\Delta,\beta}(v,x)$. We need to prove that $\S_2 \coloneqq \S_2((\mathscr{C},i))$ contains all and only the labels that are edge compatible with all the labels in $\S_1 \coloneqq \S_1((\mathscr{C},i))$. First, note that a color cannot appear in both $\S_1$ and $\S_2$, since a color is not compatible with itself in $\edgeconst_{\Delta,\beta}(v,x)$. Hence, since $\S_2$ contains $\mathcal{C} \setminus \mathscr{C}$, colors added to $\S_2$ are all valid, and no color can be added. Then, all $\B_i$ are present in $\S_2$, thus, trivially, we cannot add more $\B$-type labels to $\S_2$, and since each $\B$-type label is edge compatible with all other $\B$-type labels and with all colors, the configurations in $\edgeconst_{\Delta,\beta}(v,x)$ are not violated. The same holds for the label $\X$, that we add to $\S_2$ if present in $\Sigma'_{\Delta,\beta}(v,x)$. Note that $\X$ is compatible with any label, so the configurations in $\edgeconst_{\Delta,\beta}(v,x)$ are trivially not violated. The last remaining case to analyze is the $\A$-type labels: if labels $\{\B_1, \ldots, \B_i\}$ are present in $\S_1$ we added $\{\A_{i+1},\ldots,\A_\beta\}$ to $\S_2$. Since $\A_i$ is not compatible with $\B_i$ with regard to $\edgeconst_{\Delta,\beta}(v,x)$, this implies that we cannot add more $\A$-type labels to $\S_2$. Also, note that the presence of a color of group $j$ in $\S_1$ implies that $i \ge j$, and thus the presence of $\B_j$, and since $\A_i$ is edge compatible with all colors of groups strictly less than $i$, the $\A$-type labels added to $\S_2$ do not violate the $\edgeconst_{\Delta,\beta}(v,x)$ configurations.	
\end{proof}

\subsection{Properties}
Before computing the node constraint of $\re(\Pi_{\Delta, \beta}(v,x))$ in Section~\ref{sec:noco}, we will first collect two facts about problem $\re( \Pi_{\Delta,\beta}(v,x) )$ that we can derive from the description of the edge constraint in Section~\ref{sec:edgeconstraint} and will be useful later.

\begin{lemma}\label{lem:restrong}
	Consider two sets $\U, \W \in \Sigma'_{\Delta,\beta}(v,x)$, and assume that $x + 2 \leq \Delta$.
	Then $\W$ is at least as strong as $\U$ according to $\nodeconst'_{\Delta,\beta}(v,x)$ if and only if $\U \subseteq \W$.
\end{lemma}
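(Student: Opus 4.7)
The $(\Leftarrow)$ direction follows immediately from Observation~\ref{obs:subsetarrow}, so the content of the lemma lies in the forward direction. I would prove it by contraposition: given $\U \not\subseteq \W$, pick some $\a \in \U \setminus \W$ and construct a node configuration $N = \U \s \B_2 \s \dotsc \s \B_\Delta \in \nodeconst'_{\Delta,\beta}(v,x)$ such that $\W \s \B_2 \s \dotsc \s \B_\Delta \notin \nodeconst'_{\Delta,\beta}(v,x)$. Equivalently, letting $S$ be the set of labels $\ell \in \Sigma_{\Delta,\beta}(v,x)$ for which there exist $b_k \in \B_k$ with $\ell \s b_2 \s \dotsc \s b_\Delta \in \nodeconst_{\Delta,\beta}(v,x)$, the goal is to choose $\B_2, \dotsc, \B_\Delta$ so that $\a \in S$ while $S \cap \W = \emptyset$.

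A preliminary observation disposes of the case $\a = \X$: whenever $\X \in \Sigma_{\Delta,\beta}(v,x)$, the wildcard is edge-stronger than every other label, so every non-empty right-closed subset of $\Sigma_{\Delta,\beta}(v,x)$ contains $\X$; in particular $\X \in \W$, so $\a \neq \X$. The remaining cases are $\a \in \{\A_i, \B_i, \C_{i,j}\}$, and I would give a construction tailored to each based on the configuration in $\nodeconst_{\Delta,\beta}(v,x)$ that is intended to witness $\a$ at position $1$. Concretely: set $\B_2 = \dotsc = \B_\Delta = \gen{\B_i}$ for $\a = \A_i$; set $\B_2 = \gen{\A_i}$ and $\B_3 = \dotsc = \B_\Delta = \gen{\B_i}$ for $\a = \B_i$; set $\B_2 = \dotsc = \B_\Delta = \gen{\C_{i,j}}$ for $\a = \C_{i,j}$ with $x = 0$; and set $\B_2 = \dotsc = \B_{1+x} = \{\X\}$, $\B_{2+x} = \dotsc = \B_\Delta = \gen{\C_{i,j}}$ for $\a = \C_{i,j}$ with $x \ge 1$. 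The last construction needs the inequality $\Delta \ge x + 2$, exactly the lemma's hypothesis, to guarantee at least one non-wildcard position beyond position $1$. In every case the designated witness of $\a$ at position $1$ certifies that $N \in \nodeconst'_{\Delta,\beta}(v,x)$.

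The bulk of the work---and what I expect to be the main technical obstacle---is verifying $S \cap \W = \emptyset$ for each construction. The two admissible configurations in $\nodeconst_{\Delta,\beta}(v,x)$ (the color-type $\c^{\Delta-x}\s\X^x$ and the pointer-type $\A_{i'}\s\B_{i'}^{\Delta-1}$) must both be analyzed: wildcard padding at positions $2, \dotsc, 1+x$ rules out every pointer-type witness in the $x \ge 1$ color case, while the absence of colors from $\gen{\B_i}$ together with the inequality $x \le \Delta - 2$ leaves no room for enough wildcards to form a color-type witness in the $\A_i$ and $\B_i$ cases. After these reductions, the only residual elements of $S$ besides $\a$ are certain $\A_{i''}$ with $i'' \le i$ and, for $\a = \B_i$ with $x = \Delta - 2$, certain colors $\C_{i', j'}$ with $i' \ge i$. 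The decisive move is to exclude these from $\W$ using right-closure: any such $\A_{i''}$ being in $\W$ would force the entire set $\gen{\A_{i''}}$---which contains every $\B$-label and every color of group at least $i''$---inside $\W$, and one checks case by case that $\a$ itself belongs to $\gen{\A_{i''}}$, contradicting $\a \notin \W$; the analogous right-closure argument applied to $\gen{\C_{i', j'}}$ handles the residual color witnesses in the $\B_i$ case, completing the proof.
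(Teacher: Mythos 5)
Your proof is correct and takes essentially the same approach as the paper: both pick $\a \in \U \setminus \W$, lift the unique configuration of $\nodeconst_{\Delta,\beta}(v,x)$ containing $\a$ to a configuration $\U \s \gen{\y_2} \s \dotsc \s \gen{\y_\Delta}$ of $\nodeconst'_{\Delta,\beta}(v,x)$, and then rule out every candidate first label $\w \in \W$ by combining the structure of $\nodeconst_{\Delta,\beta}(v,x)$ with right-closedness of $\W$. Your explicit framing via the set $S$ and the preliminary observation that $\a \neq \X$ are minor presentational refinements over the paper's case analysis, not a different argument.
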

\begin{proof}
	If $\U \subseteq \W$, then $\W$ is at least as strong as $\U$ according to $\nodeconst'_{\Delta,\beta}(v,x)$, by Observation~\ref{obs:subsetarrow}.
	For the other direction, assume that $\U \nsubseteq \W$, and let $\u \in \Sigma_{\Delta,\beta}(v,x)$ be a label contained in $\U \setminus \W$.
	We want to show that $\W$ is not at least as strong as $\U$ according to $\nodeconst'_{\Delta,\beta}(v,x)$.
	For a contradiction assume that $\W$ is at least as strong as $\U$.
	Consider some configuration $\u \s \y_2 \s \y_3 \s \dots \s \y_\Delta \in \nodeconst_{\Delta,\beta}(v,x)$.
	We first show that the configuration $\U \s \gen{\y_2} \s \gen{\y_3} \s \dots \s \gen{\y_\Delta}$ is contained in $\nodeconst'_{\Delta,\beta}(v,x)$.
	Recalling the definition of $\re(\cdot)$, we see that, since $\y_j \in \gen{\y_j}$, for all $2 \leq j \leq \Delta$, and $\u \in \U$, the only case in which the configuration might not be contained in $\nodeconst'_{\Delta,\beta}(v,x)$ is that one of the labels in the configuration is not contained in any configuration in $\edgeconst'_{\Delta,\beta}(v,x)$.
	Hence, for our first step it suffices to show that each of the $\gen{\y_j}$, and also $U$, is contained in some configuration in $\edgeconst'_{\Delta,\beta}(v,x)$.
	As $\U \in \Sigma'_{\Delta,\beta}(v,x)$ by definition, $\U$ is contained in such a configuration.
	The analogous statement for the $\gen{\y_j}$ follows from Lemma~\ref{lemma:edgeconst'} and the fact that for any label $\L \in \Sigma_{\Delta,\beta}(v,x)$, there exists some good pair $(\mathscr{C},i)$ such that $\gen{\L} = \S_1((\mathscr{C},i))$ or $\gen{\L} = \S_2((\mathscr{C},i))$.
	To see the latter, observe that
	\begin{align*}
		\gen{\A_i} &= \S_2((\mathscr{C},i-1)) &\text{ where $\mathscr{C} = \{ \C_{k,j} \mid 0 \leq k \leq i-1, 1 \leq j \leq v_k\}$,}\\
		\gen{\B_i} &= \S_1((\mathscr{C},i)) &\text{ where $\mathscr{C} = \emptyset$,}\\
		\gen{\C_{i,j}} &= \S_1((\mathscr{C},i)) &\text{ where $\mathscr{C} = \{ \C_{i,j} \}$, and}\\
		\gen{\X} &= \S_1((\emptyset,0)) &\text{ if $x > 0$.}
	\end{align*}
	It follows that the configuration $\U \s \gen{\y_2} \s \gen{\y_3} \s \dots \s \gen{\y_\Delta}$ is contained in $\nodeconst'_{\Delta,\beta}(v,x)$.

	Since $\W$ is at least as strong as $\U$, we obtain that also $\W \s \gen{\y_2} \s \gen{\y_3} \s \dots \s \gen{\y_\Delta} \in \nodeconst'_{\Delta,\beta}(v,x)$.
	By the definition of $\nodeconst'_{\Delta,\beta}(v,x)$, there is a configuration $\w \s \y'_2 \s \y'_3 \s \dots \s \y'_\Delta \in \nodeconst_{\Delta,\beta}(v,x)$ such that  $(\w, \y'_2, \y'_3, \dots, \y'_\Delta) \in \W \times \gen{\y_2} \times \gen{\y_3} \times \dots \times \gen{\y_\Delta}$.
	Since $\W$ is right-closed by Observation~\ref{obs:rcs}, the fact that $\u$ is not contained in $\W$ implies that also any label that is at least as weak as $\u$ according to $\edgeconst_{\Delta,\beta}(v,x)$ is not contained in $\W$; thus, $\w$ is not at least as weak as $\u$ according to $\edgeconst_{\Delta,\beta}(v,x)$.
	Moreover, as $\y'_j \in \gen{\y_j}$ for any $2 \leq j \leq \Delta$, we obtain the following picture:
	there are two configurations $\mathcal U = \u \s \y_2 \s \y_3 \s \dots \s \y_\Delta$ and $\mathcal W = \w \s \y'_2 \s \y'_3 \s \dots \s \y'_\Delta$ in $\nodeconst_{\Delta,\beta}(v,x)$ such that $\w$ is not at least as weak as $\u$, and $\y'_j$ is at least as strong as $\y_j$, for all $2 \leq j \leq \Delta$.
	Now it is straightforward to check that there are no two configurations in $\nodeconst_{\Delta,\beta}(v,x)$ with these properties, by going through all possible pairs of configurations.
	To this end, recall the strength relations of the labels in $\Sigma_{\Delta,\beta}(v,x)$ given in Section~\ref{sec:edgediag} (in particular, Figure~\ref{fig:diag}), and assume for a contradiction that such configurations $\mathcal U, \mathcal W$ exist (recall that two configurations that are identical up to reordering of the contained labels are considered as the same configuration).

	Consider first the case that $\mathcal W = \C_{i,j}^{\Delta-x} \s \X^x$ for some $0 \leq i \leq \beta$, $1 \leq j \leq v_i$, and $0 \leq x \leq \Delta - 2$.
	Since $\Delta-x \geq 2$, there is some index $2 \leq k \leq \Delta$ such that $\y'_k = \C_{i,j}$, which implies that $\mathcal U = \mathcal W$ or $\mathcal U = \A_\ell \s \B_\ell^{\Delta-1}$ for some $1 \leq \ell \leq i$, as otherwise $\y'_k$ cannot be at least as strong as $\y_k$.
	If $\mathcal U = \mathcal W$, then we have $\u = \C_{i,j}$ and $\w = \X$, as otherwise $\w$ is at least as weak as $\u$.
	It follows that there is some index $2 \leq k' \leq \Delta$ such that $\y_{k'} = \X$ and $\y'_{k'} =  \C_{i,j}$, yielding a contradiction to the fact that $\y'_{k'}$ is at least as strong as $\y_{k'}$.
	If $\mathcal U = \A_\ell \s \B_\ell^{\Delta-1}$ for some $1 \leq \ell \leq i$, then we have $\u = \B_\ell$ and $\w = \C_{i,j}$, or there is some index $2 \leq k' \leq \Delta$ such that $\y_{k'} = \B_\ell$ and $\y'_{k'} =  \C_{i,j}$, since $\Delta-x \geq 2$.
	In both cases, we obtain a contradiction, since $\C_{i,j}$ is weaker than $\B_\ell$.

	Now, consider the other case, i.e., that $\mathcal W = \A_\ell \s \B_\ell^{\Delta-1}$ for some $1 \leq \ell \leq \beta$.
	If $\mathcal U = \C_{i,j}^{\Delta-x} \s \X^x$ for some $0 \leq i \leq \beta$, $1 \leq j \leq v_i$, and $0 \leq x \leq \Delta - 2$, then we have $i < \ell$, as otherwise $\A_\ell$ is weaker than any label in $\mathcal U$, which would yield a contradiction no matter whether $\w = \A_\ell$ or $\y'_k = \A_\ell$ for some $2 \leq k \leq \Delta$.
	But since $\Delta-x \geq 2$ implies that there is some $2 \leq k \leq \Delta$ such that $\y_k = \C_{i,j}$ and $\y'_k = \B_\ell$, the case $i < \ell$ also yields a contradiction, as $\B_\ell$ is not at least as strong as $\C_{i,j}$ if $i < \ell$.
	If $\mathcal U = \A_{\ell'} \s \B_{\ell'}^{\Delta-1}$ for some $1 \leq \ell' \leq \beta$, then we see that $\ell' \leq \ell$, as otherwise, again, $\A_\ell$ is weaker than any label in $\mathcal U$.
	If $\ell' < \ell$, then $\B_{\ell'}$, which is contained in $\mathcal U$, is stronger than any label in $\mathcal W$, leading to a contradiction no matter whether $\u = \B_{\ell'}$ or $\y_k = \B_{\ell'}$ for some $2 \leq k \leq \Delta$.
	If $\ell' = \ell$, then we have $\u = \A_\ell$ and $\w = \B_\ell$, as otherwise $\w$ is at least as weak as $\u$.
	But then it follows that there is some index $2 \leq k \leq \Delta$ such that $\y_{k} = \B_\ell$ and $\y'_{k} =  \A_\ell$, yielding a contradiction to the fact that $\y'_k$ is at least as strong as $\y_k$.
\end{proof}

\begin{corollary}\label{cor:strongflip}
	Let $\U_1, \U_2 \in \Sigma_{\Delta,\beta}(v,x)$ be two labels such that $\U_2$ is stronger than $\U_1$ according to $\edgeconst_{\Delta,\beta}(v,x)$, and assume that $x + 2 \leq \Delta$.
	Then $\gen{\U_1}$ is stronger than $\gen{\U_2}$ according to $\nodeconst'_{\Delta,\beta}(v,x)$.
\end{corollary}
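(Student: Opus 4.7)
The plan is to reduce the statement to Lemma~\ref{lem:restrong} via the obvious translation between strength (in $\edgeconst_{\Delta,\beta}(v,x)$) and set inclusion of the right-closures.

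\textbf{Step 1: set containment from label strength.} First I would observe that, by the very definition of $\gen{}$ with respect to $\edgeconst_{\Delta,\beta}(v,x)$, the set $\gen{\U_1}$ contains every label at least as strong as $\U_1$. Since $\U_2$ is at least as strong as $\U_1$, we have $\U_2 \in \gen{\U_1}$, and transitivity of ``at least as strong as'' immediately gives $\gen{\U_2} \subseteq \gen{\U_1}$. Similarly, the hypothesis that the strength relation is strict means $\U_1$ is \emph{not} at least as strong as $\U_2$, i.e., $\U_1 \notin \gen{\U_2}$; combined with $\U_1 \in \gen{\U_1}$, this yields the strict inclusion $\gen{\U_2} \subsetneq \gen{\U_1}$.

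\textbf{Step 2: membership in $\Sigma'_{\Delta,\beta}(v,x)$.} Before invoking Lemma~\ref{lem:restrong} I need to know that both $\gen{\U_1}$ and $\gen{\U_2}$ are legal labels of $\re(\Pi_{\Delta,\beta}(v,x))$. This is already implicit in the proof of Lemma~\ref{lem:restrong}: for every $\L \in \Sigma_{\Delta,\beta}(v,x)$, the set $\gen{\L}$ coincides with some $\S_1((\mathscr{C},i))$ or $\S_2((\mathscr{C},i))$ for a good pair $(\mathscr{C},i)$, hence it appears in a configuration of $\edgeconst'_{\Delta,\beta}(v,x)$ by Lemma~\ref{lemma:edgeconst'}, and is therefore in $\Sigma'_{\Delta,\beta}(v,x)$.

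\textbf{Step 3: transfer to $\nodeconst'_{\Delta,\beta}(v,x)$ via Lemma~\ref{lem:restrong}.} Using the assumption $x+2 \leq \Delta$, Lemma~\ref{lem:restrong} tells us that among labels of $\re(\Pi_{\Delta,\beta}(v,x))$, ``at least as strong as'' in $\nodeconst'_{\Delta,\beta}(v,x)$ is exactly set inclusion. Applying the ``if'' direction to $\gen{\U_2} \subseteq \gen{\U_1}$ yields that $\gen{\U_1}$ is at least as strong as $\gen{\U_2}$. Applying the contrapositive of the ``only if'' direction to the fact that $\gen{\U_1} \nsubseteq \gen{\U_2}$ (which holds because the inclusion from Step~1 is strict) yields that $\gen{\U_2}$ is \emph{not} at least as strong as $\gen{\U_1}$. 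Together, this is exactly the claim that $\gen{\U_1}$ is strictly stronger than $\gen{\U_2}$ according to $\nodeconst'_{\Delta,\beta}(v,x)$.

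There is no real obstacle here; the whole point of Lemma~\ref{lem:restrong} is to make this corollary a two-line consequence. The only mild subtlety to check is that the strength relation on $\Sigma_{\Delta,\beta}(v,x)$ flips direction when passing to right-closures, which is already built into the definition of $\gen{}$: stronger labels have smaller ``up-sets,'' so they become weaker labels in $\re(\Pi_{\Delta,\beta}(v,x))$.
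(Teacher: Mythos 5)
Your proof is correct and takes essentially the same approach as the paper: translate strict strength on $\Sigma_{\Delta,\beta}(v,x)$ into the strict inclusion $\gen{\U_2}\subsetneq\gen{\U_1}$, then apply Lemma~\ref{lem:restrong} in both directions. Your Step~2 (verifying that $\gen{\U_1},\gen{\U_2}\in\Sigma'_{\Delta,\beta}(v,x)$ via the good-pair characterization) is a detail the paper leaves implicit; it is a worthwhile sanity check but does not represent a different route.
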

\begin{proof}
	Recall that, by definition, $\gen{\U_1}$ and $\gen{\U_2}$ contain exactly those labels from $\Sigma_{\Delta,\beta}(v,x)$ that are at least as strong as $\U_1$ and $\U_2$, respectively.
	Hence, the fact that $\U_2$ is stronger than $\U_1$ implies that $\gen{\U_2} \subseteq \gen{\U_1}$ and $\gen{\U_1} \nsubseteq \gen{\U_2}$.
	Now, applying Lemma~\ref{lem:restrong} yields the corollary.
\end{proof}

\subsection{Node constraint}\label{sec:noco}
We now compute the node constraint of $\re(\Pi_{\Delta, \beta}(v,x))$.
\begin{lemma}\label{lemma:nodeconst'}
	Let $x + 2 \leq \Delta$. The node constraint $\nodeconst'_{\Delta,\beta}(v,x)$ of $\re(\Pi_{\Delta, \beta}(v,x))$ is the collection of the following (condensed) configurations:
\begin{itemize}
	\item For each color $\C_{i,j}\in\mathcal{C}$, where $0\le i\le \beta$ and $1\le  j \le v_i$, 
	\[
	\dis(\gen{\gen{\C_{i,j}}})^{\Delta-x} \s \dis(\gen{\gen{\X}})^{x} \enspace.
	\]
	
	\item For each $1\le i\le \beta$ 
	\[
	\dis(\gen{\gen{\A_i}})\s \dis(\gen{\gen{\B_i}})^{\Delta-1} \enspace.
	\]
\end{itemize}
\end{lemma}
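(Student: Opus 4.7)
The plan is to read off $\nodeconst'_{\Delta,\beta}(v,x)$ directly from the definition of $\re(\cdot)$: a configuration $\B_1 \s \dots \s \B_\Delta$ of sets from $\Sigma'_{\Delta,\beta}(v,x)$ belongs to $\nodeconst'_{\Delta,\beta}(v,x)$ iff there is a tuple $(\b_1,\dots,\b_\Delta) \in \B_1 \times \dots \times \B_\Delta$ with $\b_1 \s \dots \s \b_\Delta \in \nodeconst_{\Delta,\beta}(v,x)$. Since $\nodeconst_{\Delta,\beta}(v,x)$ is the union of the families $\c^{\Delta-x} \s \X^{x}$ (for $\c \in \mathcal{C}$) and $\A_i \s \B_i^{\Delta-1}$ (for $1 \le i \le \beta$), the task reduces to describing, for each relevant label $\L$, the set of $\B \in \Sigma'_{\Delta,\beta}(v,x)$ containing $\L$. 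The key algebraic identity I will establish is
\[
\gen{\gen{\L}} \;=\; \{ \B \in \Sigma'_{\Delta,\beta}(v,x) : \L \in \B \}
\]
for every label $\L$ that appears in a configuration of $\nodeconst_{\Delta,\beta}(v,x)$.

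Establishing this identity is the technical heart. The inclusion $\supseteq$ is essentially the second assertion of Observation~\ref{obs:subsetarrow}; to apply it, I need $\gen{\L} \in \Sigma'_{\Delta,\beta}(v,x)$, which I verify by matching $\gen{\L}$ to one of the sides of a good-pair edge configuration from Lemma~\ref{lemma:edgeconst'}. Concretely, $\gen{\A_i}$, $\gen{\B_i}$, $\gen{\C_{i,j}}$, and $\gen{\X}$ each coincide with $\S_1((\mathscr{C},k))$ or $\S_2((\mathscr{C},k))$ for an explicit good pair (the same case analysis already appears inside the proof of Lemma~\ref{lem:restrong}), so each lies in $\Sigma'_{\Delta,\beta}(v,x)$. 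The inclusion $\subseteq$ uses the hypothesis $x+2 \le \Delta$ together with Lemma~\ref{lem:restrong}: if $\B \in \gen{\gen{\L}}$ then $\B$ is at least as strong as $\gen{\L}$ according to $\nodeconst'_{\Delta,\beta}(v,x)$, hence $\gen{\L} \subseteq \B$, and since $\L \in \gen{\L}$, also $\L \in \B$.

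With the identity in hand, the two directions of the lemma are routine. For a configuration $\B_1 \s \dots \s \B_\Delta \in \nodeconst'_{\Delta,\beta}(v,x)$, I fix a witness $(\b_1,\dots,\b_\Delta)$ and split by the shape of $\b_1 \s \dots \s \b_\Delta$; in either case the identity yields $\B_k \in \gen{\gen{\b_k}}$ at every position $k$, placing the configuration inside one of the two listed condensed configurations up to reordering. Conversely, any configuration contained in a listed condensed configuration has, at every position, a set $\B_k$ that contains the corresponding label of the original $\nodeconst_{\Delta,\beta}(v,x)$-configuration (again by the identity), so the obvious tuple is a valid witness; the edge-side filtering in the definition of $\re(\cdot)$ is automatic because the members of $\gen{\gen{\L}}$ already lie in $\Sigma'_{\Delta,\beta}(v,x)$ by construction. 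I expect the only genuine subtlety to be the hypothesis $x+2 \le \Delta$, which is invoked exactly so that Lemma~\ref{lem:restrong} can convert strength with respect to $\nodeconst'_{\Delta,\beta}(v,x)$ into set inclusion; this is a constraint on the scope of the lemma rather than a difficulty in its proof.
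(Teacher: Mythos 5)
Your proposal is correct and takes essentially the same route as the paper: both hinge on the identity $\gen{\gen{\L}} = \{\B \in \Sigma'_{\Delta,\beta}(v,x) : \L \in \B\}$, established by combining Lemma~\ref{lem:restrong} (the source of the $x+2\le\Delta$ hypothesis) with Observation~\ref{obs:subsetarrow} and the right-closedness of the sets in $\Sigma'_{\Delta,\beta}(v,x)$, and both then read the two inclusions of the claimed description directly from the definition of $\re(\cdot)$. Your explicit check that each $\gen{\L}$ coincides with a side of a good-pair configuration (hence lies in $\Sigma'_{\Delta,\beta}(v,x)$) is the same case analysis the paper performs inside the proof of Lemma~\ref{lem:restrong}.
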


\begin{proof}
First of all, note that the definition of $\gen{\gen{L}}$, for some label $L$ of $\Pi_{\Delta, \beta}(v,x)$, depends on the strength of the labels of $\re(\Pi_{\Delta, \beta}(v,x))$, which in turn depends on the node constraint $\nodeconst'_{\Delta,\beta}(v,x)$, which we are currently defining by using the $\gen{\gen{\cdot}}$ notation. Notice that such a recursive definition is not an issue: by Lemma \ref{lemma:edgeconst'} we know what are the labels of $\re(\Pi_{\Delta, \beta}(v,x))$, and by Lemma \ref{lem:restrong} we know that the strength relation of these labels is given exactly by set inclusion. Hence we already know enough about the strength of the labels of $\re(\Pi_{\Delta, \beta}(v,x))$ even before formally defining $\nodeconst'_{\Delta,\beta}(v,x)$, and this allows us to use the $\gen{\gen{\cdot}}$ notation to define them.

The above lemma says that $\nodeconst'_{\Delta,\beta}(v,x)$ is given by the union, over all configurations $\L_1\s\dotsc\s\L_\Delta \in\s \nodeconst_{\Delta,\beta}(v,x)$, of the configurations $\dis(\gen{\gen{\L_1}})\s \ldots\s \dis(\gen{\gen{\L_\Delta}})$. Recall that the $\re(\cdot)$ operator requires that $\nodeconst'_{\Delta,\beta}(v,x)$ contains all (and only) configurations $\S_1 \s \ldots \s \S_\Delta$ that satisfy that there exists a choice $(\labels_1, \ldots, \labels_\Delta) \in \S_1 \times \ldots \times \S_\Delta$ such that $\labels_1\s \ldots\s \labels_\Delta$ is in $\nodeconst_{\Delta,\beta}(v,x)$. Also, recall that tuples are equivalent up to reordering. We argue that $\nodeconst'_{\Delta,\beta}(v,x)$ can be obtained as follows. Start from $\nodeconst' = \{\}$. For each configuration $\L_1\s\ldots\s\L_\Delta \in \nodeconst_{\Delta,\beta}(v,x)$ add to $\nodeconst'$  all the configurations that can be obtained from the condensed configuration $\dis(\gen{\gen{\L_1}})\s \ldots\s \dis(\gen{\gen{\L_\Delta}})$. We now prove that the obtained set  $\nodeconst'$ is equivalent to $\nodeconst'_{\Delta,\beta}(v,x)$. 

By Lemma \ref{lem:restrong}, and by definition of $\gen{}$, $\gen{\gen{\L_i}}$ contains all and only the sets containing $\L_i$, hence, $\nodeconst'$ satisfies the requirements of the existential quantifier. We now prove that $\nodeconst'$ is maximal, in the sense that we cannot add any new valid configuration $\S'_1\s \ldots\s \S'_\Delta$ to $\nodeconst'$. Assume for a contradiction that $\S'_1\s \ldots\s \S'_\Delta$ is a valid maximal configuration not contained in $\nodeconst'$. There must exist a choice $(\labels'_1, \ldots, \labels'_\Delta) \in \S'_1 \times \ldots \times \S'_\Delta$ such that $\labels'_1 \s \ldots\s \labels'_\Delta$ is in $\nodeconst_{\Delta,\beta}(v,x)$.  Note that, by construction, $\nodeconst'$ contains  $\dis(\gen{\gen{\labels'_1}}) \s \ldots\s \dis(\gen{\gen{\labels'_\Delta}})$, and by definition of $\gen{\gen{\labels'_i}}$ and Observation \ref{obs:subsetarrow} we have that $\S'_i \in \gen{\gen{\labels'_i}}$, for all $1 \le i \le \Delta$. Hence, $\S'_1\s \ldots\s \S'_\Delta$ is present in $\nodeconst'$, contradicting the assumption. Hence the constructed set $\nodeconst'$ is equal to $\nodeconst'_{\Delta,\beta}(v,x)$.
\end{proof}

\section{Upper bound}\label{sec:ub}
In this section we prove upper bounds for the $(2,\beta)$-ruling set problem. While an upper bound is not necessary to prove the main results of our work, perhaps surprisingly, it will serve the purpose of giving some intuition behind the definition of the problem family that we use to prove lower bounds.
We will first prove that
 $\Pi_{\Delta,\beta}(v',0)$ is at least as hard as $\rere(\re(\Pi_{\Delta,\beta}(v,0)))$, where $v'$ is the inclusive prefix sum list of $v$ (i.e., $v'_i = \sum_{j\le i} v_j$). That is, we can apply the round elimination theorem on $\Pi_{\Delta,\beta}(v,0)$ to get a problem that can be solved in (at most) $1$ round given a solution for $\Pi_{\Delta,\beta}(v',0)$. Hence, we will prove the following lemma.
 \begin{lemma}\label{lem:ub_secondspeedup}
 	The problem $\rere(\re(\Pi_{\Delta,\beta}(v,0)))$ can be solved in $0$ rounds given a solution for $\Pi_{\Delta,\beta}(v',0)$, where $v'$ is the inclusive prefix sum list of $v$. Hence, given a solution for $\Pi_{\Delta,\beta}(v',0)$ we can solve $\Pi_{\Delta,\beta}(v,0)$ in at most $1$ round.
 \end{lemma}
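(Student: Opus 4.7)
The plan is to construct a $0$-round reduction from $\Pi_{\Delta,\beta}(v',0)$ to $\Pi'' \coloneqq \rere(\re(\Pi_{\Delta,\beta}(v,0)))$; once this is in hand, the second statement of the lemma follows immediately from Theorem~\ref{thm:sebastien} applied to $\Pi_{\Delta,\beta}(v,0)$. Since the reduction is $0$-round, it is enough to define a label map $\phi \colon \Sigma_{\Delta,\beta}(v',0) \to \Sigma_{\Pi''}$ and verify that $\phi$ sends every configuration in $\nodeconst_{\Delta,\beta}(v',0)$ to one in $\nodeconst_{\Pi''}$, and every configuration in $\edgeconst_{\Delta,\beta}(v',0)$ to one in $\edgeconst_{\Pi''}$.

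On the pointer labels I would set $\phi(\A^{v'}_i) \coloneqq \gen{\gen{\A_i}}$ and $\phi(\B^{v'}_i) \coloneqq \gen{\gen{\B_i}}$; by Lemma~\ref{lem:restrong} the strength on $\Sigma'_{\Delta,\beta}(v,0)$ coincides with set inclusion, so both are bona fide labels of $\Pi''$, namely the upper sets of $\gen{\A_i}$ and $\gen{\B_i}$ in $\Sigma'_{\Delta,\beta}(v,0)$. For color labels, I would exploit the identity $v'_i = \sum_{k \le i} v_k$ together with a fixed enumeration of the pairs $(k,\ell)$ with $0 \le k \le i$ and $1 \le \ell \le v_k$: the $j$-th color of group $i$ in $\Pi_{\Delta,\beta}(v',0)$ is mapped to a right-closed subset of $\Sigma'_{\Delta,\beta}(v,0)$ that jointly encodes the underlying $\Pi_{\Delta,\beta}(v,0)$-color $\C_{k,\ell}$ (used for the node-constraint check) and the level index $i$ (used for the edge-constraint check, recorded through the inclusion of $\gen{\B_i}$ in the generating set).

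The node-constraint check then follows from Lemma~\ref{lemma:nodeconst'}: every $\Delta$-tuple drawn from $(\phi(\C^{v'}_{i,j}))^\Delta$ lies in $\dis(\gen{\gen{\C_{k,\ell}}})^\Delta \in \nodeconst'_{\Delta,\beta}(v,0)$, since each set in the image contains $\C_{k,\ell}$ by construction; and every $\Delta$-tuple drawn from $\phi(\A^{v'}_i) \s \phi(\B^{v'}_i)^{\Delta-1}$ lies in $\dis(\gen{\gen{\A_i}}) \s \dis(\gen{\gen{\B_i}})^{\Delta-1} \in \nodeconst'_{\Delta,\beta}(v,0)$. The edge-constraint check proceeds by case analysis over the families of allowed pairs in $\edgeconst_{\Delta,\beta}(v',0)$ (pointer-pointer, pointer-color, color-color, and their symmetric variants); in each case I would exhibit an explicit witness $(\S_1,\S_2) \in \edgeconst'_{\Delta,\beta}(v,0)$ in the product of the two $\phi$-images, extracted from the good-pair characterization of Lemma~\ref{lemma:edgeconst'} by choosing the set $\mathscr{C}$ and the index $i^*$ to match the colors and levels recorded by $\phi$.

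The main obstacle I foresee lies in the color-color case when two distinct $v'$-color slots could, under the chosen enumeration, refer to the same underlying $\Pi_{\Delta,\beta}(v,0)$-color $\C_{k,\ell}$: the naive map $\C^{v'}_{i,j} \mapsto \gen{\gen{\C_{k,\ell}}}$ fails here because no good pair $(\mathscr{C},i^*)$ can place a single color on both sides of the $(\S_1,\S_2)$-partition. The resolution is precisely the point of letting the level index $i$ genuinely refine the image of $\phi$ via $\gen{\B_i}$: this yields different images for different levels even when the underlying color agrees, and allows a valid edge witness to be built for every distinct pair of $v'$-colors. With the color-color case settled, the remaining families are routine, and Theorem~\ref{thm:sebastien} converts the $0$-round reduction into the $1$-round upper bound for $\Pi_{\Delta,\beta}(v,0)$ asserted in the second part of the lemma.
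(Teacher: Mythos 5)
Your high-level plan matches the paper's: build a $0$-round reduction from $\Pi_{\Delta,\beta}(v',0)$ to $\Pi''$ by an explicit label map, check node and edge constraints, and invoke Theorem~\ref{thm:sebastien} for the last sentence. The pointer images $\gen{\gen{\A_i}}, \gen{\gen{\B_i}}$ are right, and you correctly identify the real obstacle: two new colors whose underlying old color $\C_{k,\ell}$ is the same (but whose group indices differ) cannot both be hit by the map $\C^{v'}_{i,j}\mapsto\gen{\gen{\C_{k,\ell}}}$, since a single color never occurs on both sides of an $\edgeconst'$ configuration.

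However, the resolution you propose --- ``recording the level via the inclusion of $\gen{\B_i}$ in the generating set'' --- does not actually repair this. Both natural readings fail. If you mean tightening the inner generator to $\gen{\gen{\C_{k,\ell},\B_i}}$, the node check still goes through exactly as you describe (every chosen set contains $\C_{k,\ell}$), but the edge witness you need for the pair $(i,j)$, $(i',j')$ with the same underlying color and $i<i'$ is $\S_2$ from the good pair $(\{\C_{k,\ell}\},i)$, and $\S_2$ does \emph{not} contain $\C_{k,\ell}$; hence $\gen{\C_{k,\ell},\B_{i'}}\not\subseteq\S_2$, and $\S_2$ is not in your image at level $i'$. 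If instead you mean adding $\gen{\B_i}$ as a second generator, giving $\gen{\gen{\C_{k,\ell}},\gen{\B_i}}$, then this set equals $\gen{\gen{\B_i}}$ (because $\gen{\C_{k,\ell}}\supseteq\gen{\B_i}$ for $i\geq g(\C_{k,\ell})$, so the first generator is redundant), and the node-constraint check collapses: a tuple can pick $\gen{\B_i}$ in all $\Delta$ slots, which is not contained in any condensed configuration of $\nodeconst'_{\Delta,\beta}(v,0)$.

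The missing ingredient is the pointer $\A_i$, not $\B_i$: the paper sets the new color to $\gen{\gen{\C_{k,\ell},\B_i},\gen{\A_i}}$ for $i\geq1$ (and $\gen{\gen{\C_{k,\ell}}}$ for $i=0$). The second generator $\gen{\A_i}$ is what lets $\S_2$ from the good pair $(\{\C_{k,\ell}\},i)$ land in the image at every level $i'>i$, because $\S_2\supseteq\gen{\A_{i'}}$ even though $\C_{k,\ell}\notin\S_2$. This comes at a cost you would also need to absorb: with $\gen{\A_i}$ present, your node-constraint claim that every $\Delta$-tuple from $\phi(\C^{v'}_{i,j})^\Delta$ lies in $\dis(\gen{\gen{\C_{k,\ell}}})^\Delta$ is no longer true; a tuple may contain a set with $\A_i$ but without $\C_{k,\ell}$. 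One then argues by case split (as in the paper): if all chosen sets contain $\C_{k,\ell}$, use the color configuration; otherwise some set contains $\A_i$, and since right-closedness (Observation~\ref{obs:rcs}) forces $\B_i$ into every set containing $\A_i$ or $\C_{k,\ell}$ (for $i\geq g(\C_{k,\ell})$), the tuple fits $\dis(\gen{\gen{\A_i}})\s\dis(\gen{\gen{\B_i}})^{\Delta-1}$. So the gap in your proposal is precisely the substitution of $\B_i$ for $\A_i$ in the second generator and the accompanying node-constraint case analysis.
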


We will then analyze the whole problem family in order to provide an upper bound for the $(2,\beta)$-ruling set problem. In particular, we will analyze how the number of colors evolves over time, and we will prove that the time required to compute a $(2,\beta)$-ruling set is at most the minimum $t$ such that ${\beta+t \choose \beta} \ge c$, if nodes are initially labeled with some $c$-vertex coloring. In particular, this implies that a $(2,\beta)$-ruling set can be found in $O(\beta\, c^{1/\beta})$ rounds, and that a $(2,\beta\, c^{1/\beta})$-ruling set can be found in $\beta$ rounds, for all $\beta \le c$. While this upper bound does match but not improve the current state of the art, we will later show how this family, essentially obtained while proving upper bounds, can be turned into a lower bound by increasing parameter $x$ (recall the definition of the problem family $\Pi_{\Delta,\beta}(v,x)$ in Section \ref{sec:problems}). Hence, we will prove the following lemma.
\begin{lemma}\label{lem:ub_time}
	The time required to solve the $(2,\beta)$-ruling set problem in the port numbering model given a $c$-vertex coloring is at most the minimum $t$ such that ${\beta + t \choose \beta} \ge c$. In particular, the $(2,\beta)$-ruling set problem can be solved in $t \le \beta \, c^{1/\beta}$ rounds. Also, the $(2, \beta \, c^{1/\beta})$-ruling set problem can be solved in at most $\beta$ rounds.
\end{lemma}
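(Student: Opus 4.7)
The plan is to iterate Lemma~\ref{lem:ub_secondspeedup} starting from the formulation of $(2,\beta)$-ruling set given in Section~\ref{sec:equivalence}, and to track how the colour parameter $v$ evolves under repeated inclusive prefix sums until it is large enough that the problem becomes trivially solvable from the input colouring.

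First, I would recall from Section~\ref{sec:equivalence} that a solution to $\Pi_{\Delta,\beta}([1,0,\ldots,0],0)$ is a solution to the $(2,\beta)$-ruling set problem (in fact obtainable in $0$ additional rounds). Iterating Lemma~\ref{lem:ub_secondspeedup} $t$ times then shows that, given a solution to $\Pi_{\Delta,\beta}(v^{(t)},0)$, one can produce a solution to the $(2,\beta)$-ruling set problem in at most $t$ rounds, where $v^{(t)}$ denotes the $t$-fold inclusive prefix sum of the initial vector $v^{(0)}=[1,0,\ldots,0]$.

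Next, I would compute $v^{(t)}$ explicitly. A one-line induction using the hockey-stick identity $\sum_{j=0}^{i}\binom{j+k}{k}=\binom{i+k+1}{k+1}$ gives $v^{(t)}_i=\binom{i+t-1}{t-1}$ for $t\ge 1$, and summing once more over $i\in\{0,\ldots,\beta\}$ yields
\[
\size(v^{(t)}) \;=\; \sum_{i=0}^{\beta}\binom{i+t-1}{t-1} \;=\; \binom{\beta+t}{\beta}.
\]
The third step is the easy observation that $\Pi_{\Delta,\beta}(v,0)$ is $0$-round solvable whenever the input is a proper vertex colouring with at least $\size(v)$ colours: fix any injection from the input colour set into $\mathcal{C}=\{\C_{i,j}\}$; each node maps its own input colour to its assigned label $\C_{i,j}$ and outputs this label on every incident edge. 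The node configuration $\C_{i,j}^{\Delta}$ is allowed, and the edge constraint requires only $(i,j)\neq(i',j')$ across each edge, which properness of the input colouring guarantees. Combining the three steps proves that $t$ rounds suffice whenever $\binom{\beta+t}{\beta}\ge c$, which is the first claim.

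For the quantitative consequences I would use the standard inequality $\binom{n}{k}\ge (n/k)^k$. With $n=\beta+t$, $k=\beta$, this gives $\binom{\beta+t}{\beta}\ge (1+t/\beta)^\beta\ge (t/\beta)^\beta$, so choosing $t\coloneqq \lceil\beta c^{1/\beta}\rceil$ already forces the bound to exceed $c$, yielding the $O(\beta c^{1/\beta})$-round algorithm for $(2,\beta)$-ruling set. The last statement follows from exactly the same argument applied with the ruling parameter set to $B\coloneqq \beta c^{1/\beta}$ and the number of rounds fixed to $\beta$: the colour count reached after $\beta$ iterations is $\binom{B+\beta}{\beta}\ge (B/\beta)^\beta=c$, so $\beta$ rounds of communication suffice to solve $(2,B)$-ruling set.

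Since Lemma~\ref{lem:ub_secondspeedup} already packages all the genuine round-elimination work, the present lemma is essentially a counting exercise; the only mild obstacle is to make the prefix-sum bookkeeping transparent and to justify cleanly that the end-of-chain problem is $0$-round solvable from a $c$-colouring. Both reduce to standard binomial identities and a direct check against the node/edge constraints of $\Pi_{\Delta,\beta}(v,0)$.
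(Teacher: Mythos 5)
Your proposal follows essentially the same path as the paper's proof: iterate Lemma~\ref{lem:ub_secondspeedup} from $\Pi_{\Delta,\beta}([1,0,\ldots,0],0)$, track the prefix-sum evolution of $v$, compute $\size(v^{(t)})=\binom{\beta+t}{\beta}$ via the hockey-stick identity (this is exactly the paper's Lemma~\ref{lem:colorgrowth}), observe that a $c$-colouring with $c\le\size(v^{(t)})$ lets each node output a colour label on all ports, and then invoke $\binom{x+y}{x}\ge(x/y)^y$ for the two quantitative corollaries. The argument is correct and matches the paper's proof step for step.
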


Interestingly, the strategy that we will use to prove that $\Pi_{\Delta,\beta}(v',0)$ can be used to solve $\rere(\re((\Pi_{\Delta,\beta}(v,0)))$ shows that, by just blindly applying the round elimination theorem and discarding everything that has no \emph{intuitive} meaning, we can obtain algorithms that are able to compete with the current state of the art.

\subsection{Proof of Lemma \ref{lem:ub_secondspeedup}}
For simplicity, let us define $\Pi'_{\Delta,\beta}(v,0) \coloneqq \re(\Pi_{\Delta,\beta}(v,0))$ and $\Pi''_{\Delta,\beta}(v,0) \coloneqq \rere(\re(\Pi_{\Delta,\beta}(v,0))) =\rere(\Pi'_{\Delta,\beta}(v,0))$. We want to understand $\Pi''_{\Delta,\beta}(v,0)$, but it seems highly non trivial to show the exact form for $\Pi''_{\Delta,\beta}(v,0)$ using the round elimination technique. Instead, starting from $\Pi'_{\Delta,\beta}(v,0)$, we prove that some specific configurations are present in the node constraint $\nodeconst''_{\Delta,\beta}(v,0)$ of $\Pi''_{\Delta,\beta}(v,0)$. This is enough for our purposes, since, even if there are more configurations that we do not consider, it means that we are only making the problem harder. We will then show that we can rename labels appearing in $\Pi''_{\Delta,\beta}(v,0)$ such that the collection of configurations that we consider matches the node constraint definition of $\Pi_{\Delta,\beta}(v',0)$. We will also show that the edge constraint $\edgeconst''_{\Delta,\beta}(v,0)$ of $\Pi''_{\Delta,\beta}(v,0)$ matches the edge constraint $\edgeconst_{\Delta,\beta}(v,0)$ of $\Pi_{\Delta,\beta}(v,0)$. This will imply that, by applying Theorem \ref{thm:sebastien} on $\Pi_{\Delta,\beta}(v,0)$, we get a problem where we can discard some allowed configurations, and rename the obtained sets of sets, such that we get problem $\Pi_{\Delta,\beta}(v',0)$, and thus that a solution for $\Pi_{\Delta,\beta}(v',0)$ can be transformed in $0$ rounds to a solution for $\rere(\re(\Pi_{\Delta,\beta}(v,0)))$. Hence, and by Theorem \ref{thm:sebastien} this will imply that, given a solution for $\Pi_{\Delta,\beta}(v',0)$ we can solve $\Pi_{\Delta,\beta}(v,0)$ in at most $1$ round of communication.

\paragraph{Node constraint}
We start by showing that, for each possible pair $(\C_{i,j},i')$ such that $\C_{i,j}$ is a color of group $i$ of the original problem and $i \le i' \le \beta$, $\nodeconst''_{\Delta,\beta}(v,0)$ contains some allowed configuration that corresponds to a color of group $i'$ of the new problem.  In other words, we show that the colors of the new problem are generated by all possible pairs composed of a color of the original problem and a color group, where the color group is at least as large as the group of the original color. See Figure \ref{fig:color-generation} for an example.

\begin{figure}[h]
	\centering
	\includegraphics[width=0.75\textwidth]{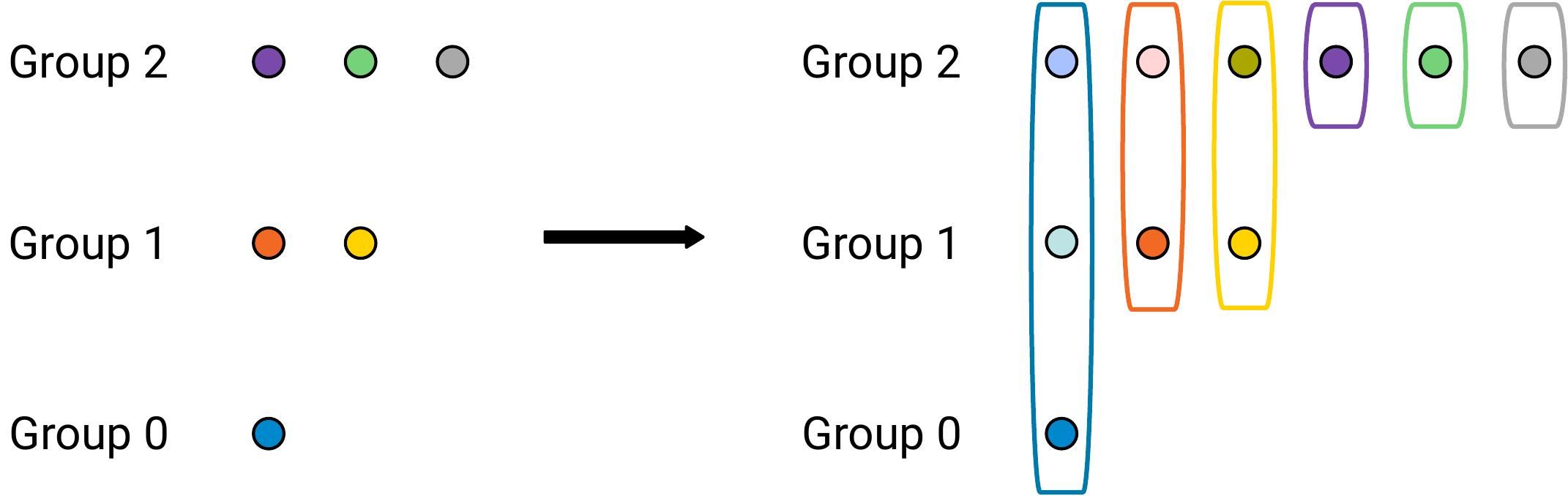}
	\caption{An example of how colors in $\Pi''$ are generated from colors in $\Pi$. In the depicted case, the colors of $\Pi$ can be described by the vector $[1,2,3]$, and each color of group $i$ in $\Pi$ generates a color for every group $j \ge i$ in $\Pi''$; hence the colors in $\Pi''$ can be described by the vector $[1,3,6]$. For example, the orange color in $\Pi$, in group $1$, generates two colors (orange and pink) in groups $1$ and $2$ in $\Pi''$.}
	\label{fig:color-generation}
\end{figure}

Consider an arbitrary choice of $i,j,i'$ as described above, and the set $\C$ of sets defined as $\gen{\gen{ \C_{i,j} , \B_{i'} },\gen{ \A_{i'}}}$ if $i'>0$ and as $\gen{\gen{ \C_{i,j} }}$ if $i'=0$. We show that $\C^\Delta$ is an allowed configuration of $\nodeconst''_{\Delta,\beta}(v,0)$, that is, any choice of sets $(\labels_1,\ldots,\labels_\Delta) \in \C^\Delta$ is an allowed configuration in the node constraint of $\Pi'_{\Delta,\beta}(v,0)$. Any choice satisfies (by construction) the following: each chosen set either contains both $\C_{i,j}$ and $\B_{i'}$ (or just $\C_{i,j}$ if $i'=0$), or it contains $\A_{i'}$ (condition allowed only if $i'>0$). 
If \emph{all} choices fall in the first case, hence also if $i'=0$, then this configuration is contained in $\nodeconst'_{\Delta,\beta}(v,0)$, since all choices over $\dis(\gen{\gen{\C_{i,j}}})^\Delta$  are contained in $\nodeconst'_{\Delta,\beta}(v,0)$.Thus let us now consider the case where at least one choice contains $\A_{i'}$, for $i'>0$. Other choices either also contain $\A_{i'}$, or they contain $\B_{i'}$. Since $\B_{i'}$ is stronger than $\A_{i'}$ with regard to the edge diagram of $\Pi_{\Delta,\beta}(v,0)$, the presence of $\A_{i'}$ in a set implies the presence of $\B_{i'}$ (see Observation \ref{obs:rcs}). Hence, we have a set containing $\A_{i'}$ and all other sets containing $\B_{i'}$, which is a configuration present in $\nodeconst'_{\Delta,\beta}(v,0)$, given by the presence of all choices over $\dis(\gen{\gen{ \A_{i'}}}) \s \dis(\gen{\gen{ \B_{i'}}})^{\Delta-1}$.  Notice that, even though we required $i \le i'$, this is not strictly necessary for obtaining allowed configurations. Nevertheless, since all sets containing $\C_{i,j}$ also contain $\B_i$ (by Observation \ref{obs:rcs}, since sets are right-closed and $\B_i$ is stronger than $\C_{i,j}$) and all $\B_j$ for $j < i$, the configuration obtained by using the pair $(\C_{i,j},i')$ would be the same as the one obtained by using $(\C_{i,j},i)$ if $i'<i$.

We now show that also $\gen{\gen{\A_i}} \s \gen{\gen{\B_i}}^{\Delta-1}$ is an allowed configuration of $\nodeconst''_{\Delta,\beta}(v,0)$, for all $1 \le i \le \beta$. Consider an arbitrary choice: it must be a set containing $\A_i$ and all other sets containing $\B_i$, and thus a configuration present in $\nodeconst'_{\Delta,\beta}(v,0)$, given by the presence of all choices over $\dis(\gen{\gen{ \A_{i}}}) \s \dis(\gen{\gen{ \B_{i}}})^{\Delta-1}$.

\paragraph{Renaming}
We show how to rename the obtained sets of sets, such that, under the proposed renaming, we have the desired relation between $\Pi''_{\Delta,\beta}(v,0)$ and $\Pi_{\Delta,\beta}(v',0)$. We consider the set of sets obtained starting from $(\C_{i,j},i')$ as a new color of group $i'$. Then we map the other set of sets as follows: $\gen{\gen{\B_i}} \mapsto \B_i$, and $\gen{\gen{\A_i}} \mapsto \A_i$.
Notice that $v'$ is indeed the inclusive prefix sum of $v$: since we assume that $i \le i'$, in the new group $i'$ we have a number of colors equal to the sum of the number of colors of the old groups $i \le i'$. Also, we have all the $\A_i\B_i^{\Delta-1}$ configurations in the node constraint of $\Pi''_{\Delta,\beta}(v,0)$. This means that $\nodeconst''_{\Delta,\beta}(v,0)$ is the same as $\nodeconst_{\Delta,\beta}(v',0)$ of $\Pi_{\Delta,\beta}(v',0)$, defined in Section \ref{sec:problems}. On the other hand, we want to show that $\edgeconst''_{\Delta,\beta}(v,0)$ is the same as $\edgeconst_{\Delta,\beta}(v',0)$ of $\Pi_{\Delta,\beta}(v',0)$. For that, it is enough to show that that the following statements hold for the edge constraint $\edgeconst''_{\Delta,\beta}(v,0)$.

\paragraph{Edge constraint}
In order to show the desired properties mentioned above, we must show that the configurations of the edge constraint $\edgeconst''_{\Delta,\beta}(v,0)$ are the following.
\begin{enumerate}
	\item All (new) colors are compatible with all other (new) colors (except themselves).
	\item $\gen{\gen{\B_i}}$ is compatible with $\gen{\gen{\B_j}}$, for all $1 \le i,j \le \beta$.
	\item $\gen{\gen{\A_j}}$ is compatible with $\gen{\gen{\B_i}}$, if and only if $i < j$. 
	\item $\gen{\gen{\A_j}}$ is compatible with all colors of the new group $i$, if and only if $i < j$.
\end{enumerate}
We start with color compatibility. Let $\C_{i,j,k}$ be the color (a set of sets) obtained from the pair $(\C_{i,j},k)$. Consider two new colors $\C_{i,j,k}$ and $\C_{i',j',k'}$. We need to show that there exists a set $\labels_1 \in \C_{i,j,k}$ and a set $\labels_2 \in \C_{i',j',k'}$ such that $\labels_1 \s \labels_2$ is in $\edgeconst'_{\Delta,\beta}(v,0)$, if and only if $(i,j,k) \neq (i',j',k')$. Consider the case where $(i,j) \neq (i',j')$, that is, either $i \neq i'$ or $j \neq j'$. We argue that such sets are given by the configuration $\labels_1 \s \labels_2$ added to $\edgeconst'_{\Delta,\beta}(v,0)$ starting from the good pair $(\{\C_{i,j}\},k)$ when defining the problem $\Pi'_{\Delta,\beta}(v,0)$ (recall the definition of good pair given in Section \ref{sec:firstspeedup}). By construction, $\labels_1$ contains both $\C_{i,j}$ and $\B_k$ if $k>0$, or $\C_{i,j}$ if $k=0$. Since $\C_{i,j,k}$ is defined as $\gen{\gen{\C_{i,j} , \B_{k} },\gen{ \A_{k} }}$ if $k>0$ and as $\gen{\gen{ \C_{i,j} }}$ if $k=0$, then $\labels_1 \in \C_{i,j,k}$, thus we can pick such set. Since $(i,j) \neq (i',j')$, and since by construction $\labels_2$ contains all colors not present in $\labels_1$ and all $\B$-type labels, then $\labels_2$ is contained in $\C_{i',j',k'}$ (because $\labels_2$ is in  $\dis( \gen{\gen{ \C_{i',j'} , \B_{k'} } })$ if $k'>0$, or in the disjunction $\dis( \gen{\gen{ \C_{i',j'}  } })$ if $k'=0$, and $\C_{i',j',k'}$ contains all elements of such disjunction), and thus we can choose such set. Let us now consider the case $k \neq k'$. Without loss of generality, consider the case where $k < k'$. Consider again the configuration $\labels_1 \s \labels_2$ added to $\edgeconst'_{\Delta,\beta}(v,0)$ starting from the good pair $(\{\C_{i,j}\},k)$.  The set $\labels_2$ contains $\A_{k'}$, since $k' > k$, and it is thus contained in $\gen { \gen{ \A_{k'} }}$, that is a subset of $\C_{i',j',k'}$, thus we can pick such set. We now show that color $\C_{i,j,k}$ is not compatible with itself. By picking a pair of sets that both contain $\C_{i,j}$ we get a configuration not in $\edgeconst'_{\Delta,\beta}(v,0)$, since a color never appears on both sides of an edge configuration. By picking a pair of sets that both contain $\A_k$ we get a configuration not in $\edgeconst'_{\Delta,\beta}(v,0)$, since $\A$-type labels never appears on both sides. By picking on one side a set that contains $\C_{i,j}$ and on the other side a set that contains $\A_k$, we get a configuration not in $\edgeconst'_{\Delta,\beta}(v,0)$, because the first choice must also contain $\B_k$ (since all $\B$-type labels are stronger than all $\A$-type labels), and all configurations in $\edgeconst'_{\Delta,\beta}(v,0)$ are such that if $\B_k$ is contained on one side, $A_k$ is not contained on the other side.

We now argue about compatibility between  $\gen{\gen{\B_i}}$ and $\gen{\gen{\B_j}}$, for all  $1 \le i,j \le \beta$. Consider the configuration $\labels_1 \s \labels_2$ added to $\edgeconst'_{\Delta,\beta}(v,0)$ starting from the good pair $(\{\},i)$. Recall from Section \ref{sec:edgeconstraint} that, for this specific good pair, $\labels_1$ is defined as $\{\B_1,\ldots,\B_i\}$, and that $\{\B_1,\ldots,\B_\beta\} \subseteq \labels_2$. The set $\labels_1 $ can be picked from $\gen{\gen{\B_i}}$, and since $\labels_2$ contains all $\B$-type labels, and $\gen{\gen{\B_j}}$ contains $\{\B_1,\ldots,\B_j\}$ and hence a subset of $\labels_2$, the set $\labels_2$ can be picked from $\gen{\gen{\B_j}}$. 

Regarding the compatibility between $\gen{\gen{\A_j}}$ and $\gen{\gen{\B_i}}$, notice that, if $i < j$, then the aforementioned set $\labels_2$, by construction, contains also $\A_j$, and thus $\labels_2$ can be picked from $\gen{\gen{\A_j}}$. Moreover, if $i \ge j$, $\gen{\gen{\A_j}}$ is not compatible with $\gen{\gen{\B_i}}$, since any pair of choices must contain $\B_i$ on one side and $\A_j$ on the other, but by construction all configurations in $\edgeconst'_{\Delta,\beta}(v,0)$ satisfy that if $\B_i$ is present on one side, and $j \le i$, then $\A_j$ is not present on the other side.

Let us now prove the last point. If $j \ge k$, $\gen{\gen{\A_k}}$ is not compatible with colors in the new group $j$, since any choice over such colors either contains another $\A$-type label, or $\B_j$, and such configurations never appear in $\edgeconst'_{\Delta,\beta}(v,0)$. If $j < k$, consider color $\C_{i,j,k}$, and consider the configuration $\labels_1 \s \labels_2$ added to $\edgeconst'_{\Delta,\beta}(v,0)$ starting from the good pair $(\{\C_{i,j}\},k)$. By definition of $\C_{i,j,k}$, we know that $\labels_1$ is contained in $\C_{i,j,k}$, and thus we can choose such set. Then, since $k > j$, $\A_k$ is contained in $\labels_2$, and thus $\labels_2$ is contained in $\gen{\gen{\A_j}}$.

We showed that up to the above renaming, $\nodeconst''_{\Delta,\beta}(v,0) = \nodeconst_{\Delta,\beta}(v',0)$ and $\edgeconst''_{\Delta,\beta}(v,0)= \edgeconst_{\Delta,\beta}(v',0)$, which means that $\Pi''_{\Delta, \beta}(v,0)=\rere(\re(\Pi_{\Delta,\beta}(v,0)))$ can be solved in $0$ rounds given a solution for $\Pi_{\Delta,\beta}(v',0)$, proving Lemma \ref{lem:ub_secondspeedup}.

\subsection{Proof of Lemma \ref{lem:ub_time}}
While the algorithm that we implicitly obtain by applying round elimination multiple times only works on $\Delta$-regular graphs, we will later show, by giving a human-readable version of such algorithm, that it can be adapted to work on any graph.

Let $p(v)$ be the inclusive prefix sum of $v$, and let $p^t(v)$ be the function that recursively applies $t$ times $p$ to $v$ (if $t=0$, then $p^t(v) = v$). That is, $p^{t+1}(v)_j = \sum_{i=0}^j p^t(v)_i$, for all $0 \le j \le \beta$.

Let us see what we get by applying Lemma \ref{lem:ub_secondspeedup} multiple times. We start from $\Pi_{\Delta,\beta}(v=[1,0,\ldots,0],0)$, and by applying Lemma \ref{lem:ub_secondspeedup} we get that it can be solved in $1$ round given a solution for $\Pi_{\Delta,\beta}(p(v),0)$. Applying the same reasoning, the latter problem can be solved in $1$ round given a solution for  $\Pi_{\Delta,\beta}(p(p(v)),0)$, and so on. We get that $\Pi_{\Delta,\beta}([1,0,\ldots,0],0)$ can be solved in $t$ rounds given a coloring with $c \le \size(p^t(v))$ colors.
Hence, the time required to solve $\Pi_{\Delta,\beta}([1,0,\ldots,0],0)$, and thus the $(2,\beta)$-ruling set problem, given a $c$-vertex coloring, is upper bounded by the minimum $t$ such that $c \le \size(p^t(v))$, since $\Pi_{\Delta,\beta}(p^t(v),0)$, the problem obtained by applying $t$ times Lemma \ref{lem:ub_secondspeedup}, can be solved by defining a one-to-one mapping between the given coloring and the $\size(p^t(v))$ configurations of $\nodeconst_{\Delta,\beta}(p^t(v),0)$ that do not correspond to pointers.
Let us now give an exact bound on the size of $p^t(v)$, assuming $v = [1,0,\ldots,0]$.
\begin{lemma}\label{lem:colorgrowth}
	For all $0 \le k \le \beta$, and for all $j \ge 1$, $p^j(v)_k = {j+k-1 \choose k}$.
\end{lemma}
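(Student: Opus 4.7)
The plan is to prove this by induction on $j$, with the hockey stick identity for binomial coefficients doing the real work in the inductive step.

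For the base case $j = 1$: since $v = [1, 0, \ldots, 0]$, the inclusive prefix sum gives $p(v)_k = \sum_{i=0}^{k} v_i = v_0 = 1$ for every $0 \le k \le \beta$. This matches the formula, since $\binom{1 + k - 1}{k} = \binom{k}{k} = 1$.

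For the inductive step, suppose the claim holds for some $j \ge 1$, so that $p^{j}(v)_i = \binom{j + i - 1}{i}$ for all $0 \le i \le \beta$. Then by the definition of $p$,
\[
p^{j+1}(v)_k \;=\; \sum_{i=0}^{k} p^{j}(v)_i \;=\; \sum_{i=0}^{k} \binom{j + i - 1}{i}.
\]
To close the induction it suffices to show $\sum_{i=0}^{k} \binom{j + i - 1}{i} = \binom{j + k}{k}$, which is exactly the desired value $\binom{(j+1) + k - 1}{k}$. This is a standard instance of the hockey stick identity $\sum_{i=0}^{k} \binom{n + i}{i} = \binom{n + k + 1}{k}$ applied with $n = j - 1$, and can itself be verified by a short secondary induction on $k$ using Pascal's rule $\binom{a}{b} + \binom{a}{b+1} = \binom{a+1}{b+1}$.

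There is no real obstacle here: the statement is a clean consequence of a well-known binomial identity, so the proof is essentially two nested inductions (one on $j$, one on $k$ within the hockey stick step), with Pascal's rule as the only non-trivial ingredient. The only point requiring a little care is the base case, where one must correctly interpret the initial vector $v = [1, 0, \ldots, 0]$ and observe that $\binom{j+k-1}{k}$ correctly evaluates to $1$ when $j = 1$ for every $k \geq 0$.
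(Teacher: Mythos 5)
Your proof is correct and takes essentially the same approach as the paper: induction on $j$ with the base case $j=1$ verified directly from $v = [1,0,\dots,0]$, and the inductive step closed via the hockey stick identity $\sum_{i=0}^{k} \binom{n+i}{i} = \binom{n+k+1}{k}$.
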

\begin{proof}
	For $j = 1$, the definition of $p^j(v)_k$ yields $p(v)_k = 1 + 0 + 0 + \dots + 0 = {k \choose k}$ for any $0 \leq k \leq \beta$, so the claim holds trivially in that case.
	By induction, and using the binomial identity $\sum_{a=0}^b {c+a \choose a} = {c+b+1 \choose b}$ , we obtain that for $j > 1$ and any $0 \leq k \leq \beta$, we have $p^j(v)_k = \sum_{i=0}^k p^{j-1}(v)_i = \sum_{i=0}^k {j-2+i \choose i} = {j+k-1 \choose k}$.
\end{proof}
By applying Lemma \ref{lem:colorgrowth}, we get that $\size(p^t(v)) = p^{t+1}(v)_\beta = {\beta+t \choose \beta}$. This implies that, given some $c$-coloring, the time required to solve $(2,\beta)$-ruling sets is upper bounded by the minimum $t$ such that ${\beta+t \choose \beta} \ge c$. Since, for $t = \beta \, c^{1/\beta}$, we have that ${\beta+t \choose \beta} \ge \left(\frac{t}{\beta}\right)^\beta \ge c$, we obtain that $(2,\beta)$-ruling sets can be found in $\beta \, c^{1/\beta}$ rounds. Also, by using the inequality  ${x+y \choose x} \ge \left(\frac{x}{y}\right)^y$, we obtain that $(2, \beta \, c^{1/\beta})$-ruling sets can be found in at most $\beta$ rounds, since ${\beta c^{1/\beta} + \beta \choose \beta c^{1/\beta}} \ge \left(\frac{\beta c^{1/\beta}}{\beta}\right)^\beta = c$.

\subsection{Intuition behind the algorithm}
While proving Lemma \ref{lem:ub_secondspeedup} we showed that, starting from problem $\Pi_{\Delta,\beta}(v,0)$, we can apply the round elimination theorem and obtain a problem that can be made harder such that the obtained sets of sets can be renamed to obtain problem $\Pi_{\Delta,\beta}(v',0)$, where $v'$ is the inclusive prefix sum of $v$. This implies that, given a solution for $\Pi_{\Delta,\beta}(v',0)$, we can obtain, in $0$ rounds, a solution for $\rere(\re(\Pi_{\Delta,\beta}(v,0)))$. Also, note that each problem $\Pi_{\Delta,\beta}(v,0)$ can be solved given a $c \le \size(v)$ coloring. Moreover, the round elimination theorem implies that by spending $1$ round of communication, given a solution for $\rere(\re(\Pi_{\Delta,\beta}(v,0)))$ nodes can obtain a solution for $\Pi_{\Delta,\beta}(v,0)$. Hence, given a solution for $\Pi_{\Delta,\beta}(v',0)$, we can obtain a solution for $\Pi_{\Delta,\beta}(v,0)$ by spending $1$ round of communication. In order to do that, nodes can compute (offline) the inverse renaming from the labels of the current problem to the labels (sets of sets) of the previous problem. Then, they can spend $1$ round to share their current labels with the neighbors, and choose a label for the previous problem that satisfies the previous node constraint and edge constraint (and the round elimination theorem guarantees that this is possible). Essentially, this means that, given a proof for an upper bound using round elimination, a proper algorithm can be obtained by ``executing the proof in the reverse order'', and a crucial part is that the (inverse) label renaming can be computed without need of coordination.

While the round elimination theorem guarantees us the existence of an algorithm, we can actually extract some human-understandable version of such procedure by recalling how we defined the colors of the new problem starting from the colors of the old problem. In fact, there is a very simple idea behind the round elimination generated algorithm: we have colors and pointers. Each color corresponds to a pair containing a color $c$ and a group $i$ of the previous step. Group numbers give priorities. If no neighbor has the same color $c$ with a better (lower) priority, pick color $c$, otherwise point to that neighbor.  An example of this color reduction step is shown in Figure \ref{fig:color-reduction} (note that this is essentially the inverse of the mapping previously shown in Figure \ref{fig:color-generation}). 

\begin{figure}[h]
	\centering
	\includegraphics[width=0.75\textwidth]{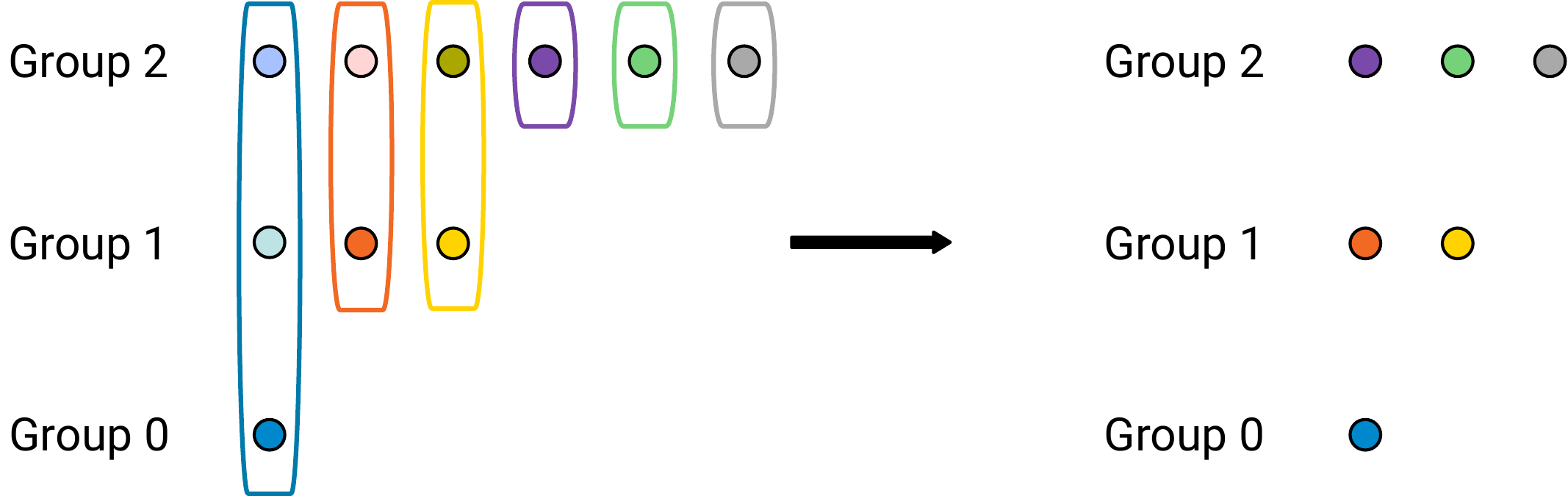}
	\caption{An example of the reduction of the colors. A pink node would either change its color to orange, if no neighbor has that color, or it would uncolor itself and point to an orange neighbor.}
	\label{fig:color-reduction}
\end{figure}

 More formally, let us now see how to transform a solution for $\Pi_{\Delta,\beta}(p^{t}(v),0)$, given a $c \le p^t(v)$ coloring, to a solution for $\Pi_{\Delta,\beta}(v=[1,0,\ldots,0],0)$, in $t$ rounds of communication. At first, nodes start by (offline) computing some mapping from colors to groups such that in each group $k$, at most $p^{t}(v)_k$ colors are present. Thus each color can be seen as $\C_{k,i'}$, for some $0 \le k \le \beta$ and $1 \le i' \le p^{t}(v)_k$. Now, let us see how nodes can convert a solution for $\Pi_{\Delta,\beta}(p^{t'+1}(v),0)$ to a solution for $\Pi_{\Delta,\beta}(p^{t'}(v),0)$ in $1$ round of communication, for each $0 \le t' < t$. Recall that each color of the current problem corresponds to a good pair of the previous problem, and nodes can compute offline such an inverse mapping. Hence each color $\C_{k,i'}$ corresponds to some pair $(\C_{i,j},k)$ satisfying $k \ge i$. The algorithm, for a node having the pair $(\C_{i,j},k)$, does the following:
\begin{enumerate}
	\item Gather the pair $(\c_u,\g_u)$ of each neighbor $u$.
	\item If no neighbor satisfies $\c_u = \C_{i,j} \land \g_u < k$, then set $\C_{i,j}$ as new color.
	\item Otherwise, output $\A_i$ on the port connecting to that neighbor, and $\B_i$ on the others.
\end{enumerate}

The reason why this algorithm works is the following. At each step the number of colors reduces, and at the end there is only one ``color'' that corresponds to nodes in the ruling set. In order to get rid of colors fast, nodes do not wait to know if nodes are in the ruling set in order to point to them. Instead, they are ensured that if they output some pointer with distance $i$, in the next step the pointed node will either output some color of group smaller than $i$, or a pointer with distance smaller than $i$. In case the neighbor outputs a color, in the next steps such node will either output a color of smaller groups or a smaller pointer (this condition is guaranteed by how ``good pairs'' are defined). Notice that the provided algorithm does not need the graph to be $\Delta$-regular, hence we get an algorithm that works in all graphs.

\section{Lower bound}\label{sec:lb}

In this section we prove lower bounds on the time complexity for computing $(2,\beta)$-ruling sets using deterministic algorithms in the port numbering model. We will prove such lower bounds using the round elimination theorem. While it is not difficult to argue that in the port numbering model $(2,\beta)$-ruling sets are not solvable with deterministic algorithms, we will later see in Section \ref{sec:liftlocal} how to convert the lower bounds obtained in this section to lower bounds for the LOCAL model. We will heavily exploit the fact that the lower bounds that we prove in this section show the existence of a sequence of problems $\Pi_0 \rightarrow \Pi_1 \rightarrow \dots \rightarrow \Pi_t$, where $\Pi_t$ is not $0$-round solvable, and the problem family is obtained by applying the round elimination theorem multiple times in a way that preserves the property that the number of labels of each problem is not too large. As we will see in Section \ref{sec:liftlocal}, this is the only property that we need, in order to ensure that, even by allowing randomization, the problems of such a family do not become much easier.

We start by defining the notion of \emph{relaxation}, which allows us to relate different configurations.
\begin{definition}
	We say that a configuration $\A_1 \s \dots \s \A_\Delta$ of sets \emph{can be relaxed to} a configuration $\B_1 \s \dots \s \B_\Delta$ of sets if there exists a bijection $\sigma: \{1, \dots, \Delta\} \to \{1, \dots, \Delta\}$ such that $\A_k \subseteq \B_{\sigma(k)}$ for each $1 \leq k \leq \Delta$.
	For simplicity, we will assume in the following w.l.o.g., that $\sigma(k) = k$ for each $1 \leq k \leq \Delta$, i.e., that the bijection is the identity.
\end{definition}

We now define a node constraint $\mathcal Z_{\Delta, \beta}(v, x)$ that will be useful later. In particular, we will show in Lemma~\ref{lem:industop} that each node configuration of $\rere(\re(\Pi_{\Delta,\beta}(v,x)))$ can be relaxed to some configuration in $\mathcal Z_{\Delta, \beta}(v, x)$.
\begin{definition}
	Recall that $\Sigma_{\Delta,\beta}(v,x)$ denotes the set of all labels used in $\Pi_{\Delta, \beta}(v, x)$, $\Sigma'_{\Delta,\beta}(v,x)$ the set of all labels used in $\re( \Pi_{\Delta,\beta}(v,x) )$, and $\mathcal C$ the set of all colors $\C_{i,j} \in \Sigma_{\Delta,\beta}(v,x)$, and that $\size(v) = v_0 + \dots + v_\beta$.
	We will denote the group of a color $\C \in \mathcal C$ by $g(\C)$, i.e., if $\C = \C_{i,j}$, then $g(\C) = i$ (similar to the already defined notion for sets of colors).
	Moreover, for any $v, x$ such that $\size(v) \cdot (x+1) + 1 \leq \Delta$, we denote by $\mathcal Z_{\Delta, \beta}(v, x)$ the set containing the configurations
	\begin{align*}
		\gen{\gen{\A_i}} \quad & \gen{\gen{\B_i}}^{\Delta-1} \enspace &\text{ for any $1 \leq i \leq \beta$,}\\
		\gen{\gen{\C}}^{\Delta - \size(v) \cdot (x+1)}\quad &\, (\Sigma'_{\Delta,\beta}(v,x))^{\size(v) \cdot (x+1)} \enspace &\text{ for any $\C \in \mathcal C$ satisfying $g(\C) = 0$, and}\\
		\gen{\gen{\C, \B_i}, \gen{\A_i}}^{\Delta - \size(v) \cdot (x+1)}\quad &\, (\Sigma'_{\Delta,\beta}(v,x))^{\size(v) \cdot (x+1)} \enspace &\text{ for any $\C \in \mathcal C$ and  $\max\{ 1, g(\C) \} \leq i \leq \beta$.}
	\end{align*}
	Here, the inner $\gen{}$ are taken w.r.t.\ $\edgeconst_{\Delta, \beta}(v, x)$, and the outer $\gen{}$ w.r.t.\ $\nodeconst'_{\Delta, \beta}(v, x)$.
\end{definition}

The next lemma restricts the space of configurations that can possibly appear in the node constraint of  $\rere(\re( \Pi_{\Delta,\beta}(v,x) ))$. We will later use this lemma inductively in order to restrict this space even further.
\begin{lemma}\label{lem:industep}
	Let $1 \leq i \leq \beta - 1$, and let $v, x$ satisfy $\size(v) \cdot (x+1) + 1 \leq \Delta$.
	Let $\mathcal U = \U_1 \s \dots \s \U_\Delta$ be a node configuration of $\rere(\re( \Pi_{\Delta,\beta}(v,x) ))$ such that
	\begin{enumerate}[label=(\roman*)]
		\item \label{item:two}  $\mathcal U$ cannot be relaxed to any configuration in $\mathcal Z_{\Delta, \beta}(v,x)$, and
		\item \label{item:four} each set contained in $\U_k$ contains $\B_i$, for any $1 \leq k \leq \Delta$. 
	\end{enumerate}
	Then, each set contained in $\U_k$ contains $\B_{i+1}$, for any $1 \leq k \leq \Delta$.
\end{lemma}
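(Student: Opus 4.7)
I argue by contradiction: suppose some $\U_{k_0}$ contains a set $S_0$ with $\B_{i+1} \notin S_0$, and show that $\mathcal U$ must then be relaxable to some configuration in $\mathcal Z_{\Delta,\beta}(v,x)$, contradicting hypothesis~(i). The first step is a structural analysis of $S_0$: by Observation~\ref{obs:rcs}, $S_0$ is right-closed with respect to $\edgeconst_{\Delta,\beta}(v,x)$, and reading the edge diagram of Section~\ref{sec:edgediag}, every $\A_j$ and every color $\C_{a,b}$ with $a \ge i+1$ is weaker than $\B_{i+1}$. Hence $\B_{i+1} \notin S_0$ forces $S_0$ to contain no pointer label and only colors of group at most $i$; combined with~(ii) and right-closure, $\{\B_1,\dots,\B_i\} \subseteq S_0$.

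Next, I would split on whether some position forces a pointer. If there is an index $j^* \le i$ and a position $k^*$ with $\U_{k^*} \subseteq \gen{\gen{\A_{j^*}}}$ (that is, every set in $\U_{k^*}$ contains $\A_{j^*}$), then~(ii) together with right-closure (a set containing $\B_i$ also contains $\B_{j^*}$ when $j^* \le i$) gives $\U_k \subseteq \gen{\gen{\B_{j^*}}}$ for every $k$, so $\mathcal U$ relaxes to the type-(a) configuration $\gen{\gen{\A_{j^*}}} \s \gen{\gen{\B_{j^*}}}^{\Delta-1}$ of $\mathcal Z_{\Delta,\beta}(v,x)$, giving the contradiction. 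In the opposite case, for every $j \le i$ and every $k$ there is some $R_{k,j} \in \U_k$ avoiding $\A_j$; since $S_0$ already avoids every $\A$-label, building a tuple that places $S_0$ at $k_0$ and an $\A_j$-avoiding set at every other position yields a configuration in $\nodeconst'_{\Delta,\beta}(v,x)$ (by Lemma~\ref{lemma:nodeconst'}) that contains no $\A_j$ anywhere, which combined with $S_0$ missing $\B_{j'}$ for all $j' > i$ rules out every type-2 pattern and leaves only type-1 patterns $\dis(\gen{\gen{\C_{a,b}}})^{\Delta-x} \s \dis(\gen{\gen{\X}})^x$ as possibilities.

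The hard part, which I expect to be the main obstacle, is closing this second case via the inequality $\size(v)\cdot(x+1) + 1 \le \Delta$. The plan is a pigeonhole over positions: for each candidate pattern of type~(b) or type~(c) of $\mathcal Z_{\Delta,\beta}(v,x)$---determined by a color $\C$ and (for type~(c)) an index $j$ with $\max\{1,g(\C)\} \le j \le \beta$---either at most $\size(v)(x+1)$ positions are ``bad'' (contain a set avoiding both $\C$ and $\A_j$), in which case $\mathcal U$ directly relaxes to that pattern, or there are at least $\size(v)(x+1)+1$ such bad positions. In the latter case, the inequality provides enough simultaneously avoiding positions across all $\size(v)$ colors to assemble a tuple $(T_1,\dots,T_\Delta) \in \U_1 \times \dots \times \U_\Delta$ whose corresponding configuration matches no type-1 pattern of $\nodeconst'_{\Delta,\beta}(v,x)$ either (too many positions lack every color, while only $x$ wildcard slots are available), contradicting $\mathcal U \in \nodeconst''_{\Delta,\beta}(v,x)$. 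Making this accounting work uniformly over the $\size(v)$ colors and the $\beta$ pointer indices, and correctly matching the placement of wildcards in type~1 against the $\size(v)(x+1)$ free slots in the $\mathcal Z$ patterns, is the combinatorial heart of the argument.
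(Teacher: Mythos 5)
Your proposal follows the same overall strategy as the paper's proof: analyze the structure of a putative ``bad'' set $S_0$, handle the case where some position forces a pointer $\A_{j^*}$ (your Case~A is the paper's Claim~2, via the relaxation to the type-(a) pattern), and then, when no position forces a pointer, assemble a tuple from $\U_1 \times \dots \times \U_\Delta$ that lies in no condensed configuration of $\nodeconst'_{\Delta,\beta}(v,x)$, yielding the contradiction. The structural reading of $S_0$ (no $\A$-labels, no $\B_j$ for $j > i$, contains $\B_1, \dots, \B_i$) and the wildcard-counting used to kill the color patterns are all correct in spirit.

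However, there is a genuine gap in the pigeonhole step, precisely in the direction you yourself flag as the ``combinatorial heart.'' You assert that if for some pair $(\C, j)$ at most $\size(v)(x+1)$ positions contain a set avoiding both $\C$ and $\A_j$, then $\mathcal U$ relaxes to the $\mathcal Z$-pattern $\gen{\gen{\C, \B_j}, \gen{\A_j}}^{\Delta - \size(v)(x+1)} \, (\Sigma'_{\Delta,\beta}(v,x))^{\size(v)(x+1)}$. This is false in general once $j > \max\{i, g(\C)\}$: a ``good'' position may have a set $\W$ that contains $\C$ but is not a superset of $\gen{\C, \B_j}$, because hypothesis~(ii) only guarantees $\B_i \in \W$ (hence $\B_1, \dots, \B_i$ by right-closure), and neither $\C$ nor $\B_i$ forces $\B_j$ when $j > \max\{i, g(\C)\}$. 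The paper avoids this by \emph{fixing} the index in the pattern to $i$ throughout (Claim~1 asserts: for each color $\C$, at least $\size(v)(x+1)+1$ positions contain a set avoiding both $\A_i$ \emph{and} $\C$), and then observing a posteriori that $\gen{\C, \B_i} = \gen{\C, \B_{\max\{i, g(\C)\}}}$, so the relaxed pattern is indeed in $\mathcal Z$. If you restrict your pigeonhole to that single, well-chosen index per color, your Case~B dichotomy becomes exactly the paper's Claim~1, and the tuple you then build---$S_0$ at one position, $(x+1)$ positions per color avoiding both $\A_i$ and that color (possible by the assumption $\size(v)(x+1)+1 \le \Delta$), the rest filled with $\A_i$-avoiding sets from Case~B---lies in no condensed node configuration of $\re(\Pi_{\Delta,\beta}(v,x))$, finishing the proof. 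As written, though, the quantification over all $j$ with $\max\{1, g(\C)\} \le j \le \beta$ makes the ``relax'' branch of your dichotomy unsound, and the subsequent tuple assembly is left as a sketch, so the proof is not complete.
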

\begin{proof}
	We begin by proving two useful claims.

	\underline{Claim 1:} For any color $\C \in \mathcal C$, there are at least $\size(v) \cdot (x+1) + 1$ distinct indices $k$ such that $\U_k$ contains a set that contains neither $\A_i$ nor $\C$.
	
	For a contradiction, suppose that there is a color $\C$ for which the claim is false.
	W.l.o.g., we can assume that $\U_1, \dots, \U_{k'}$ are exactly those $\U_k$ that contain a set that contains neither $\A_i$ nor $\C$, where $k' \leq \size(v) \cdot (x+1)$.
	Now consider an arbitrary set $\U_k$ with $k > k'$, and some arbitrary set $\W \in \U_k$.
	Since $k > k'$, we know that $\W$ must contain $\A_i$ or $\C$.
	As $\W$ also contains $\B_i$ by Property~$\ref{item:four}$ and is right-closed by Observation~\ref{obs:rcs} (since $\W$ is a label/set used in $\re(\Pi_{\Delta, \beta}(v, x))$), it follows that $\W$ is a superset of $\gen{\C, \B_i}$ or a superset of $\gen{\A_i, \B_i}$.
	Since $\B_i$ is stronger than $\A_i$ according to $\edgeconst_{\Delta, \beta}(v, x)$, we have $\gen{\A_i, \B_i} = \gen{\A_i}$; hence, $\W \in \gen{\gen{\C, \B_i}, \gen{\A_i}}$, by Lemma~\ref{lem:restrong}, and we obtain that $\U_k \subseteq \gen{\gen{\C, \B_i}, \gen{\A_i}}$.
	This implies that every $\U_k$ with $k > k'$ is a subset of $\gen{\gen{\C, \B_i}, \gen{\A_i}}$, and since every $\U_k$ is a subset of $\Sigma'_{\Delta,\beta}(v,x)$, and $k' \leq \size(v) \cdot (x+1)$, it follows that $\mathcal U$ can be relaxed to the configuration $\gen{\gen{\C, \B_i}, \gen{\A_i}}^{\Delta - \size(v) \cdot (x+1)} \, (\Sigma'_{\Delta,\beta}(v,x))^{\size(v) \cdot (x+1)}$.
	We observe that, for all $1 \leq j \leq g(\C)$, we have that $\gen{\C, \B_j} = \gen{\C}$ since $\B_j$ is stronger than $\C$ according to $\edgeconst_{\Delta, \beta}(v, x)$.
	Therefore, $\gen{\C, \B_i} = \gen{\C, \B_{\max\{i, g(\C)\}}}$, which implies that the configuration $\gen{\gen{\C, \B_i}, \gen{\A_i}}^{\Delta - \size(v) \cdot (x+1)} \, (\Sigma'_{\Delta,\beta}(v,x))^{\size(v) \cdot (x+1)}$ is contained in $\mathcal Z_{\Delta, \beta}(v, x)$.
	This yields a contradiction to Property~$\ref{item:two}$ and proves the claim.

	\underline{Claim 2:} Each $\U_k$ contains a set that does not contain $\A_i$.
	
	For a contradiction, suppose that the claim is false, and let $k'$ be an index such that any set $\W$ contained in $\U_{k'}$ contains $\A_i$.
	Since any such $\W$ is right-closed by Observation~\ref{obs:rcs} (since $W$ is a label/set used in $\re(\Pi_{\Delta, \beta}(v, x))$), this implies that $\gen{\A_i} \subseteq \W$ for any $\W \in \U_{k'}$.
	Hence, by Lemma~\ref{lem:restrong}, we obtain that, for any $\W \in \U_{k'}$, $\W$ is at least as strong as $\gen{\A_i}$ according to $\nodeconst'_{\Delta, \beta}(v, x)$, and it follows that $\U_{k'}$ is a subset of $\gen{\gen{\A_i}}$.
	With an analogous argumentation, we see that any $\U_k$ is a subset of $\gen{\gen{\B_i}}$ since for any $\U_k$, each set contained in $\U_k$ contains $\B_i$, by Property~$\ref{item:four}$.
	It follows that $\mathcal U$ can be relaxed to the configuration $\gen{\gen{\A_i}} \s \gen{\gen{\B_i}}^{\Delta-1}$ contained in $\mathcal Z_{\Delta, \beta}(v, x)$, yielding a contradiction to Property~$\ref{item:two}$ and proving the claim.

	Now, we are set to prove the lemma.
	For a contradiction, suppose that the lemma does not hold, and, w.l.o.g., let $\W_1 \in \U_1$ be a set that does not contain $\B_{i+1}$.
	We will now pick one set from each $\U_k$ such that the obtained configuration satisfies certain useful properties.
	Consider the partial configuration of sets obtained by picking $\W_1$ from $\U_1$, and picking, for each color $\C \in \mathcal C$, $(x+1)$ sets $\W_k$ from distinct $\U_k$ such that $\W_k$ contains neither $\A_i$ nor $\C$.
	To be precise, the $\size(v) \cdot (x+1) + 1$ sets picked so far are picked from $\size(v) \cdot (x+1) + 1$ $\U_k$ with pairwise distinct index.
	This is possible due to Claim 1 (by simply picking $\W_1$ first, and then picking the remaining $\size(v) \cdot (x+1)$ sets with the desired properties in an arbitrary order).
	We complete this partial configuration to a configuration $\W_1, \s \dots \s \W_\Delta$ containing $\Delta$ sets by picking from each remaining $\U_k$ a set that does not contain $\A_i$.
	This is possible by Claim 2.

	We argue that $\W_1 \s \dots \s \W_\Delta$ is not a node configuration of $\re( \Pi_{\Delta,\beta}(v,x) )$:
	Since, by construction, for each color $\C \in \mathcal C$, there are at least $x+1$ distinct indices $k$ such that $\W_k$ does not contain $\C$, the configuration $\W_1 \s \dots \s \W_\Delta$ is not contained in the condensed configuration $\dis(\gen{\gen{\C}})^{\Delta - x} \s \dis(\gen{\gen{\X}})^x$.
	Moreover, as $\B_{i+1}$ is stronger than $\A_i$ according to $\edgeconst_{\Delta, \beta}(v, x)$, the fact that $\W_1$ does not contain $\B_{i+1}$ implies that $\W_1$ also does not contain $\A_i$, as $\W_1$ is right-closed by Observation~\ref{obs:rcs}.
	Thus, none of the $\W_k$ contains $\A_i$, and we obtain for each $i' \leq i$ that none of the $\W_k$ contains $\A_{i'}$, since $\A_i$ is stronger than $\A_{i'}$ according to $\edgeconst_{\Delta, \beta}(v, x)$ and each $\W_k$ is right-closed.
	It follows that $\W_1 \s \dots \s \W_\Delta$ is not contained in the condensed configuration $\dis(\gen{\gen{\A_{i'}}}) \s \dis(\gen{\gen{\B_{i'}}})^{\Delta-1}$, for any $i' \leq i$.
	Finally, for any $i' > i$, the fact that $\W_1$ does not contain $\B_{i+1}$ implies that $\W_1$ also contains neither $\A_{i'}$ nor $\B_{i'}$ since both are at least as weak as $\B_{i+1}$ according to $\edgeconst_{\Delta, \beta}(v, x)$ and $\W_1$ is right-closed.
	Hence, $\W_1 \s \dots \s \W_\Delta$ is not contained in the condensed configuration $\dis(\gen{\gen{\A_{i'}}}) \s \dis(\gen{\gen{\B_{i'}}})^{\Delta-1}$, for any $i' > i$.
	
	Therefore, $\W_1 \s \dots \s \W_\Delta$ is contained in none of the condensed configurations in $\nodeconst'_{\Delta, \beta}(v, x)$, and we conclude that $\W_1 \s \dots \s \W_\Delta$ is not a node configuration of $\re( \Pi_{\Delta,\beta}(v,x) )$.
	Since we have $\W_k \in \U_k$ for each $1\leq k \leq \Delta$, it follows by the definition of the function $\rere(\cdot)$ that $\U_1 \s \dots \s \U_\Delta$ is not a node configuration of $\rere(\re( \Pi_{\Delta,\beta}(v,x) ))$.
	This yields the desired contradiction and proves the lemma.
\end{proof}

We now prove an analogous statement of the previous lemma, for the case where $i=0$.
\begin{lemma}\label{lem:industart}
	Let $v, x$ satisfy $\size(v) \cdot (x+1) + 1 \leq \Delta$, and let $\mathcal U = \U_1 \s \dots \s \U_\Delta$ be a node configuration of $\rere(\re( \Pi_{\Delta,\beta}(v,x) ))$ such that $\mathcal U$ cannot be relaxed to any configuration in $\mathcal Z_{\Delta, \beta}(v,x)$.
	Then, each set contained in $\U_k$ contains $\B_1$, for any $1 \leq k \leq \Delta$.
\end{lemma}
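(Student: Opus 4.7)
The argument mirrors the proof of \lemmaref{lem:industep} and serves as its ``$i=0$'' base case. Suppose, for a contradiction, that the lemma fails; then some $\U_k$ contains a set not containing $\B_1$, and after relabeling we may call this set $\W_1\in\U_1$. By Observation~\ref{obs:rcs}, $\W_1$ is right-closed w.r.t.\ $\edgeconst_{\Delta,\beta}(v,x)$, and by the edge diagram of Section~\ref{sec:edgediag}, $\B_1$ is stronger than every $\A_i$ and every $\B_i$ with $1\le i\le\beta$. Hence $\W_1$ contains neither $\A_i$ nor $\B_i$ for any such $i$; this single property will be enough to dispose of all pointer configurations in the final step.

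\textbf{Key claim (analog of Claim~1 of \lemmaref{lem:industep}).} I would first establish: for every color $\C\in\mathcal C$, at least $\size(v)\cdot(x+1)+1$ distinct indices $k$ admit a set $\W\in\U_k$ with $\W\not\ni\C$. The proof proceeds by contradiction. If fewer such indices exist, then for at least $\Delta-\size(v)(x+1)$ indices $k$, every $\W\in\U_k$ contains $\C$; by right-closedness (Observation~\ref{obs:rcs}) and Lemma~\ref{lem:restrong}, $\U_k\subseteq\gen{\gen{\C}}$ at these indices, so $\mathcal U$ relaxes to $\gen{\gen{\C}}^{\Delta-\size(v)(x+1)}\,(\Sigma'_{\Delta,\beta}(v,x))^{\size(v)(x+1)}$. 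If $g(\C)=0$, this already lies in $\mathcal Z_{\Delta,\beta}(v,x)$ as a type-2 entry, a contradiction. If $g(\C)=g\ge 1$, since $\B_g$ is stronger than $\C$ in $\edgeconst_{\Delta,\beta}(v,x)$ we have $\gen{\C,\B_g}=\gen{\C}$, and hence $\gen{\gen{\C}}\subseteq\gen{\gen{\C,\B_g},\gen{\A_g}}$; thus $\mathcal U$ further relaxes to the type-3 entry of $\mathcal Z_{\Delta,\beta}(v,x)$ indexed by $(\C,g)$, again a contradiction.

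\textbf{Constructing the witness.} Using the claim, I would build $\W_1\s\dots\s\W_\Delta$ with $\W_k\in\U_k$ that fails to be a node configuration of $\re(\Pi_{\Delta,\beta}(v,x))$. Keep the already fixed $\W_1\in\U_1$, then process the $\size(v)$ colors $\C\in\mathcal C$ in an arbitrary order: for each such $\C$, greedily pick $x+1$ fresh indices $k\ne 1$ (distinct from all indices already committed) together with sets $\W_k\in\U_k$ satisfying $\W_k\not\ni\C$. The claim supplies $\size(v)(x+1)+1$ candidate indices per color, while at most $1+(\size(v)-1)(x+1)<\size(v)(x+1)+1$ indices have been committed before the final color, so the selection is always possible; the hypothesis $\size(v)(x+1)+1\le\Delta$ ensures we do not exceed the number of positions. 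Fill any remaining positions with arbitrary sets from the corresponding $\U_k$. For each color $\C$, at least $x+1$ positions satisfy $\W_k\not\ni\C$, so these cannot all be assigned to the $x$ many ``$\X$-slots'' of $\dis(\gen{\gen{\C}})^{\Delta-x}\s\dis(\gen{\gen{\X}})^{x}$, ruling out that condensed configuration. For each $i$, the condensed configuration $\dis(\gen{\gen{\A_i}})\s\dis(\gen{\gen{\B_i}})^{\Delta-1}$ is also ruled out, because $\W_1$ fits neither an ``$\A_i$-slot'' nor a ``$\B_i$-slot''. Hence $\W_1\s\dots\s\W_\Delta\notin\nodeconst'_{\Delta,\beta}(v,x)$, which by the definition of $\rere(\cdot)$ contradicts the assumption that $\U_1\s\dots\s\U_\Delta$ is a node configuration of $\rere(\re(\Pi_{\Delta,\beta}(v,x)))$.

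\textbf{Anticipated obstacle.} The main subtlety is Case~B of the key claim: the natural relaxation ``$\U_k\subseteq\gen{\gen{\C}}$'' is not itself among the configurations of $\mathcal Z_{\Delta,\beta}(v,x)$ when $g(\C)\ge 1$, and one must route it through a type-3 entry via the edge-diagram identity $\gen{\C,\B_{g(\C)}}=\gen{\C}$. Apart from this, the proof is in fact simpler than that of \lemmaref{lem:industep}: no analog of Claim~2 is required, because the single property ``$\W_1\not\ni\B_1$'' simultaneously eliminates every pointer condensed configuration for all $1\le i\le\beta$.
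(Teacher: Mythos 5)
Your proposal is correct and follows essentially the same route as the paper: the same Claim~1, the same two-case relaxation argument (noting the $g(\C)\ge 1$ case must be routed through the type-3 entry via $\gen{\C,\B_{g(\C)}}=\gen{\C}$), and the same witness construction contradicting membership in $\nodeconst'_{\Delta,\beta}(v,x)$. The only differences are cosmetic: you make the greedy index-selection explicit where the paper merely cites analogy with Lemma~\ref{lem:industep}, and in Case~B you use the inclusion $\gen{\gen{\C}}\subseteq\gen{\gen{\C,\B_{g(\C)}},\gen{\A_{g(\C)}}}$ (which suffices) rather than the equality the paper derives via Corollary~\ref{cor:strongflip}.
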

\begin{proof}
	This proof is essentially a special case of the proof of Lemma~\ref{lem:industep} that only requires a subset of the arguments.
	Given that the approach is analogous, we will only provide the minimally necessary arguments without additional explanations.
	
	\underline{Claim 1:} For any color $\C \in \mathcal C$, there are at least $\size(v) \cdot (x+1) + 1$ distinct indices $k$ such that $\U_k$ contains a set that does not contain $\C$.
	
	For a contradiction, we assume that the claim is false for some color $\C$, and that $\U_1, \dots, \U_{k'}$ are those $\U_k$ that contain a set that does not contain $\C$, where $k' \leq \size(v) \cdot (x+1)$.
	Hence, for any $\U_k$ with $k > k'$, any set contained in $\U_k$ must contain $\C$, which implies that $\U_k \subseteq \gen{\gen{\C}}$, due to the right-closedness of all sets contained in $\U_k$ and Lemma~\ref{lem:restrong}.
	Thus, $\mathcal U$ can be relaxed to $\gen{\gen{\C}}^{\Delta - \size(v) \cdot (x+1)} \, (\Sigma'_{\Delta,\beta}(v,x))^{\size(v) \cdot (x+1)}$.
	If $g(\C) = 0$, the latter configuration is contained in $\mathcal Z_{\Delta, \beta}(v, x)$, yielding a contradiction and proving the claim.
	Hence, assume that $g(\C) \geq 1$.
	Since $\B_{g(\C)}$ is stronger than $\C$ according to $\edgeconst_{\Delta, \beta}(v, x)$, we have $\gen{\C} = \gen{\C, \B_{g(\C)}}$, by the definition of $\gen{}$.
	Furthermore, since $\A_{g(\C)}$ is weaker than $\C$ according to $\edgeconst_{\Delta, \beta}(v, x)$, we have that $\gen{\A_{g(\C)}}$ is stronger than $\gen{\C}$ according to $\nodeconst'_{\Delta, \beta}(v, x)$, by Corollary~\ref{cor:strongflip}.
	Hence, $\gen{\gen{\C}} = \gen{\gen{\C}, \gen{\A_{g(\C)}}} = \gen{\gen{\C, \B_{g(\C)}}, \gen{\A_{g(\C)}}}$.
	It follows that the configuration $\gen{\gen{\C}}^{\Delta - \size(v) \cdot (x+1)} \s (\Sigma'_{\Delta,\beta}(v,x))^{\size(v) \cdot (x+1)}$ is contained in $\mathcal Z_{\Delta, \beta}(v, x)$, which yields a contradiction and proves the claim.

	Now assume for a contradiction that the lemma does not hold, and w.l.o.g., pick one set $\W_k$ from each $\U_k$, such that $\W_1$ does not contain $\B_1$, and for each color $\C$, there are at least $(x+1)$ distinct $k$ such that $\W_k$ does not contain $\C$.
	This is possible due to Claim 1, by an analogous argumentation to the one in the proof of Lemma~\ref{lem:industep}.

	Now, again, the fact that for each color $\C$, at least $x+1$ many $\W_k$ do not contain $\C$ ensures that $\W_1 \s \dots \s \W_\Delta$ is not contained in the condensed configuration $\dis(\gen{\gen{\C}})^{\Delta - x} \s \dis(\gen{\gen{\X}})^x$.
	Also, since for any $1 \leq i \leq \beta$ both $\A_i$ and $\B_i$ are at least as weak as $\B_1$ according to $\edgeconst_{\Delta, \beta}(v, x)$, $\W_1$ does not contain any $\A_i$ or $\B_i$ due to the right-closedness of $\W_1$, which ensures that $\W_1 \s \dots \s \W_\Delta$ is not contained in the condensed configuration $\dis(\gen{\gen{\A_i}}) \s \dis(\gen{\gen{\B_i}})^{\Delta-1}$, for any $1 \leq i \leq \beta$.
	Hence, $\W_1 \s \dots \s \W_\Delta$ is not a node configuration of $\re( \Pi_{\Delta,\beta}(v,x) )$, and we conclude that $\mathcal U$ is not a node configuration of $\rere(\re( \Pi_{\Delta,\beta}(v,x) ))$.
	This yields the desired contradiction and proves the lemma.
\end{proof}

Again, we prove an analogous statement of Lemma \ref{lem:industep}, for the case where $i=\beta$.
\begin{lemma}\label{lem:industop}
	Let $v, x$ satisfy $\size(v) \cdot (x+1) + 1 \leq \Delta$.
	Then any node configuration of $\rere(\re( \Pi_{\Delta,\beta}(v,x) ))$ can be relaxed to some configuration in $\mathcal Z_{\Delta, \beta}(v,x)$.
\end{lemma}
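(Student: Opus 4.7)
The plan is to argue by contradiction, iterating Lemmas \ref{lem:industart} and \ref{lem:industep} and then replaying the construction inside the proof of Lemma \ref{lem:industep} at the boundary $i = \beta$, where the would-be inductive step turns into an outright contradiction.

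First I would suppose, for the sake of contradiction, that $\mathcal U = \U_1 \s \dots \s \U_\Delta$ is a node configuration of $\rere(\re( \Pi_{\Delta,\beta}(v,x) ))$ that cannot be relaxed to any configuration in $\mathcal Z_{\Delta, \beta}(v,x)$. Lemma \ref{lem:industart} yields that every set in every $\U_k$ contains $\B_1$. Applying Lemma \ref{lem:industep} successively for $i = 1, 2, \ldots, \beta-1$ (the non-relaxability hypothesis is independent of $i$ and is preserved at each step), I would conclude that every set in every $\U_k$ contains $\B_\beta$.

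Next I would replay, in the case $i = \beta$, the construction used inside the proof of Lemma \ref{lem:industep}. The two internal claims there carry over verbatim, because their proofs rely only on the non-relaxability of $\mathcal U$ together with the fact that every set contains $\B_i$, which we now have for $i = \beta$: for every color $\C \in \mathcal C$, at least $\size(v) \cdot (x+1) + 1$ distinct indices $k$ are such that $\U_k$ contains a set that contains neither $\A_\beta$ nor $\C$, and each $\U_k$ contains a set that does not contain $\A_\beta$. A greedy selection then yields $\W_k \in \U_k$ for every $k$ such that no $\W_k$ contains $\A_\beta$ and, for every color $\C$, at least $x+1$ of the $\W_k$'s do not contain $\C$. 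The greedy succeeds because after processing $j-1$ colors we have used at most $(\size(v)-1)(x+1)$ indices, leaving at least $(\size(v)(x+1)+1) - (\size(v)-1)(x+1) = x+2$ valid indices available for the next color.

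Finally, I would check that $\W_1 \s \dots \s \W_\Delta$ lies in no condensed configuration of $\nodeconst'_{\Delta,\beta}(v,x)$ listed in Lemma \ref{lemma:nodeconst'}. The color configurations $\dis(\gen{\gen{\C}})^{\Delta-x} \s \dis(\gen{\gen{\X}})^x$ are ruled out because for each $\C$ at most $\Delta-x-1$ of the $\W_k$'s lie in $\gen{\gen{\C}}$, and each pointer configuration $\dis(\gen{\gen{\A_i}}) \s \dis(\gen{\gen{\B_i}})^{\Delta-1}$ is ruled out because $\A_\beta \in \gen{\A_i}$ (since $\A_\beta$ is at least as strong as every $\A_i$ with respect to $\edgeconst_{\Delta,\beta}(v,x)$), so any $\W_k \in \gen{\gen{\A_i}}$ would be a superset of $\gen{\A_i}$ by Lemma \ref{lem:restrong} and hence contain $\A_\beta$, contradicting our selection. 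By the definition of $\rere(\cdot)$, this gives $\mathcal U \notin \nodeconst''_{\Delta,\beta}(v,x)$, the desired contradiction. I do not expect a genuine obstacle here; the only thing to verify carefully is that both internal claims and the greedy argument transfer to the boundary case $i = \beta$, which they do without modification because the absence of $\B_{\beta+1}$ is precisely what converts the would-be lifting step of Lemma \ref{lem:industep} into the final contradiction.
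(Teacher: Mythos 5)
Your proof is correct and follows essentially the same route as the paper: apply Lemma~\ref{lem:industart} and then Lemma~\ref{lem:industep} for $i = 1, \dots, \beta-1$ to get $\B_\beta$ into every set, then replay Claims~1 and~2 and the greedy selection at $i = \beta$ (without the special $\W_1$), and derive the contradiction from the resulting $\W_1 \s \dots \s \W_\Delta$ failing to match any condensed configuration of $\nodeconst'_{\Delta,\beta}(v,x)$. Your elaboration of the greedy count and your direct use of Lemma~\ref{lem:restrong} (rather than right-closedness) to rule out the pointer configurations are minor presentational variants of the paper's argument, not a different method.
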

\begin{proof}
	This proof is essentially a special case of the proof of Lemma~\ref{lem:industep}, where $i = \beta$.
	For a contradiction, assume that the lemma is false, and let $\mathcal U = \U_1 \s \dots \s \U_\Delta$ be a node configuration of $\rere(\re( \Pi_{\Delta,\beta}(v,x) ))$ such that $\mathcal U$ cannot be relaxed to any configuration in $\mathcal Z_{\Delta, \beta}(v,x)$.
	By Lemma~\ref{lem:industep} and Lemma~\ref{lem:industart}, we know that for each $1 \leq k \leq \Delta$, any set contained in $\U_k$ contains $\B_\beta$.
	Hence, the premises of Lemma~\ref{lem:industep} are satisfied for $i = \beta$, and by an analogous argumentation to the one in the proof of Lemma~\ref{lem:industep}, we obtain the following: we can pick one set $\W_k$ from each $\U_k$, such that for each color $\C$, there are at least $(x+1)$ distinct $k$ such that $\W_k$ does not contain $\C$, and each $\W_k$ does not contain $\A_\beta$.
	We note that in our case of $i = \beta$, we do not pick a special set $\W_1$, and that the condition $i \leq \beta - 1$ in Lemma~\ref{lem:industep} is not used in the parts of the proof we use here analogously.
	
	Now, similarly as before, the fact that there are at least $(x+1)$ distinct $k$ such that $\W_k$ does not contain $\C$ ensures that $\W_1 \s \dots \s \W_\Delta$ is not contained in the condensed configuration $\dis(\gen{\gen{\C}})^{\Delta - x} \s \dis(\gen{\gen{\X}})^x$, and the fact that $\A_\beta$ is not contained in any $\W_k$ ensures that, for any $1 \leq i \leq \beta$, label $\A_i$ is not contained in any $\W_k$, which in turn implies that $\W_1 \s \dots \s \W_\Delta$ is not contained in the condensed configuration $\dis(\gen{\gen{\A_i}}) \s \dis(\gen{\gen{\B_i}})^{\Delta-1}$, for any $1 \leq i \leq \beta$.
	Again, we obtain a contradiction to the fact that $\mathcal U$ is a node configuration of $\rere(\re( \Pi_{\Delta,\beta}(v,x) ))$.
\end{proof}

We now prove that, given a solution for $\rere(\re( \Pi_{\Delta,\beta}(v,x) ))$, problem $\Pi_{\Delta, \beta}(v', x')$ can be solved in $0$ rounds, for some specific value of $v'$ and $x'$. We will exploit the fact that any node configuration of $\rere(\re( \Pi_{\Delta,\beta}(v,x) ))$ can be relaxed to some node configuration of $\Pi_{\Delta, \beta}(v', x')$, and we will additionally relate the edge constraints of the two problems.
\begin{lemma}\label{lem:rerere}
	Let $v, x$ satisfy $\size(v) \cdot (x+1) + 1 \leq \Delta$, and set $v' := [v'_0, \dots, v'_\beta]$, $v'_i := \sum_{j=0}^i v_j$ for all $0 \leq i \leq \beta$, and $x' := \size(v) \cdot (x+1)$.
	Then, given a solution for $\rere(\re( \Pi_{\Delta,\beta}(v,x) ))$, problem $\Pi_{\Delta, \beta}(v', x')$ can be solved in $0$ rounds.
\end{lemma}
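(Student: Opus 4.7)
The plan is a direct $0$-round reduction. Each node $u$ reads its multiset of incident labels $\U_1 \s \dots \s \U_\Delta$ from the given solution; by Lemma~\ref{lem:industop}, this configuration can be relaxed to some element of $\mathcal Z_{\Delta,\beta}(v,x)$, and $u$ picks one such relaxation arbitrarily. I identify the new colors of $\Pi_{\Delta,\beta}(v',x')$ with pairs $(\C, i')$ where $\C \in \mathcal C$, $0 \leq i' \leq \beta$, and $g(\C) \leq i'$; the number of such pairs with second coordinate $i'$ is $\sum_{j=0}^{i'} v_j = v'_{i'}$, matching the group-$i'$ palette of the new problem. Depending on the relaxation, $u$ outputs: (i) from $\gen{\gen{\A_i}} \s \gen{\gen{\B_i}}^{\Delta-1}$, the labels $\A_i$ and $\B_i^{\Delta-1}$ on the corresponding edges; (ii) from $\gen{\gen{\C}}^{\Delta - x'} \s (\Sigma'_{\Delta,\beta}(v,x))^{x'}$ with $g(\C) = 0$, the new color $(\C, 0)$ on the first $\Delta - x'$ edges and $\X$ on the remaining $x'$ edges; (iii) from $\gen{\gen{\C, \B_i}, \gen{\A_i}}^{\Delta - x'} \s (\Sigma'_{\Delta,\beta}(v,x))^{x'}$, the new color $(\C, i)$ on the first $\Delta - x'$ edges and $\X$ on the remaining $x'$ edges. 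Each emitted configuration is thus either $\A_i \s \B_i^{\Delta-1}$ or $\c^{\Delta - x'} \s \X^{x'}$, so the node constraint of $\Pi_{\Delta,\beta}(v',x')$ is satisfied automatically.

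\textbf{Edge constraint, setup.} Fix an edge $e$ between $u$ and $v$, let $\U, \W$ be the two sets assigned to $e$ by the given solution, and let $L_u, L_v$ be the corresponding new labels. If either label is $\X$, the edge is trivially satisfied because $x' \geq 1$ and $\X$ is compatible with every label in $\edgeconst_{\Delta,\beta}(v',x')$. Otherwise, the fact that $\U \s \W$ lies in the edge constraint of $\rere(\re(\Pi_{\Delta,\beta}(v,x)))$ yields $\U_0 \in \U$ and $\W_0 \in \W$ with $\U_0 \s \W_0 \in \edgeconst'_{\Delta,\beta}(v,x)$, and Lemma~\ref{lemma:edgeconst'} lets me write $\{\U_0, \W_0\} = \{\S_1((\mathscr{C}, k)), \S_2((\mathscr{C}, k))\}$ for some good pair $(\mathscr{C}, k)$. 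The strategy is to translate each $L_u$ value into a forced membership in $\U_0$ (and symmetrically for $L_v$): $L_u = \A_i$ forces $\A_i \in \U_0$; $L_u = \B_i$ forces $\B_i \in \U_0$; and $L_u$ being a new color $(\C, i)$ with $i \geq 1$ forces either $\{\C, \B_i\} \subseteq \U_0$ or $\A_i \in \U_0$, while $i = 0$ forces simply $\C \in \U_0$. Combining these with the explicit description of $\S_1$ and $\S_2$ pins down $\mathscr{C}$ and $k$ enough to certify $L_u \s L_v \in \edgeconst_{\Delta,\beta}(v',x')$.

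\textbf{Case analysis (the main obstacle).} The hardest part is the exhaustive case analysis over the types of $L_u, L_v$. For instance, $L_u = \A_i$ forces $\U_0 = \S_2((\mathscr{C}, k))$ and $i > k$ (since $\S_1$ contains no $\A$-labels); combined with $L_v = \B_j$, which forces $\W_0 = \S_1((\mathscr{C}, k))$ and $j \leq k$, this gives $j < i$ and hence $\A_i \s \B_j \in \edgeconst_{\Delta,\beta}(v',x')$. Two new colors $L_u = (\C^u, i^u)$ and $L_v = (\C^v, i^v)$ must be distinct: if they coincided, the dichotomy ``$\C \in \S_1((\mathscr{C},k))$ iff $\C \in \mathscr{C}$'' versus ``$\C \in \S_2((\mathscr{C},k))$ iff $\C \notin \mathscr{C}$'', together with the index relations forced by the $\B_i$ or $\A_i$ memberships, leads to a contradiction regardless of which side of the good pair is $\U_0$. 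The remaining combinations (pointer versus color, $\B_i$ versus $\B_j$, and $\A$ versus color) follow the same template, each reducing to a short index comparison inside the good pair. Two subtleties to watch: the group-$0$ branch of the relaxation uses $\gen{\gen{\C}}$ rather than $\gen{\gen{\C,\B_i},\gen{\A_i}}$, so the ``$\A_i$ alternative'' inside $\U_0$ is unavailable and one concludes $\C \in \U_0$ directly; and the $x'$ ``free'' slots of the color relaxations are emitted as $\X$ on our side and hence require no further verification.
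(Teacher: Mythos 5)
Your proposal is correct and follows essentially the same approach as the paper: relax each node's configuration to one in $\mathcal Z_{\Delta,\beta}(v,x)$ via Lemma~\ref{lem:industop}, map the three relaxation shapes to $\A'_i \s {\B'_i}^{\Delta-1}$ or $\c^{\Delta-x'} \s \X'^{x'}$, and verify the edge constraint by extracting a witness pair $\U_0 \in \U$, $\W_0 \in \W$ with $\U_0 \s \W_0 \in \edgeconst'_{\Delta,\beta}(v,x)$ and deducing index relations from the forced memberships (your $\U_0, \W_0$ play the role of the paper's $\U^*, \W^*$). The only cosmetic difference is that you invoke the explicit good-pair description from Lemma~\ref{lemma:edgeconst'} and verify each output pair directly, while the paper enumerates forbidden edge configurations and derives a contradiction by exhibiting a bad pair $\u^* \s \w^* \notin \edgeconst_{\Delta,\beta}(v,x)$; these are contrapositives of the same case analysis.
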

\begin{proof}
	Denote the set of colors contained in $\Sigma_{\Delta,\beta}(v,x)$ by $\mathcal C$, and the set of colors contained in $\Sigma_{\Delta,\beta}(v',x')$ by $\mathcal D$.
	W.l.o.g, we can assume that $\mathcal C$ and $\mathcal D$ are disjoint, as we can simply rename colors if they are not disjoint.
	Furthermore, we will use $\A_i$, $\B_i$, and $\X$ as usual when talking about labels from $\Sigma_{\Delta,\beta}(v,x)$, and use $\A'_i$, $\B'_i$, and $\X'$ instead when talking about labels from $\Sigma_{\Delta,\beta}(v',x')$.
	We transform a solution for $\rere(\re( \Pi_{\Delta,\beta}(v,x) ))$ into a solution for $\Pi_{\Delta, \beta}(v', x')$ in $0$ rounds as follows.
	
	Let $f: \{ (\C, i) \mid \C \in \mathcal C, 0 \leq i \leq \beta, g(\C) \leq i \} \to \mathcal D$ be an arbitrary, but fixed, bijection such that $g(f((\C, i))) = i$, i.e., such that any pair $(\C, i)$ is mapped to a color in $\mathcal D$ that is contained in group $i$.
	By the definition of $v'$, such a bijection exists.
	Consider a node $u$ and let $\mathcal U = \U_1 \s \dots \s \U_\Delta$ be the node configuration around $u$ in the given solution for $\rere(\re( \Pi_{\Delta,\beta}(v,x) ))$.
	Node $u$ starts by choosing a configuration $\mathcal U' = \U'_1 \s \dots \s \U'_\Delta \in \mathcal Z_{\Delta, \beta}(v, x)$ such that $\U_k \subseteq \U'_k$ for each $1 \leq k \leq \Delta$, and replacing each $\U_k$ with $\U'_k$ on the respective incident edges.
	By Lemma~\ref{lem:industop}, such a configuration exists.
	Then, $u$ proceeds as follows.
	
	If $\mathcal U' = \gen{\gen{\A_i}} \s \gen{\gen{\B_i}}^{\Delta-1}$ for some $1 \leq i \leq \beta$, then node $u$ simply replaces $\gen{\gen{\A_i}}$ with $\A'_i$ and $\gen{\gen{\B_i}}$ with $\B'_i$ on the respective incident edges to produce its output.
	If, for some $\C \in \mathcal C$ satisfying $g(\C) = 0$, $\mathcal U' = \gen{\gen{\C}}^{\Delta - \size(v) \cdot (x+1)} \s (\Sigma'_{\Delta,\beta}(v,x))^{\size(v) \cdot (x+1)}$,  then $u$ replaces $\gen{\gen{\C}}$ with $f((\C, 0))$ and $(\Sigma'_{\Delta,\beta}(v,x))$ with $\X'$ on the respective incident edges to produce its output.
	If $\mathcal U' = \gen{\gen{\C, \B_i}, \gen{\A_i}}^{\Delta - \size(v) \cdot (x+1)} \s (\Sigma'_{\Delta,\beta}(v,x))^{\size(v) \cdot (x+1)}$ for some $(\C, i)$ with $\max\{ 1, g(\C) \} \leq i \leq \beta$, then $u$ replaces $\gen{\gen{\C, \B_i}, \gen{\A_i}}$ with $f((\C, i))$ and $(\Sigma'_{\Delta,\beta}(v,x))$ with $\X'$ on the respective incident edges to produce its output.
	In the following, we show that this yields a correct output for $\Pi_{\Delta, \beta}(v', x')$.

	As the nodes clearly produce node configurations that are contained in $\nodeconst_{\Delta,\beta}(v',x')$, the only thing to be shown is that the output pair on each edge is an edge configuration contained in $\edgeconst_{\Delta,\beta}(v',x')$.
	Hence, consider an edge $\{u, w\}$, and let $\U \s \W$ be the edge configuration on $\{ u, w \}$ in the solution for $\rere(\re( \Pi_{\Delta,\beta}(v,x) ))$.
	Let $\U'$ be the label with which $u$ replaces $\U$ (in the first step), and $\W'$ the label $w$ replaces $\W$ with, and let $\U'' \s \W''$ be the final output on edge $\{u, w\}$ (after the second replacement).
	We now show that the above $0$-round algorithm does not produce an incorrect edge configuration, by going through all incorrect edge configurations one by one and showing that they do not occur.
	For this, we use that the correctness of $\U \s \W$ (i.e., the fact that $\U \s \W$ is contained in the edge constraint of $\rere(\re( \Pi_{\Delta,\beta}(v,x) ))$) implies that there exist sets $\U^* \in \U'$ and $\W^* \in \W'$ such that for all pairs $(\u^*, \w^*)$ with $\u^* \in \U^*$, $\w^* \in \W^*$ we have that $\u^* \s \w^*$ is an edge configuration contained in $\edgeconst_{\Delta,\beta}(v,x)$.
	This follows by the definitions of $\re(\cdot)$ and $\rere(\cdot)$, and the fact that $\U \subseteq \U'$ and $\W \subseteq \W'$.
	In the following, $\U^*$ and $\W^*$ will always be \emph{arbitrarily chosen} sets contained in $\U'$ and $\W'$, respectively, and $\u^*$ and $\w^*$ will be picked suitably from $\U^*$ and $\W^*$, respectively.

	If $\U'' = \C' = \W''$ for some $\C' \in \mathcal D$ satisfying $g(\C') \geq 1$, then we have $\U' = \gen{\gen{\C, \B_i}, \gen{\A_i}} = \W'$ for some $(\C, i)$ with $\max\{ 1, g(\C) \} \leq i \leq \beta$.
	Hence, by Lemma~\ref{lem:restrong}, we have $\gen{\C, \B_i} \subseteq \U^*$ or $\gen{\A_i} \subseteq \U^*$, which implies that $\{\C, \B_i\} \subseteq \U^*$ or $\A_i \in \U^*$.
	Since analogous statements hold for $\W^*$, we see that we can pick $\u^*$ and $\w^*$ such that $\u^* \s \w^*$ is one of the configurations listed in $\{ \A_i \s \A_i, \C \s \C , \A_i \s \B_i \}$.
	Since neither of the listed configurations is contained in $\edgeconst_{\Delta,\beta}(v,x)$, we obtain the desired contradiction.

	If $\U'' = \C' = \W''$ for some $\C' \in \mathcal D$ satisfying $g(\C') = 0$, then we have $\U' = \gen{\gen{\C}} = \W'$ for some $\C \in \mathcal C$.
	Analogously to the previous case, we obtain $\C \in \U^*$ and $\C \in \W^*$.
	Picking $\u^* = \C = \w^*$ yields the desired contradiction as $\C \s \C \notin \edgeconst_{\Delta,\beta}(v,x)$.

	If $\U'' = \A'_i$, $\W'' = \B'_j$ where $i \leq j$, then $\U' = \gen{\gen{A_i}}$, $\W' = \gen{\gen{B_j}}$.
	Analogously to before, we obtain $\A_i \in \U^*$ and $\B_j \in \W^*$.
	Picking $\u^* = \A_i$ and $\w^* = \B_j$ yields the desired contradiction since $\A_i \s \B_j \notin \edgeconst_{\Delta,\beta}(v,x)$ if $i \leq j$.

	If $\U'' = \A'_i$, $\W'' = \C'$ where $i \leq g(\C')$, then $\U' = \gen{\gen{A_i}}$, $\W' = \gen{\gen{\C, \B_j}, \gen{\A_j}}$ where $i \leq j$ and $g(\C) \leq j$.
	Analogously to before, we obtain $\A_i \in \U^*$, and $\{\C, \B_j\} \subseteq \W^*$ or $\A_j \in \W^*$.
	Hence, we can pick $\u^* = \A_i$ and $\w^* = \A_j$, or $\u^* = \A_i$ and $\w^* = \B_j$; in both cases, we obtain a contradiction as both $\A_i \s \A_j$ and $\A_i \s \B_j$ are not contained in $\edgeconst_{\Delta,\beta}(v,x)$ if $i \leq j$.

	Finally, if $\U'' = \A'_i$, $\W'' = \A'_j$ for some $1 \leq i, j \leq \beta$, then $\U' = \gen{\gen{A_i}}$, $\W' = \gen{\gen{A_j}}$.
	Analogously to before, we obtain $\A_i \in \U^*$ and $\A_j \in \W^*$.
	By picking $\u^* = \A_i$ and $\w^* = \A_j$, we obtain the desired contradiction.

	As these cases cover all the edge configurations that are not contained in $\edgeconst_{\Delta,\beta}(v',x')$ (up to exchanging $\U''$ and $\W''$), and none of the cases can occur, it follows that the output pair on edge $\{ u, w \}$ is an edge configuration contained in $\edgeconst_{\Delta,\beta}(v',x')$.
	Hence, the output produced by our $0$-round algorithm is correct.
\end{proof}

We now prove a lemma that is basically the result of applying Lemma \ref{lem:rerere} multiple times. This is our main result of this section; it essentially shows a lower bound for ruling sets for the port numbering model, with the additional property of providing a family of problems that satisfies the round elimination property, such that all problems of the family have a number of labels that is not too large.
\begin{lemma}\label{lem:lbfamily}
	Let $\beta \geq 1$, and let $t = 1/2 \cdot \log \Delta / (\beta \log \log \Delta)$.
	If $\beta \leq t$, then there exists a sequence $\Pi_0 \rightarrow \Pi_1 \rightarrow \dots \rightarrow \Pi_t$ of problems such that
	\begin{enumerate}[label=(\roman*)]
		\item \label{item:eins} $\Pi_0 = \Pi_{\Delta,\beta}([1, 0, 0, \dots, 0], 0)$,
		\item \label{item:zwei} $\Pi_{j+1}$ can be solved in $0$ rounds given a solution to $\rere(\re(\Pi_j))$, for all $0 \leq j \leq t-1$,
		\item \label{item:drei} $\Pi_{t-1}$ cannot be solved in $0$ rounds in the deterministic PN model.
	\end{enumerate}
\end{lemma}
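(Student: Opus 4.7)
The plan is to define the sequence explicitly by $\Pi_j := \Pi_{\Delta, \beta}(v_j, x_j)$, where $v_0 := [1, 0, \ldots, 0]$, $x_0 := 0$, and inductively $v_{j+1}$ is the inclusive prefix sum of $v_j$ while $x_{j+1} := \size(v_j) \cdot (x_j + 1)$. Condition (i) then holds by construction, and (ii) follows from a direct application of Lemma~\ref{lem:rerere} at each step; its hypothesis $\size(v_j)(x_j+1) + 1 \leq \Delta$ is exactly the inequality $x_{j+1} \leq \Delta - 1$. Hence the crux of the argument for (ii) is to show $x_t \leq \Delta - 1$.

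To bound $x_t$, first I would invoke Lemma~\ref{lem:colorgrowth} to get $s_j := \size(v_j) = \binom{\beta+j}{\beta}$, and then solve the linear recursion $y_{j+1} = s_j y_j + 1$ for $y_j := x_j + 1$. Telescoping yields
\[
    y_t \;=\; \prod_{i=0}^{t-1} s_i \cdot \Bigl(1 + \sum_{k=1}^{t-1} \tfrac{1}{s_0 s_1 \cdots s_{k-1}}\Bigr) \;\leq\; 3 \prod_{i=0}^{t-1} s_i,
\]
since $s_0 = 1$ and $s_i \geq 2$ for $i \geq 1$ make the tail a convergent geometric sum. Next, using $\binom{\beta+i}{\beta} \leq \bigl(e(\beta+i)/\beta\bigr)^\beta$ together with $\beta \leq t$ (which implies $\beta + i \leq 2t$), I would estimate
\[
    \log \prod_{i=0}^{t-1} s_i \;\leq\; t\beta \log\!\Bigl(\tfrac{2et}{\beta}\Bigr) \;\leq\; t\beta \bigl(\log\log \Delta + O(1)\bigr) \;=\; \tfrac{1}{2}\log \Delta + o(\log \Delta),
\]
where the last equality plugs in $t\beta = \log\Delta / (2\log\log\Delta)$. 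Consequently $x_t \leq y_t \leq \Delta^{1/2 + o(1)} \leq \Delta - 1$ for all sufficiently large $\Delta$, completing the verification of the hypothesis of Lemma~\ref{lem:rerere} at each of the $t$ steps and thus establishing (ii).

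For (iii), I would argue that on a $\Delta$-regular graph in the deterministic PN model, any $0$-round algorithm must assign a fixed label $\L_i \in \Sigma_{\Delta,\beta}(v_{t-1}, x_{t-1})$ to each port $i \in \{1,\ldots,\Delta\}$, since all nodes are indistinguishable. Because the adversary may match port $i$ of one endpoint with port $j$ of the other for any $i, j$, correctness requires $\L_i \s \L_j \in \edgeconst_{\Delta,\beta}(v_{t-1}, x_{t-1})$ for every $i, j \in \{1,\ldots,\Delta\}$; in particular $\L_i \s \L_i$ must be an edge configuration for every $i$. Both families of node configurations fail this self-compatibility test whenever $x_{t-1} < \Delta$: a $\c^{\Delta-x_{t-1}} \s \X^{x_{t-1}}$ configuration forces some $\L_i = \c$, yet $\c \s \c \notin \edgeconst$; an $\A_i \s \B_i^{\Delta-1}$ configuration forces some $\L_k = \A_i$, yet $\A_i \s \A_i \notin \edgeconst$. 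Since $x_{t-1} \leq x_t < \Delta$ by the previous bound, no node configuration is self-compatible, so $\Pi_{t-1}$ admits no $0$-round deterministic PN algorithm. The main obstacle is the tight control of the nested binomial product $\prod_{i<t}\binom{\beta+i}{\beta}$ needed to keep $x_t$ below $\Delta$; once this is in hand, (iii) is a routine inspection of the constraints of $\Pi_{t-1}$.
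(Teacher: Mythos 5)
Your proposal is correct and matches the paper's proof essentially step for step: the same recursive definition $v_{j+1} = p(v_j)$, $x_{j+1} = \size(v_j)(x_j+1)$, Lemma~\ref{lem:rerere} for (ii) via a bound on $x_t$, and the ``same port at both endpoints'' argument for (iii). The only differences are cosmetic---you solve the linear recursion $y_{j+1} = s_j y_j + 1$ in closed form and bound $\prod_i \binom{\beta+i}{\beta}$ directly, whereas the paper bounds $x^{(j)} \leq (2t)^{j\beta}-1$ by a short induction---and there is a harmless typo in your telescoping: the inner sum should run to $k=t$ (not $t-1$), but the dropped term $1/(s_0\cdots s_{t-1}) \leq 2^{-(t-1)}$ does not affect the estimate $y_t \leq 3\prod_{i<t} s_i$.
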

\begin{proof}
	For all $0 \leq j \leq t$, define $\Pi_j := \Pi_{\Delta,\beta}(v^{(j)}, x^{(j)})$, where $v^{(j)} = [v^{(j)}_0, \dots, v^{(j)}_\beta]$ and $ x^{(j)}$ are recursively defined by $v^{(j+1)}_k := \sum_{i=0}^k v^{(j)}_i$, for all $0 \leq k \leq \beta$, and $x^{(j+1)} :=  \size(v^{(j)}) \cdot (x^{(j)} + 1)$, where we set $v^{(0)} := [1, 0, 0, \dots, 0]$ and $x^{(j)} := 0$.
	This definition immediately ensures Property~$\ref{item:eins}$.
	In order to show Property~$\ref{item:zwei}$, we would like to use Lemma~\ref{lem:rerere}, so we have to make sure that $\size(v^{(j)})$ and $x^{(j)}$ satisfy the condition given in Lemma~\ref{lem:rerere} for all $0 \leq j \leq t-1$.
	To this end, we can apply Lemma \ref{lem:colorgrowth}, that proves that for all $1 \leq j \leq t$ and all $0 \leq k \leq \beta$, we have $v^{(j)}_k = {j+k-1 \choose k}$.
	
	Hence, we can now bound $\size(v^{(j)})$ and $x^{(j)}$ from above.
	Since $\size(v^{(j)})$ and $x^{(j)}$ are monotonically increasing (with increasing $j$), it suffices to bound $\size(v^{(t-1)})$ and $x^{(t-1)}$ to obtain bounds for $\size(v^{(j)})$ and $x^{(j)}$, for all $0 \leq j \leq t-1$.
	As $\size(v^{(t-1)}) = v^{(t)}_\beta$ by definition of $v^{(t)}_\beta$, we have $\size(v^{(t-1)}) = {t+\beta-1 \choose \beta} \leq (2t)^\beta-1$ since $\beta \leq t$.
	This implies that $x^{(j)} \leq (2t)^{j\beta}-1$, for all $0 \leq j \leq t-1$, by induction: clearly, $x^{(0)} = 0 \leq (2t)^{0}-1$, and using the induction hypothesis $x^{(j-1)} \leq (2t)^{(j-1)\beta}-1$, we obtain $x^{(j)} \leq \size(v^{(j-1)}) \cdot (x^{(j-1)} + 1) \leq ((2t)^\beta-1) \cdot (2t)^{(j-1)\beta} \leq (2t)^{j\beta}-1$, for all $1 \leq j \leq t-1$.
	In particular, $x^{(t-1)} \leq (2t)^{(t-1)\beta}-1$.
	Hence, for any $0 \leq j \leq t-1$, we have $\size(v^{(j)}) \cdot (x^{(j)} + 1) + 1 \leq \size(v^{(t-1)}) \cdot (x^{(t-1)} + 1) + 1 \leq ((2t)^\beta-1) \cdot (2t)^{(t-1)\beta} + 1 \leq (2t)^{(t\beta)} \leq (\log \Delta / (\beta \log \log \Delta))^{\log \Delta / (2 \log \log \Delta)} \leq \Delta$.
	Therefore, for all $0 \leq j \leq t-1$, the condition $\size(v) \cdot (x+1) + 1 \leq \Delta$ in Lemma~\ref{lem:rerere} is satisfied by setting $v := v^{(j)}$ and $x := x^{(j)}$, and Property~$\ref{item:zwei}$ follows by applying Lemma~\ref{lem:rerere} inductively.

	Now, we will prove Property~$\ref{item:drei}$.
	For a contradiction, assume that there is a deterministic $0$-round algorithm solving $\Pi_{t-1}$ in the PN model.
	As the algorithm is deterministic, no node has any information about other nodes, and the nodes of the input graph are not distinguished by unique identifiers, the labels that the nodes output must form the same node configuration around every node.
	Let $\mathcal U = \U_1 \s \dots \s U_\Delta$ denote this node configuration, and w.l.o.g., assume that each node $u$ outputs $\U_k$ on the incident edge connected to $u$ via port $k$.
	In order for the algorithm to be correct, we must have $\mathcal U \in \nodeconst_{\Delta,\beta}(v^{(t-1)},x^{(t-1)})$.
	Since, by the above calculations, $x^{(t-1)} \leq \Delta - 1$, we see that the number of wildcards in the problem definition is strictly less than $\Delta$.
	Hence, we have $\U_k = \A_i$ or $\U_k = \C$, for some $1 \leq k \leq \Delta$, some index $1 \leq i \leq \beta$, and some color $\C \in \mathcal C$, according to the definition of $\Pi_{t-1} = \Pi_{\Delta,\beta}(v^{(t-1)},x^{(t-1)})$.
	Now consider a graph, where some edge $e = \{ u, w \}$ is connected to both endpoints via port $k$.
	If $\U_k = \A_i$, then the output produces the edge configuration $\A_i \s \A_i$ on edge $e$; if $\U_k = \C$, then the output produces the edge configuration $\C \s \C$ on $e$.
	Since both configurations are not contained in $\edgeconst_{\Delta,\beta}(v^{(t-1)},x^{(t-1)})$, we obtain the desired contradiction.
	This concludes the proof of Property~$\ref{item:drei}$.

\end{proof}

\section{Lifting results to the LOCAL model}\label{sec:liftlocal}
In the previous sections we obtained upper and lower bounds for the port numbering model. While upper bounds trivially hold also for the LOCAL model, lower bounds need some additional analysis to be lifted to the LOCAL model. The main challenge here is that the round elimination theorem does not tolerate the presence of node identifiers.
Previous works that used the round elimination technique to prove lower bounds \cite{Balliu2019,binary,trulytight} followed a common approach to lift an $f(\Delta)$ lower bound for the port numbering model to a lower bound as a function of $\Delta$ and $n$ for the LOCAL model, and we will follow the same approach to lift our lower bounds as well. We perform such a lifting of our results in the (usual) following way:
\begin{itemize}
	\item We first adapt our lower bound proof to randomized algorithms, showing that if $\Pi_{\Delta,\beta}(v,x)$ can be solved in $t$ rounds with local failure probability $p$, then $\rere(\re(\Pi_{\Delta,\beta}(v,x)))$ can be solved in $t-1$ rounds with some local failure probability $p'$.
	\item We then give a lower bound on the failure probability of any algorithm that runs ``too fast'', that is, in strictly fewer rounds than the $t$ given in Lemma~\ref{lem:lbfamily} (i.e., than the implicit lower bound proved in Section~\ref{sec:lb}). This yields a lower bound for the runtime of any randomized algorithm in the port numbering model as a function of $\Delta$ and $n$. In particular, we obtain a lower bound of $\Omega(\min\{f(\Delta), \log_\Delta \log n\})$.
	\item We make $\Delta$ as large as possible, as a function of $n$, in order to obtain the best possible lower bound (for randomized algorithms in the port numbering model) as a function of $n$. In other words, we choose the value of $\Delta$ that makes $f(\Delta)$ equal to $\log_\Delta \log n$.
	\item Randomized algorithms in the port numbering model can generate unique IDs with high probability and then simulate algorithms for the LOCAL model. Thus we get a lower bound for randomized algorithms in the LOCAL model.
	\item Lower bounds for randomized algorithms in the LOCAL model imply lower bounds for deterministic algorithms as well. We use standard techniques to obtain \emph{better} lower bounds (as a function of $n$) for deterministic algorithms.
\end{itemize}
We will provide lower bounds for high-girth regular graphs, in particular in graphs where the girth is larger than the running time. Such lower bounds directly apply on trees as well. 

\subsection{Evolution of local failure probability}
Balliu et al.~\cite{binary} proved that, given some problem $\Pi$ defined on graphs of degree $\Delta$ using labels from a set $\Sigma$, we can upper bound the local failure probability $p'$ of any algorithm solving $\rere(\re(\Pi))$ in $t-1$ rounds by a function that depends only on $\Delta$, $|\Sigma|$, and $p$, where $p$ is an upper bound on the local failure probability of an algorithm solving $\Pi$ in $t$ rounds. More formally, the authors prove the following result in \cite{binary} (rephrased for our purposes), which is a version of the round elimination theorem that applies to randomized algorithms.

\begin{lemma}[Lemma 41 of \cite{binary}]\label{lem:singlestep}
	Let $A$ be a randomized $t$-round algorithm for $\Pi$ with local failure probability at most $p$ (where $t>0$). Then there exists a randomized $(t-1)$-round algorithm $A'$ for $\rere(\re(\Pi))$ with local failure probability  $p'' \le 2^\frac{1}{\Delta+1} (\Delta |\Sigma'|)^\frac{\Delta}{\Delta+1} {p'}^\frac{1}{\Delta+1} + p'$, where $p' \le 2^\frac{1}{\Delta+1} (\Delta |\Sigma|)^\frac{\Delta}{\Delta+1} p^\frac{1}{\Delta+1} + p$ and $\Sigma'$ is the label set of $\re(\Pi)$.
\end{lemma}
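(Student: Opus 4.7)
The plan is to apply a single-step randomised round-elimination lemma twice: once to convert $A$ (the $t$-round algorithm for $\Pi$ with failure probability at most $p$) into an algorithm for $\re(\Pi)$ with failure probability at most $p'$, and then to convert that into the desired algorithm $A'$ for $\rere(\re(\Pi))$ with failure probability at most $p''$. In the PN model on high-girth graphs, each of the operators $\re(\cdot)$ and $\rere(\cdot)$ saves half a round, so together they save exactly one full round, matching the deterministic statement of \theoremref{thm:sebastien}. The nested form of the bound on $p''$ then arises mechanically by composing the bound of the single-step lemma with itself, using $\Sigma$ for the first application and $\Sigma'$ for the second.

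The single-step lemma I would prove states: for any locally checkable problem $Q$ with label set $\Sigma_Q$ and any randomised $t$-round PN algorithm $B$ for $Q$ with local failure probability at most $q$, there is a randomised ``$(t-\tfrac12)$-round'' algorithm $B_1$ for $\re(Q)$ with local failure probability at most $q' \le 2^{1/(\Delta+1)}(\Delta|\Sigma_Q|)^{\Delta/(\Delta+1)} q^{1/(\Delta+1)} + q$. The construction of $B_1$ at a node $u$ and incident edge $e = \{u,v\}$ is to output the set
\[
S_{u,e} = \bigl\{\ell \in \Sigma_Q : \Pr\bigl[B \text{ outputs } \ell \text{ at the endpoint of } e \text{ opposite to } u \,\big|\, \text{reduced view of } u\bigr] \ge \tau\bigr\},
\]
where the threshold $\tau \in (0,1)$ will be optimised. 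The conditional probability marginalises over the portion of $B$'s inputs that $u$ no longer sees after shaving off half a round, so $S_{u,e}$ can be computed from a strictly smaller view. By construction $S_{u,e} \in \Sigma_{\re(Q)}$ whenever nonempty.

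The failure-probability analysis decomposes a local failure of $B_1$ at $u$ into three events: (i) some $S_{u,e}$ is empty, which by a union bound over the $\Delta$ edges and $|\Sigma_Q|$ labels contributes at most $\Delta|\Sigma_Q|\tau$; (ii) some $(S_{u,e},S_{v,e})$ is not a valid $\re(Q)$-edge configuration, i.e.\ some $(\ell,\ell')\in S_{u,e}\times S_{v,e}$ violates $\edgeconst_Q$; and (iii) the tuple $(S_{u,e_1},\dots,S_{u,e_\Delta})$ is not a valid $\re(Q)$-node configuration in the analogous sense. For (iii), letting $Y_i$ denote the conditional probability defining membership in $S_{u,e_i}$ for a fixed bad tuple, the elementary inequality $\prod_i Y_i \ge \tau^\Delta \cdot \mathbf{1}[\bigcap_i Y_i \ge \tau]$ combined with the high-girth assumption (which makes the $Y_i$'s conditionally independent once $u$'s view is fixed, since the remaining views extend into disjoint regions of the graph) yields $\Pr[\text{all } Y_i \ge \tau] \le \Pr_B[\text{bad $Q$-tuple at } u]/\tau^\Delta$. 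Union-bounding over at most $|\Sigma_Q|^\Delta$ bad tuples gives a contribution of $q|\Sigma_Q|^\Delta/\tau^\Delta$; event (ii) is handled analogously and absorbed into this term up to constants. Balancing $\Delta|\Sigma_Q|\tau$ against $q|\Sigma_Q|^\Delta/\tau^\Delta$ by taking $\tau \asymp (q/(\Delta |\Sigma_Q|^{\,\cdot}))^{1/(\Delta+1)}$ produces the stated bound, with the additive $+q$ absorbing $B$'s own direct failure.

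The main obstacle is the near-independence step: one must justify rigorously that, in the PN model on graphs of girth greater than $2t+2$, the joint probabilities that enter the $\tau^\Delta$ factor really do factor as a product (or differ from it only by negligible amounts). This is precisely the point at which the graph-class assumption of \theoremref{thm:sebastien} enters, and it is by far the most delicate part of the argument. A secondary bookkeeping subtlety is that each of $\re$ and $\rere$ saves only a fractional round individually, so the $(t-1)$-round guarantee for $A'$ only materialises after both single-step reductions are composed; one must honestly track the two-sided phase structure (alternating node and edge computations) to see that two half-rounds add up to the full round promised by \theoremref{thm:sebastien}.
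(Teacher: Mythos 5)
The paper does not prove this statement; it is quoted (after rephrasing) as Lemma~41 of \cite{binary}, so there is no in-paper proof to compare your proposal against. I will therefore assess your reconstruction on its own terms.

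Your high-level plan---compose a single-step randomised round-elimination lemma once for $\re$ and once for $\rere$, using $\Sigma$ for the first application and $\Sigma' = \Sigma_{\re(\Pi)}$ for the second---is the right shape, and the nested bound in the lemma does arise mechanically from such a composition. Your choice of a threshold construction with parameter $\tau$ and the balancing that produces exponents $1/(\Delta+1)$ and $\Delta/(\Delta+1)$ is also the standard technique in this line of work.

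However, there are concrete problems in the details. First, your definition of $S_{u,e}$ as the set of labels that the \emph{opposite} endpoint of $e$ would output (with conditional probability $\ge\tau$) is inconsistent with your own analysis of event~(iii). The node constraint of $\re(Q)$ at $u$ is a constraint on the $\Delta$-tuple of labels that $u$ itself writes on its incident half-edges, and the existential quantifier is discharged by exhibiting a valid choice in $S_{u,e_1}\times\cdots\times S_{u,e_\Delta}$; the natural witness is $u$'s actual $B$-output, which requires $S_{u,e_i}$ to be about $u$'s own output on $e_i$, not the neighbour's. With your definition no such witness exists and the argument breaks. Relatedly, your event~(i) is misidentified: once $\tau<1/|\Sigma_Q|$ the sets are never empty (the conditional probabilities sum to $1$), so there is nothing to union-bound; the quantity $\Delta|\Sigma_Q|\tau$ is the (useful) bound on the probability that $u$'s actual output on some incident edge falls below the threshold, which is the event you actually need for the existential side.

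Second, the conditional-independence claim that underlies the $\tau^\Delta$ factor is the technical heart of the argument, and you only flag it as an obstacle rather than carry it out. High girth does make the unseen portions of the $\Delta$ neighbouring views live in disjoint subtrees in the PN model, but turning that into a clean product bound requires specifying exactly what is conditioned on, identifying which quantities are random and which are deterministic given the reduced view, and showing measurability with respect to disjoint $\sigma$-algebras. As written, the inequality $\Pr[\text{all } Y_i\ge\tau]\le q/\tau^\Delta$ is asserted rather than proved. Since this is where the graph-class hypothesis of \theoremref{thm:sebastien} is actually used, leaving it as a remark is a genuine gap, not mere bookkeeping.
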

This lemma basically states that if there exists an algorithm that solves some problem $\Pi$ in $t$ rounds with some small failure probability $p$, then there also exists a faster algorithm that solves $\rere(\re(\Pi))$ with some possibly larger but still small enough failure probability. Let $\hat{\re}(\Pi) = \rere(\re(\Pi))$, and let $\hat{\re}^j(\Pi)$ be the function that recursively applies the $\hat{\re}$ function $j$ times. By using Lemma~\ref{lem:singlestep} multiple times, we can give an upper bound on the failure probability of an algorithm solving $\hat{\re}^j(\Pi)$ in $\max\{0,t-j\}$ rounds, as a function of an upper bound on the failure probability of an algorithm solving $\Pi$ in $t$ rounds.

Unfortunately, we cannot directly apply this lemma multiple times to $\Pi_{\Delta,\beta}([1,0,\ldots,0],0)$, as we do not know an upper bound on the number of labels of the problem $\hat{\re}^{j-1}(\Pi_{\Delta,\beta}([1,0,\ldots,0],0))$ and of all the intermediate problems (notice that the number of labels appears in the bound given by Lemma \ref{lem:singlestep}). But what we can do instead is to consider the family of problems $\Pi_i$ defined in Lemma \ref{lem:lbfamily}, where $\Pi_0$ is the locally checkable version of the $(2,\beta)$-ruling set problem, and $\Pi_{i+1}$ is a problem that can be obtained by relaxing the configurations in the node constraint of $\rere(\re(\Pi_{i}))$. By applying Lemma \ref{lem:singlestep} we can conclude that, if we have an algorithm for $\Pi_i$ running in $t>0$ rounds that fails with probability at most $p$, then there exists an algorithm solving $\Pi_{i+1}$ running in $t-1$ rounds that fails with probability at most $p''$.
Moreover, for any problem $\Pi_i$ where index $i$ is at most the $t$ from Lemma~\ref{lem:lbfamily}, we have that the number of colors, and hence the number of distinct configurations in the node constraint of $\Pi_i$, is in $O(\Delta)$.
We now prove a bound on the failure probability obtained by applying Lemma \ref{lem:singlestep} multiple times. We will prove a general statement that not only applies to the family of problems defined in Lemma \ref{lem:lbfamily}, but to all problems for which the number of labels is bounded by $O(\Delta^2)$. Since the number of labels that we consider is bounded by $O(\Delta^2)$, and since the number of labels of the intermediate problems is always at most exponential in the original number of labels, we can now apply the lemma multiple times, assuming $|\Sigma| = O(\Delta^2)$ and $|\Sigma'| = 2^{O(\Delta^2)}$ for the whole family. Hence, we prove the following lemma.

\begin{lemma}\label{lem:multiplesteps}
	Let $\Pi_0 \rightarrow \Pi_1 \rightarrow \dots \rightarrow \Pi_t$ be a sequence of problems such that $\Pi_{i+1}$ can be solved in $0$ rounds given a solution for $\rere(\re(\Pi_i))$, and such that the number of labels of each problem $\Pi_i$ is upper bounded by $O(\Delta^2)$.
	Let $A$ be a randomized $t$-round algorithm for $\Pi_0$ with local failure probability at most $p$. Then there exists a randomized $(t-j)$-round algorithm $A'$ for $\Pi_{j}$ with local failure probability at most $2^{K\Delta^2} p^{1/(\Delta+1)^{2j}}$ for some constant $K$, for all $0<j \le t$, for large enough $\Delta$.
\end{lemma}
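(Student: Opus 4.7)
The plan is to induct on $j$, applying Lemma~\ref{lem:singlestep} once per step in the sequence $\Pi_0 \rightarrow \Pi_1 \rightarrow \dots \rightarrow \Pi_t$. The base case $j=0$ is trivial: we use algorithm $A$ itself, whose failure probability $p$ certainly satisfies $p \leq 2^{K\Delta^2} p^{1/(\Delta+1)^0} = 2^{K\Delta^2}\cdot p$. For the inductive step, assume we have an algorithm for $\Pi_{j-1}$ running in $t-(j-1)$ rounds with failure probability at most $p_{j-1} \leq 2^{K\Delta^2}\cdot p^{1/(\Delta+1)^{2(j-1)}}$. Apply Lemma~\ref{lem:singlestep} to $\Pi_{j-1}$ to obtain an algorithm for $\rere(\re(\Pi_{j-1}))$ running in $t-j$ rounds with failure probability $p_j'$. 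Since $\Pi_j$ can be solved in $0$ rounds given a solution for $\rere(\re(\Pi_{j-1}))$, this directly yields an algorithm for $\Pi_j$ with the same running time and the same failure probability $p_j'$.

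The next step is to control the growth. By hypothesis $|\Sigma_{\Pi_{j-1}}| \leq c \Delta^2$ for some constant $c$, so $|\Sigma_{\re(\Pi_{j-1})}| \leq 2^{c\Delta^2}$. Writing out Lemma~\ref{lem:singlestep}, the intermediate probability satisfies
\[
p'_{\mathrm{mid}} \leq 2^{1/(\Delta+1)}(\Delta \cdot c\Delta^2)^{\Delta/(\Delta+1)} p_{j-1}^{1/(\Delta+1)} + p_{j-1}
\]
and therefore $p_j' \leq 2^{1/(\Delta+1)}(\Delta \cdot 2^{c\Delta^2})^{\Delta/(\Delta+1)} (p'_{\mathrm{mid}})^{1/(\Delta+1)} + p'_{\mathrm{mid}}$. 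Absorbing the polynomial factors in $\Delta$ into $2^{O(\Delta^2)}$ and using that for large enough $\Delta$ and $p_{j-1} < 1/2$ the dominant term on the right is the first one (the additive $p_{j-1}$, respectively $p'_{\mathrm{mid}}$, is at most doubled by the first term), we obtain a clean bound of the form
\[
p_j' \leq 2^{c'\Delta^2}\cdot p_{j-1}^{1/(\Delta+1)^{2}}
\]
for some constant $c'$ (independent of $j$), for all sufficiently large $\Delta$.

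Finally, unrolling the recurrence $p_j' \leq 2^{c'\Delta^2} p_{j-1}^{1/(\Delta+1)^2}$ starting from $p_0 = p$ gives
\[
p_j \leq 2^{c'\Delta^2 \sum_{i=0}^{j-1}(\Delta+1)^{-2i}}\cdot p^{1/(\Delta+1)^{2j}}.
\]
The geometric sum in the exponent is bounded by the constant $\sum_{i\geq 0}(\Delta+1)^{-2i} \leq 2$, so $p_j \leq 2^{2c'\Delta^2}\cdot p^{1/(\Delta+1)^{2j}}$, and choosing $K = 2c'$ yields the claim. The main routine obstacle is simply the bookkeeping: carefully verifying that each of the two successive applications of the recurrence in Lemma~\ref{lem:singlestep} (once for $\re$, once for $\rere$) absorbs into the single factor $2^{c'\Delta^2}$ per step, and that the geometric series of exponents telescopes into a $\Delta$-independent constant multiplier so that the accumulated constant prefactor remains of the form $2^{K\Delta^2}$ rather than growing with $j$.
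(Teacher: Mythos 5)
Your proof is correct and follows essentially the same strategy as the paper: apply Lemma~\ref{lem:singlestep} once per step in the chain, absorb the polynomial-in-$\Delta$ factors and the additive term into a $2^{O(\Delta^2)}$ prefactor, and then unroll the resulting recurrence, observing that the accumulated prefactor forms a convergent geometric series. The only cosmetic difference is bookkeeping: the paper works with the recurrence $p_{m+1} \le 2^{K\Delta^2} p_m^{1/(\Delta+1)}$ over $2j$ single applications of $\re$ or $\rere$ (and bounds the exponent by $\sum_{i\ge 1}\Delta^{3-i}\le \Delta^2+\Delta+2$), whereas you compose the two applications into one step of the form $p_j \le 2^{c'\Delta^2} p_{j-1}^{1/(\Delta+1)^2}$ and bound $\sum_{i\ge 0}(\Delta+1)^{-2i}\le 2$; both yield the claimed $2^{K\Delta^2} p^{1/(\Delta+1)^{2j}}$.
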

\begin{proof}
	For large enough $\Delta$, and for our choice of parameters, by Lemma \ref{lem:singlestep} we have that 
	\begin{align*}
	p'' &\le 2^\frac{1}{\Delta+1} (\Delta |\Sigma'|)^\frac{\Delta}{\Delta+1} {p'}^\frac{1}{\Delta+1} + p' \\
	&\le (\Delta |\Sigma'|)^\frac{1}{\Delta+1} (\Delta |\Sigma'|)^\frac{\Delta}{\Delta+1} {p'}^\frac{1}{\Delta+1} + p'\\
	&\le (\Delta |\Sigma'|) {p'}^\frac{1}{\Delta+1} + p'\\
	&\le (2^{\log\Delta} 2^{|\Sigma|}) {p'}^\frac{1}{\Delta+1} + p'\\
	&\le 2^{O(\Delta^2)} {p'}^\frac{1}{\Delta+1}.
	\end{align*}
	Similarly,
		\begin{align*}
	p' &\le 2^\frac{1}{\Delta+1} (\Delta |\Sigma|)^\frac{\Delta}{\Delta+1} {p}^\frac{1}{\Delta+1} + p \\
	&\le (\Delta |\Sigma'|)^\frac{1}{\Delta+1} (\Delta |\Sigma'|)^\frac{\Delta}{\Delta+1} {p}^\frac{1}{\Delta+1} + p\\
	&\le 2^{O(\Delta^2)} {p}^\frac{1}{\Delta+1}.
	\end{align*}
	Hence, this implies that
	\begin{align*}
	p''&\le 2^{K \Delta^2} p'^{\frac{1}{\Delta+1}}, \text{ where }\\
	p' &\le 2^{K \Delta^2} p^{\frac{1}{\Delta+1}},
	\end{align*}
	for some constant $K$. By recursively applying Lemma \ref{lem:singlestep} we get the following:
	\[
		p_j \le 2^{K \Delta^2} p_{j-1}^{\frac{1}{\Delta+1}},
	\]
	where $p_0=p$ and $p_{2j}$, are, respectively, the local failure probability bounds for $\Pi_{0}$ and $\Pi_{j}$. We prove by induction that for all $j>0$,
	\[
		p_j \le 2^{K \sum_{i=1}^{j} \Delta^{3-i}} p^{\frac{1}{(\Delta+1)^j}}.
	\]
	For the base case where $j=1$, we get that $p_1 \le 2^{K \Delta^2} p^{\frac{1}{\Delta+1}}$, which holds, as we showed above. Let us assume that the claim holds for $j$, and let us prove it for $j+1$.
	\begin{align*}
			p_{j+1} &\le 2^{K \Delta^2} p_{j}^{\frac{1}{\Delta+1}} \le 2^{K \Delta^2} \left(  2^{K \sum_{i=1}^{j} \Delta^{3-i}} p^{\frac{1}{(\Delta+1)^j}} \right)^{\frac{1}{\Delta+1}} \\
			&\le 2^{K \Delta^2} 2^{K \sum_{i=1}^{j} \Delta^{3-i-1}} p^{\frac{1}{(\Delta+1)^{j+1}}} =  2^{K \sum_{i=1}^{j+1} \Delta^{3-i}} p^{\frac{1}{(\Delta+1)^{j+1}}} 
	\end{align*}
	
	Since for $\Delta \ge 2$ and for any $j$, $\sum_{i=1}^{j} \Delta^{3-i} \le \Delta^2 + \Delta + 2$, we get the following:
	\[p_{j} \le 2^{K(\Delta^2 + \Delta + 2)} p^{\frac{1}{(\Delta+1)^j}} \le 2^{K'\Delta^2} p^{\frac{1}{(\Delta+1)^j}},
	\] for some constant $K'$, hence the claim follows.
\end{proof}

We now give a lower bound on the failure probability of any algorithm solving $\Pi_i$ in $0$ rounds, for any $i < t$, where $t$ is the lower bound obtained in Section \ref{sec:lb} for the time complexity of $\Pi_0$. Again, we will prove a stronger result, that applies to any family of problems where the number of allowed configurations is $O(\Delta^2)$, a condition that is satisfied by the family of Lemma \ref{lem:lbfamily}, since there are $\beta = o(\Delta)$ possible pointer configurations and at most $O(\Delta)$ color configurations. 
\begin{lemma}\label{lem:zerorounds}
	Let $\Pi$ be a problem that cannot be solved in $0$ rounds with deterministic algorithms in the port numbering model, such that the node constraint $\nodeconst$ contains $O(\Delta^2)$ allowed configurations. Then any $0$-round algorithm solving $\Pi$ must fail with probability at least $1/\Delta^8$.
\end{lemma}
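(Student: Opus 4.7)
My plan is to establish the contrapositive: assuming a $0$-round randomized algorithm $\mathcal A$ with local failure probability strictly less than $1/\Delta^8$, I will derive a $0$-round deterministic algorithm for $\Pi$, contradicting the hypothesis. Because $0$-round PN algorithms on $\Delta$-regular graphs depend only on the degree and the private randomness, every node draws its output from a common distribution $\mu$ over port labelings $f:\{1,\dots,\Delta\}\to\Sigma$, and the outputs of distinct nodes are independent.

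The core of the argument is a pair of pigeonhole steps. First, since $\Pr[\text{multiset of output}\in\nodeconst] > 1 - 1/\Delta^8$ and $|\nodeconst|\leq c\Delta^2$ by hypothesis, some fixed configuration $M^*\in\nodeconst$ occurs as the output multiset with probability $\Omega(1/\Delta^2)$. Second, conditioning on this event, the marginal at each port $i$ is supported on the at most $\Delta$ distinct labels of $M^*$, so pigeonhole supplies some $\ell_i^*\in M^*$ whose conditional probability is at least $1/\Delta$; hence the unconditional marginal satisfies $\mu_i(\ell_i^*)=\Omega(1/\Delta^3)$.

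I then consider the candidate deterministic labeling $f^*(i):=\ell_i^*$. If $(f^*(i),f^*(j))\notin\edgeconst$ for some port pair $(i,j)$, then I fix a $\Delta$-regular high-girth port-numbered graph in which an edge uses port $i$ at one endpoint and port $j$ at the other: by independence of the endpoints' randomness the failure probability on that edge is at least $\mu_i(\ell_i^*)\mu_j(\ell_j^*)=\Omega(1/\Delta^6)$, contradicting the assumed bound. Hence $f^*$ respects every edge configuration; if in addition its multiset lies in $\nodeconst$, then $f^*$ is a $0$-round deterministic algorithm for $\Pi$, which is the desired contradiction.

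The main obstacle is precisely the case in which $f^*$ has all valid pair constraints but the multiset $\{\ell_1^*,\dots,\ell_\Delta^*\}$ falls outside $\nodeconst$, since mode-wise label choices need not reassemble $M^*$. I expect to handle this by refining the second step through a joint-marginal pigeonhole: conditional on $M^*$ the distribution of $(f(i),f(j))$ is supported on at most $\Delta^2$ pairs, so some pair $(\ell_1,\ell_2)$ has joint probability $\Omega(1/\Delta^4)$; together with the product bound $\mu_i(\ell_1)\mu_j(\ell_2)\geq\mu_{i,j}(\ell_1,\ell_2)^2=\Omega(1/\Delta^8)$ and the observation that every valid port assignment of $M^*$ must contain a bad port pair (which is exactly the assumed failure of $\Pi$ to be $0$-round solvable), I obtain some port pair $(i,j)$ and labels $(\ell_1,\ell_2)\notin\edgeconst$ on which the induced edge failure is at least $\Omega(1/\Delta^8)$, completing the contradiction.
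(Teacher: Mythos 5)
Your overall framework --- modeling the $0$-round PN algorithm as a common distribution $\mu$ over port labelings, pigeonholing over the $O(\Delta^2)$ node configurations to find a high-probability $M^*\in\nodeconst$, and then exhibiting a port pair on which the output is an invalid edge configuration with noticeable probability --- is the same as the paper's. But the middle of your argument has a genuine gap, and the fix you sketch does not close it.

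You correctly identify the obstacle in your first attempt: a port-wise mode labeling $f^*$ need not reassemble a valid node configuration, so no deterministic algorithm follows. The paper sidesteps this entirely by never trying to build a deterministic labeling. It invokes $0$-round unsolvability \emph{before} pigeonholing on ports: since no $\bar{\c}\in\nodeconst$ can certify $0$-round solvability, there exists a fixed bad label pair $\ell_1,\ell_2$ \emph{appearing in} $M^*$ with $\ell_1\s\ell_2\notin\edgeconst$. Only then does one pigeonhole over ports: since $\ell_1$ occurs in $M^*$, conditioned on $M^*$ the expected number of ports labeled $\ell_1$ is at least $1$, so some port $i$ has $\Pr[f(i)=\ell_1\mid M^*]\ge 1/\Delta$, and likewise some port $j$ for $\ell_2$. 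This yields $\mu_i(\ell_1),\mu_j(\ell_2)=\Omega(1/\Delta^3)$ unconditionally, hence an edge failure probability of $\Omega(1/\Delta^6)\ge 1/\Delta^8$ for large $\Delta$. The order is essential: fix the bad pair first, then find ports.

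Your refinement reverses this order and does not go through. Pigeonholing the joint conditional distribution of $(f(i),f(j))$ over the $\le\Delta^2$ label pairs gives that \emph{some} pair has high mass, but nothing forces that pair to lie outside $\edgeconst$; the high-mass pair may well be compatible, in which case it tells you nothing. The clause ``every valid port assignment of $M^*$ contains a bad port pair'' is the right observation for repairing this, but then the arithmetic degrades: from $\sum_{i,j}\Pr[(f(i),f(j))\text{ bad}\mid M^*]\ge 1$ you get some $(i,j)$ with conditional bad-probability $\ge 1/\Delta^2$, whence some specific bad pair $(\ell_1,\ell_2)$ has conditional probability only $\ge 1/\Delta^4$, unconditional $\mu_{i,j}(\ell_1,\ell_2)=\Omega(1/\Delta^6)$, and $\mu_i(\ell_1)\mu_j(\ell_2)\ge\mu_{i,j}(\ell_1,\ell_2)^2=\Omega(1/\Delta^{12})$ --- which misses the lemma's stated $1/\Delta^8$. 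Fixing the bad label pair first, as the paper does, is both simpler and quantitatively tighter.
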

\begin{proof}
	We follow the same strategy as in \cite[Lemma 6.4]{trulytight}. Since in $0$ rounds of communication all nodes have the same information, we can see any $0$-round algorithm as a probability assignment to each node configuration. That is, for each $\c_i \in \nodeconst$, the algorithm outputs $\c_i$ with probability $p_i$, such that $\sum p_i = 1$. Hence, by the pigeonhole principle, there exists some configuration $\bar{\c}$ that all nodes output with probability at least $\frac{1}{K \Delta^2}$ for some constant $K$. Since by assumption $\Pi$ is not $0$-rounds solvable, then there exists no node configuration such that, for all pairs of choices $\ell_1 \s \ell_2$ over such configuration, $\ell_1 \s \ell_2$ is in $\edgeconst$, i.e., in the edge constraint of $\Pi$ (otherwise the problem would be $0$-round solvable in the port numbering model, by making all nodes output that configuration). Thus, there exist $2$ (possibly the same) labels $\ell_1$ and $\ell_2$ appearing in $\bar{\c}$ such that $\ell_1\s \ell_2$ is not in $\edgeconst$. Moreover, conditioned on the fact that the node is outputting the configuration $\bar{\c}$, since $\ell_1$ and $\ell_2$ appear at least once in such configuration, there must exist some ports $i$ and $j$ where $\ell_1$ and $\ell_2$ are written with probability at least $\frac{1}{\Delta}$. Thus, two neighboring nodes connected through ports $i$ and $j$ will both output configuration $\bar{\c}$ and produce the invalid edge configuration $\ell_1 \s \ell_2$ with probability at least $\frac{1}{(K \Delta^3)^2}$, that for large enough $\Delta$ is at least $1 / \Delta^8$.
\end{proof}

We now bound the failure probability of any algorithm solving $\Pi_0$ in strictly less than $t$ rounds.
\begin{lemma}\label{lem:randomizedpnlb}
	Let $\Pi_0 \rightarrow \Pi_1 \rightarrow \dots \rightarrow \Pi_t$ be a sequence of problems such that $\Pi_{i+1}$ can be solved in $0$ rounds given a solution for $\rere(\re(\Pi_i))$, the number of labels of each problem $\Pi_i$ is upper bounded by $O(\Delta^2)$, and the node constraint of each problem $\Pi_i$ contains $O(\Delta^2)$ allowed configurations. Let $t$ be a number satisfying that, for all $t' < t$, $\Pi_{t'}$ is not $0$-round solvable in the port numbering model using deterministic algorithms. Any algorithm for $\Pi_0$ running in strictly less than $t$ rounds must fail with probability at least $\frac{1}{2^{\Delta^{4t}}}$, if $\Delta$ is large enough.
\end{lemma}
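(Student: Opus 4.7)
The plan is to combine Lemma~\ref{lem:multiplesteps} with Lemma~\ref{lem:zerorounds} by a straightforward chain of inequalities: any fast algorithm for $\Pi_0$ yields, by Lemma~\ref{lem:multiplesteps}, a $0$-round algorithm for some $\Pi_{t'}$ with $t' < t$, and Lemma~\ref{lem:zerorounds} lower bounds the failure probability of such a $0$-round algorithm. Inverting this chain gives the claimed bound on the failure probability of the original algorithm.

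Concretely, I would argue by contradiction: suppose that some randomized algorithm $A$ solves $\Pi_0$ in $t' < t$ rounds with local failure probability at most $p < 1/2^{\Delta^{4t}}$. By the hypotheses of the lemma, the sequence $\Pi_0 \to \Pi_1 \to \dots \to \Pi_t$ meets the label-size assumption of Lemma~\ref{lem:multiplesteps}, so applying that lemma with $j = t'$ yields a $0$-round algorithm $A'$ for $\Pi_{t'}$ with local failure probability at most
\[
q \;\le\; 2^{K\Delta^2}\, p^{1/(\Delta+1)^{2t'}}
\]
for some constant $K$, provided $\Delta$ is large enough. On the other hand, the assumption $t' < t$ ensures that $\Pi_{t'}$ is not $0$-round solvable in the deterministic PN model, and its node constraint has at most $O(\Delta^2)$ configurations, so Lemma~\ref{lem:zerorounds} gives $q \ge 1/\Delta^8$. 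Combining these two bounds on $q$ yields
\[
p \;\ge\; \bigl(\Delta^{8}\, 2^{K\Delta^2}\bigr)^{-(\Delta+1)^{2t'}} \;\ge\; 2^{-(K+8)\Delta^2 (\Delta+1)^{2t'}}.
\]

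It then remains to check that the exponent $(K+8)\Delta^2 (\Delta+1)^{2t'}$ is at most $\Delta^{4t}$ for large enough $\Delta$. Since $t' \le t-1$, we have $(\Delta+1)^{2t'} \le (2\Delta)^{2(t-1)} = 2^{2(t-1)} \Delta^{2(t-1)}$, so $(K+8)\Delta^2 (\Delta+1)^{2t'} \le (K+8)\,2^{2(t-1)}\,\Delta^{2t}$, which is $\ll \Delta^{4t}$ for $\Delta$ sufficiently large. This contradicts $p < 1/2^{\Delta^{4t}}$ and completes the proof.

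The only real subtlety is ensuring the hypotheses of Lemma~\ref{lem:multiplesteps} and Lemma~\ref{lem:zerorounds} hold along the entire sequence, not just at the endpoints; both hold here because the label-count and node-configuration bounds are imposed uniformly over all $\Pi_i$ in the statement. Everything else is bookkeeping on the double-exponential bound, which is comfortable since $\Delta^{4t}$ leaves plenty of slack against $\Delta^{2t}$-type upper bounds. I do not expect any conceptual obstacle here.
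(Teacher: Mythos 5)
Your proposal is correct and follows the same route as the paper's proof: apply Lemma~\ref{lem:multiplesteps} to the hypothetical fast algorithm, bound the resulting $0$-round failure probability from below via Lemma~\ref{lem:zerorounds}, and solve for $p$. The only differences are cosmetic — you phrase it as a contradiction and use the estimate $(\Delta+1)^{2t'} \le (2\Delta)^{2(t-1)}$ where the paper uses the slightly slicker $(\Delta+1)^{2t'} \le \Delta^{4t'}$ (valid since $\Delta + 1 \le \Delta^2$ for $\Delta \ge 2$) — but the bookkeeping lands in the same place.
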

\begin{proof}
	By applying Lemma \ref{lem:multiplesteps} we get that an algorithm solving $\Pi_0$ in $t'<t$ rounds with local failure probability at most $p$ implies an algorithm solving $\Pi_{t'}$ in $0$ rounds with local failure probability at most $2^{K\Delta^2} p^{1/(\Delta+1)^{2t'}}$. Then, since  $\Pi_{t'}$ is not $0$-round solvable in the port numbering model using deterministic algorithms by assumption, by applying Lemma \ref{lem:zerorounds} we get the following:
	\[
		 2^{K\Delta^2} p^{1/(\Delta+1)^{2t'}} \ge \frac{1}{\Delta^8},
	\]
	that for large enough $\Delta$ implies the following:
	\[
		p \ge   \frac{1}{2^{(K\Delta^2 + 8\log\Delta)(\Delta+1)^{2t'}}} \ge \frac{1}{2^{\Delta^3(\Delta+1)^{2t'}}}  \ge \frac{1}{2^{\Delta^{4t'+3}}} \ge \frac{1}{2^{\Delta^{4t}}}
	\]
\end{proof}

\subsection{Making $\Delta$ as large as possible}
By Lemma \ref{lem:lbfamily} we know that, for $t = 1/2 \cdot \log \Delta / (\beta \log \log \Delta)$, if $\beta \le t$, problem $\Pi_{\Delta,\beta}([1,0,\ldots,0],0)$ is not $t$-round solvable in the port numbering model using deterministic algorithms, and each problem of the family used to prove this result uses $O(\Delta)$ labels, and $O(\Delta)$ node configurations. Hence, by combining Lemma \ref{lem:lbfamily} and Lemma \ref{lem:randomizedpnlb} we get that any randomized algorithm for problem $\Pi_{\Delta,\beta}([1,0,\ldots,0],0)$ running in $t = o(\frac{\log \Delta}{\beta \log \log \Delta})$ rounds must fail with probability at least $\frac{1}{2^{\Delta^{4t}}}$, in some $\Delta$-regular neighborhood of girth at least $2t+2$ (condition required by the round elimination theorem, see Theorem \ref{thm:sebastien}), in the randomized port numbering model, provided $\beta \le t$. 

 We are now ready to lift this bound to the LOCAL model. The main idea is that, since in the randomized port numbering model nodes can generate unique IDs with high probability, the existence of an algorithm for the randomized LOCAL model with some failure probability directly implies the existence of an algorithm for the randomized port numbering model with roughly the same failure probability (up to a factor of $1-1/n^c$ for an arbitrary constant $c$, that is the success probability of the randomized ID generation process). Hence, a lower bound for the randomized port numbering model directly applies to the LOCAL model as well. We first prove Theorem \ref{thm:randlb}, and then we obtain Corollary \ref{cor:randlb} by taking a value of $\Delta$, as a function of $n$, that maximizes the result of the theorem.
 \randlb*
 \begin{proof}
 	As mentioned above, any randomized algorithm solving $\Pi_{\Delta,\beta}([1,0,\ldots,0],0)$ in the LOCAL model running in $t = o(\frac{\log \Delta}{\beta \log \log \Delta})$ rounds must fail with probability at least $\frac{1}{2^{\Delta^{4t}}}$, in some $\Delta$-regular neighborhood of girth at least $2t+2$, provided that  $\beta \le t$. Later we will show that, for the choice of parameters stated in the theorem, we have $t-\beta = \Omega(t)$, which implies that $\beta \le t$. Hence, to prove the theorem, we need to show how large we can make $t$ such that:
 	\begin{itemize}
 		\item the failure probability is still too large, that is, $\frac{1}{2^{\Delta^{4t}}} > 1/n$, and
 		\item there exists a $\Delta$-regular graph of girth at least $2t+2$.
 	\end{itemize}
  	The first requirement is satisfied (for all sufficiently large $n$) if $t = o(\log_{\Delta} \log n)$. The second requirement is satisfied if $t = o(\log_\Delta n)$, since from extremal graph theory we know that, for infinite values of $n$, there exist $\Delta$-regular graphs of girth $\Omega(\log_\Delta n)$ (see, e.g., \cite{highgirth}). By combining the obtained lower bounds, we get that any randomized algorithm that succeeds w.h.p.\ requires at least  $t = \Omega\left(\min \left\{  \frac{\log \Delta}{\beta \log \log \Delta}  ,  \log_\Delta \log n \right\} \right)$ rounds.
  	
  	Unfortunately, as discussed in Section \ref{sec:problems}, $\Pi_{\Delta,\beta}([1,0,\ldots,0],0)$ may be (at most) $\beta$ rounds harder than the $(2,\beta)$-ruling set problem, since $\Pi_{\Delta,\beta}([1,0,\ldots,0],0)$ requires to output some additional pointers that make the solution locally checkable. Hence, a $t$-round lower bound for $\Pi_{\Delta,\beta}([1,0,\ldots,0],0)$ only implies a lower bound of $t-\beta$ for $(2,\beta)$-ruling sets. We now prove that, for the range of values of $\beta$ stated in the theorem, the lower bound obtained for $\Pi_{\Delta,\beta}([1,0,\ldots,0],0)$ also applies to $(2,\beta)$-ruling sets, which implies the theorem. For this purpose, we prove that $t - \beta = \Omega(t)$, by considering two cases.
  	\begin{itemize}
  		  \item $\sqrt{\frac{\log \Delta}{\log \log \Delta}} \le  \log_\Delta \log n$. In this case, $\beta \le c \cdot \sqrt{\frac{\log \Delta}{\log \log \Delta}}$, and hence $\frac{\log \Delta}{\beta \log \log \Delta} \ge \frac{1}{c} \cdot\sqrt{\frac{\log \Delta}{\log \log \Delta}}$. We obtain that $t-\beta = \Omega\left( \min\left( \frac{\log \Delta}{\beta \log \log \Delta }-c\cdot\sqrt{\frac{\log \Delta}{\log \log \Delta}}, \log_\Delta \log n -  c \cdot \sqrt{\frac{\log \Delta}{\log \log \Delta}} \right) \right)$, which, for small enough $c$, is $\Omega(t)$.
  		\item $\log_\Delta \log n \le \sqrt{\frac{\log \Delta}{\log \log \Delta}} $. In this case, $\beta \le c \cdot \log_\Delta \log n $. \\We obtain that $t-\beta = \Omega\left( \min\left( \frac{\log \Delta}{\beta \log \log \Delta }-c \cdot\log_\Delta \log n, (1-c) \cdot\log_\Delta \log n \right) \right)$, which, for small enough $c$, is $\Omega(t)$.

  	\end{itemize}
 \end{proof}

 We are now ready to prove Corollary \ref{cor:randlb}.
 \randcor*
 \begin{proof}
 	We apply Theorem \ref{thm:randlb} where we set $\Delta = 2^{\sqrt{\beta \log \log n \log \log \log n}}$. We obtain
 	
 	\[
 	\frac{\log \Delta}{\beta \log \log \Delta} = \frac{\sqrt{\beta \log \log n \log \log \log n}}{\beta \log \sqrt{\beta \log \log n \log \log \log n}} = \Omega\left(  \sqrt  \frac{ \log \log n }{\beta \log \log \log n  }   \right) \enspace, \text{ and}
 	\]
 	\[
 	\log_\Delta \log n = \frac{\log \log n}{ \sqrt{\beta \log \log n \log \log \log n} } =   \sqrt  \frac{ \log \log n }{\beta \log \log \log n  } \enspace.
 	\]
 	Hence, we obtain a lower bound of $t = \Omega\left(\min \left\{\frac{\log \Delta}{\beta \log \log \Delta} , \log_\Delta \log n  \right\} \right) =  \Omega\left(  \sqrt  \frac{ \log \log n }{\beta \log \log \log n  }   \right)$.
Note that by choosing $c$ small enough (depending on the $c$ in Theorem~\ref{thm:randlb}), we can ensure that any $\beta \le c \cdot \sqrt[3]{ \frac{\log \log n}{\log \log \log n}}$ satisfies the condition for $\beta$ given in Theorem~\ref{thm:randlb}.
 \end{proof}

\subsection{Stronger deterministic lower bound}
We will now prove that, assuming that there is a fast deterministic algorithm, we can construct an even faster randomized algorithm, contradicting the results proved in Theorem \ref{thm:randlb}.

\detlb*

\begin{proof}	
	 Assume for a contradiction that an algorithm violating the claimed bound exists and let $t(n)$ be its runtime on $n$-node graphs. We use this algorithm to construct an algorithm running within a time bound that would violate Theorem \ref{thm:randlb}. The idea is to execute such an algorithm by lying about the size of the network, by claiming that it is of size $N = \log n$.
This immediately yields an algorithm with a runtime of $O(t(N)) = O(t(\log n))$, which contradicts the bound given in Theorem~\ref{thm:randlb}.
This algorithm is even deterministic, but, in the context of Theorem~\ref{thm:randlb}, can equivalently be seen as a randomized algorithm with failure probability $0$.

In order to use this idea, we need to make sure that the algorithm does not detect such a lie. Using standard arguments (see e.g.\ \cite[Theorem 4.5]{chang16exponential}), one can show that the following conditions are sufficient for this purpose:
	 \begin{itemize}
	 	\item We need to compute a new ID assignment, with IDs of $O(\log \log n)$ bits, in $O(t(N))$ rounds. This ensures that the algorithm does not detect that IDs are too large. It is sufficient that these IDs are unique in each $(t(N)+1)$-radius neighborhood.
	 	\item Each $t(N)$-radius neighborhood must contain at most $N$ nodes, so that the algorithm does not detect that there are more nodes than the claimed amount.
	 \end{itemize}
	 
	 In order to compute the new ID assignment we can compute a $k = (\Delta^{2(2t(N)+2)}( \log \log N + \log(\Delta^{2t(N)+2})))$-coloring of $G^{2t(N)+2}$, and this can be done in $O(t(N))$ rounds using Linial's coloring algorithm~\cite[Corollary 4.1]{Linial1992}.
Since $t(N) = o(\log_\Delta N)$, we see that each $t(N)$-radius neighborhood contains $O(\Delta^{t(N)}) = o(N)$ nodes and that $k =  o(N)$.
Hence, the supposed algorithm cannot detect the lie, which concludes the proof.
\end{proof}

Corollary \ref{cor:detlb} follows by applying Theorem \ref{thm:detlb} with $\Delta = 2^{\sqrt{\beta \log n \log \log n}}$.

\section{Conclusion and open problems}
In this work, we proved that the deterministic complexity of computing $(2,\beta)$-ruling sets is at least $\poly\log n$, unless $\beta$ is too large. Combined with existing $\poly\log n$ upper bounds, our results imply that the deterministic complexity of ruling sets lies in the $\poly\log n$ region. An interesting open question is how many $\log$ factors are required exactly.

Another open question concerns the techniques that we use: We first prove a lower bound for a constant-radius checkable version of the ruling set problem, and then transform this bound into a lower bound for the original problem, where we lose an additive $\beta$-factor. Hence, currently our technique is not capable to prove lower bounds for $(2,\beta)$-ruling sets where $\beta$ is so large that the problem can be solved in $o(\beta)$ rounds. An open question is to get rid of this restriction. For example, can we show that finding $(2,\log^{1/3} n)$-ruling sets requires $\Omega(\log^{1/3} n / \log^{1/2} \log n)$ rounds with deterministic algorithms?

While we now have a good picture of the dependency of the complexity of $(2,\beta)$-ruling sets on $n$, the dependency on $\Delta$ is far less clear. Even in the case where a $(\Delta+1)$-vertex coloring is provided, the current best upper bound is $O(\beta \Delta^{1/\beta})$ rounds. Note that, for constant values of $\beta$, we get a complexity that is polynomial in $\Delta$, while the lower bound that we provide lies in the $\poly \log \Delta$ region. Hence, there is an exponential gap between the current upper and lower bounds, as a function of $\Delta$.  We know that, on general graphs, any algorithm that solves MIS in time $f(\Delta) + g(n)$, must either satisfy $f(\Delta) = \Omega(\Delta)$, or $g(n) = \Omega(\log \log n / \log \log \log n)$ for randomized algorithms and $g(n) = \Omega(\log n / \log \log n)$ for deterministic ones  \cite{Balliu2019}, and we think that a necessary step for really understanding the $\Delta$-dependency of $(2,\beta)$-ruling sets is to first prove an $\Omega(\Delta)$ lower bound for MIS on trees.

Finally, a number of interesting open questions revolve around our new technique of proving a lower bound via an upper bound. Can we characterize the problems that allow such an approach? What properties does an algorithm have to have to be well-representable as an upper bound sequence? Can the related technique of introducing a coloring component into non-coloring problems be successfully applied to other problems? We believe that finding answers to these and related questions will constitute an important step towards a better understanding of the round elimination technique. 

\section*{Acknowledgments}

We are very thankful to the anonymous reviewers of the conference version of this paper for their fruitful comments. We would also like to thank Mohsen Ghaffari, Fabian Kuhn, Yannic Maus, and Julian Portmann for helpful comments on related works.

\urlstyle{same}
\bibliographystyle{plainnat}
\bibliography{mis-lower-bound}

\end{document}